\crefname{figure}{Fig.}{Figs.}
\Crefname{figure}{Fig.}{Figs.}
\crefname{section}{Sec.}{Sec.}
\Crefname{section}{Sec.}{Sec.}
\crefname{table}{Table}{Tables}
\Crefname{table}{Table}{Tables}
\crefname{appendix}{Appendix}{Appendices}
\Crefname{appendix}{Appendix}{Appendices}
\crefname{equation}{Eq.}{Eqs.}
\Crefname{equation}{Eq.}{Eqs.}
\crefname{theorem}{Theorem}{Theorems}
\Crefname{theorem}{Theorem}{Theorems}
\crefname{definition}{Definition}{Definitions}
\crefname{prop}{Proposition}{Propositions}
\crefname{conjecture}{Conjecture}{Conjectures}
\crefname{corollary}{Corollary}{Corollaries}
\DeclareMathOperator{\im}{\text{im}}
\newcommand{\spmat}[1]{%
  \left(
  \let~=&
  \begin{smallmatrix}#1\end{smallmatrix}
  \right)
}}
\newtheorem{theorem}{Theorem}[section]
\newtheorem{prop}{Proposition}
\newtheorem{definition}{Definition}
\newtheorem*{conjecture*}{Conjecture}
\newtheorem{corollary}{Corollary}[theorem]
\newtheorem{lemma}[theorem]{Lemma}
\newcommand{\F}{\mathbb{F}_2}
\newcommand{\FG}{\mathbb{F}_2[G]}
\newcommand{\te}{\text{E}}
\newcommand{\ts}{\text{S}}
\newcommand{\ba}{\mathbf{A}}
\newcommand{\bb}{\mathbf{B}}
\newcommand{\bc}{\mathbf{C}}
\newcommand{\bi}{\mathbf{1}}
\title{Magic tricycles: Efficient magic state generation with finite block-length quantum LDPC codes}
\author{Varun Menon\textsuperscript{*}\textsuperscript{†}}
\author{J. Pablo Bonilla Ataides\textsuperscript{*}}
\author{Rohan Mehta}
\author{Andi Gu}
\author{Daniel Bochen Tan}
\author{Mikhail D. Lukin}
\affil{Department of Physics, Harvard University, Cambridge, MA 02138, USA}
\date{\today}
\begin{document}
\maketitle

\begingroup
\renewcommand\thefootnote{}\footnotetext{\textsuperscript{*}These authors contributed equally to this work.}
\renewcommand\thefootnote{}\footnotetext{\textsuperscript{†}varunmenon@g.harvard.edu}
\endgroup

\begin{abstract}

The preparation of high-fidelity non-Clifford (magic) states is an essential subroutine for universal quantum computation, but imposes substantial space-time overhead. Magic state factories based on high rate and distance quantum low-density parity check (LDPC) codes equipped with transversal non-Clifford gates can potentially reduce these overheads significantly, by circumventing the need for multiple rounds of distillation and by producing a large number of magic states in a single code-block. As a step towards realizing efficient, fault-tolerant magic state production, we introduce a class of finite block-length quantum LDPC codes which we name tricycle codes, generalizing the well-known bicycle codes to three homological dimensions. These codes can support constant-depth physical circuits that implement logical $CCZ$ gates between three code blocks. To construct these constant-depth $CCZ$ circuits, we develop new analytical and numerical techniques that apply to a broad class of three-dimensional homological and balanced product codes. We further show that tricycle codes enable single-shot state-preparation and error correction, leading to a highly efficient magic-state generation protocol. Numerical simulations of specific codes confirm robust performance under circuit-level noise, demonstrating a high circuit-noise threshold of $>0.5\%$. With modest post-selection, certain tricycle codes of block-lengths of only $50-100$ qubits are shown to achieve logical error-rates of $6\times 10^{-10}$ or lower. Finally, we construct optimal depth syndrome extraction circuits for tricycle codes and present a protocol for implementing them efficiently on a reconfigurable neutral atom platform.
\end{abstract}

\newpage
\tableofcontents

\newpage
\section{Introduction}
\subsection{Background}
Universal fault-tolerant quantum computation requires the implementation of both Clifford and non-Clifford operations on the logical qubits of a quantum error correcting code \cite{gottesman1998theory}. Logical Clifford gates can often be implemented \textit{transversally} in many codes -- that is, by a physical circuit with a depth that is constant in the size of the code. Transversal gate architectures are naturally fault tolerant as the spread of errors between physical qubits of the code is inherently bounded. However, a seminal result of Eastin and Knill proves that it is not possible to implement a universal gate set transversally in a quantum error correcting code \cite{eastin2009restrictions}. Since it is desirable to implement as many gates transversally as possible, many architectures for fault-tolerant quantum computation utilize codes with a transversal Clifford gate set, while non-Clifford gates, also known as \textit{magic gates}, are implemented by teleporting specially prepared non-Clifford resource states into the codespace. One class of protocols for the fault-toleration preparation of such high-fidelity resource states, known as magic state distillation (MSD), relies on quantum error correcting codes with a transversal magic gate to refine a large number of noisy input magic states into a smaller number of output states with improved fidelity. To achieve the target output fidelity, typical MSD schemes such as the 15-to-1 distillation (15-1-3) protocol \cite{bravyi2005universal} require multiple levels of repeated distillation, as well as concatenation with an inner code of sufficiently large distance that supports transversal Clifford gates, contributing significantly to the space-time overhead of the protocol. As such, in state-of-the-art fault-tolerant architectures, the overhead associated with magic state distillation has been estimated to dominate total qubit and gate counts -- for example, magic state generation dominates the latest resource estimates of factoring RSA integers with Shor's algorithm despite significant optimizations \cite{gidney2025factor, zhou2025resource}. 

Over the last decade, remarkable progress has been made towards significantly reducing the overhead associated with magic state distillation by designing various codes with transversal non-Clifford gates and favorable code parameters \cite{bravyi2012magic, krishna2019towards, hastings2018distillation, campbell2012magic}. In fact, it has very recently been shown that families of quantum codes with transversal non-Clifford gates and asymptotically constant rate and relative distance exist, leading in principle to constant spatial overhead MSD protocols \cite{nguyen2024good, wills2024constant}. However, these proposals come with two important caveats. First, the proposed codes in such low-overhead distillation schemes have high-weight parity checks with check weights that are typically extensive in the block length of the code. As such these codes may not be obviously fault-tolerant on their own in the presence of circuit-level noise as they require syndrome extraction circuits with depths that grow with the block-length, and thus may not support a finite threshold or exponential sub-threshold error suppression with bare-ancilla based syndrome extraction. Second, these schemes implicitly assume that Clifford operations are noiseless to execute the Clifford encoding and unencoding circuits of the codes \cite{bravyi2005universal} -- an assumption that is justified when concatenating the distillation code with  an inner code that has transversal Clifford operations such as the two-dimensional color code. Practically, the overhead of these protocols is therefore exaggerated by the requirement of a sufficiently high distance inner code so that the distillation protocol is not bottlenecked by the error-rate of Clifford operations. Moreover, the Clifford encoding and unencoding circuits can have depths scaling linearly in the number of physical qubits of the code, leading to significant temporal overhead of such protocols.

Quantum low-density parity-check (qLDPC) codes, which simultaneously admit sparse stabilizer checks and favorable rate and distance scaling, have been proposed as a pathway toward reducing the cost of quantum error correction. In the context of quantum memory, qLDPC codes with asymptotically constant rates and relative distances have recently been shown to exist \cite{panteleev2022asymptotically, leverrier2022quantum}, and several architectures for implementing qLDPC memories on superconducting qubit and neutral-atom architectures have been proposed \cite{bluvstein2025architectural, xu2024constant, xu2025fast, bravyi2024high}. However, the application of general qLDPC codes to magic state generation protocols has only recently begun to be explored, in part because of the difficulty of constructing codes that host transversal non-Clifford gates. Utilizing appropriate qLDPC codes could dramatically reduce the overhead of MSD protocols for a number of reasons. Codes with high relative distances ensure effective error suppression during distillation, while high rate codes allow multiple magic states to be distilled in parallel within a single block, improving throughput. In contrast to traditional MSD schemes that rely on small codes concatenated many times, LDPC codes can potentially achieve the same target fidelity in a single round, greatly reducing space-time overhead. While this is also true for the aforementioned non-LDPC constant-overhead and low-overhead MSD protocols, crucially, qLDPC codes enable these properties while maintaining fault-tolerance with constant-depth syndrome extraction circuits. Moreover, for qLDPC codes with strongly-transversal non-Clifford gates (i.e a constant depth physical circuit of non-Clifford gates without a Clifford correction), there is no need for concatenation with an inner code with transversal Clifford gates. Constructions of practical LDPC magic state factories are thus highly desirable for fault-tolerant quantum computing architectures.

Motivated by these considerations, this work introduces a class of finite block-length, high-rate and distance quantum LDPC codes with a transversal logical $CCZ$ circuit action for magic state distillation, which we call `tricycle codes'. These codes are constructed as the three-dimensional \textit{balanced product} \cite{breuckmann2021balanced}, a particular generalization of the homological product, of classical binary linear codes over Abelian group algebras \cite{huffman2021concise}. This construction extends the two-block group algebra codes of   \cite{kovalev2013quantum, lin2024quantum} (also known as bicycle codes \cite{panteleev2021degenerate}) into three dimensions, allowing in principle for transversal gates from the third level of the Clifford hierarchy \cite{gottesman1999demonstrating}. We may thus also refer to the tricycle codes as `three-block Abelian group-algebra codes'. We highlight that for the task of magic state distillation, high-rate and high-distance finite block-length codes are sufficient for essentially any large-scale quantum computation. For example, codes with distances large enough to achieve a logical error rate of $~10^{-12}$ at attainable physical error rates are expected to be sufficient to produce magic states to execute large-scale quantum algorithms such as Shor's algorithm \cite{shor1994algorithms} fault-tolerantly. 

    The theoretical formalism we utilize for constructing transversal non-Clifford gates in three-dimensional product codes comes from endowing the codes with a \textit{cup-product}, an operation from algebraic topology and homological algebra, related to the triple-intersection of logical operators of the code, which we discuss further in \cref{sec::transversal_CCZ} and \cref{sec::cup}. At a high level, this formalism allows general product quantum CSS codes to be equipped with the necessary structure to host transversal gates ascending the Clifford hierarchy. In fact, the transversal non-Clifford gates of three-dimensional color codes can also be interpreted in terms of cup-products on the associated three-dimensional cellular complexes \cite{bombin2007topological, kubica2015unfolding}. Recent work by Breuckmann et al. introduced a general formalism to endow the cochain complexes associated with homological product and balanced product quantum codes with cup-product operations and classified the associated gates they give rise to \cite{breuckmann2024cups}. In this work, we first present a  modification of this formalism, leading to a set of conditions that high rate and distance tricycle codes with non-trivial $CCZ$ action must satisfy (see \cref{sec::cup}). The conditions resulting from this modified cup-product construction  are essential to finding tricycle codes with both high rates and relative distances. We then study the transversal non-Clifford circuits hosted by codes which satisfy these conditions in detail. More generally, we also introduce a new numerical method that is inspired by the cup-product construction for finding short-depth code-space preserving non-Clifford circuits on iterated balanced-product quantum codes, which we also apply to certain tricycle codes. We present several concrete examples of such high rate and distance tricycle codes, study their performance under realistic circuit level noise, and explore their application to practical magic state distillation protocols. 
    
\subsection{Summary of results}

The tricycle codes we construct achieve favorable parameters at relatively small block lengths, while hosting constant-depth circuits that implement a logical $CCZ$ action. We introduce several classes of tricycle codes with check weights $w_x \leq 12$, $w_z\leq 8$, where $w_x$ and $w_z$ denote the weights of the $X$ and $Z$ checks of the code respectively.  We use the notation $[[N,K,D]]$ to report the block-length $N$, number of encoded logical qubits $K$, and minimum distance $D$ of quantum codes. Some examples of codes we find have parameters $[[48,6,4]]$, $[[108,21,6]]$, $[[270,24,8]]$, $[[270,12,12]]$, $[[480,12,\leq 17]]$, exhibiting highly competitive distances and rates that are compatible with a transversal non-Clifford action. A more complete list of codes along with the depths of the non-Clifford transversal circuits they host can be found in \cref{table::all_codes}. 

We show that tricycle codes host constant depth physical circuits that preserve the code-space, consisting solely of $CCZ$ gates between three code blocks. The logical action of the $CCZ$ circuits depends on a choice of a representative logical operator basis in each code-block, but is generically a circuit of logical $CCZ$ gates. When such a circuit acts on the $\overline{\ket{+,+,+}}^{\otimes K}$ logical state of three code-blocks it produces a high magic state known as a \textit{hypergraph} magic state \cite{zhu2025topological, chen2024magic}. Such a hypergraph magic state embeds $K_{CCZ}\leq K$ many disjoint $\overline{CCZ} \cdot \overline{\ket{+,+,+}}$ magic states which can be extracted using gate-teleportation techniques. The physical $CCZ$ circuit, factors determining their depths, and the logical action they induce for particular codes are discussed in \cref{sec::transversal_CCZ} and \cref{sec::cup}.

We further show that the tricycle codes are single-shot \cite{bombin2015single} in the $Z$- measurement basis. We show in \cref{sec::single_shot} that this property enables a distillation protocol that requires only a constant number of rounds of syndrome extraction during error correction cycles instead of $O(d)$ rounds, where $d$ is the code distance. Moreover, we present numerical evidence that tricycle codes exhibit the stronger notion of single-shot state-preparation, allowing the initial logical state of a magic state preparation circuit to be prepared in constant depth. To the best of our knowledge, this is the first example of such a single-shot magic state distillation protocol. 

We analyze the performance of select tricycle codes under a realistic circuit level noise model. By comparing the performance of codes across a range of block-lengths, we are able to ascertain a threshold of $0.5\%$ under circuit-level noise with an MLE decoder for a certain family of tricycle codes. We also study how post-selection and full error-detection can improve the performance of these codes. To do so, we utilize a recently developed efficient cluster-based post-selection method for quantum LDPC codes \cite{lee2025efficient} combined with Belief-Propagation + Localized Statistics Decoding (BP+LSD)~\cite{hillmann2025localized}. With full error detection, we show that even the smallest tricycle code we study in detail (with parameters $[[48,6,4]]$) can achieve logical error rates of approximately $6\times 10^{-10}$ at an acceptance fraction of roughly $30\%$. For a larger code, we estimate logical error rates $<10^{-12}$ with full error detection at a $\approx 9\%$ acceptance fraction, compatible with full-scale fault tolerant quantum computation. Note that these logical error rates are achieved in a numerical simulation of the code as a memory under multiple rounds of error correction (with post-selection). In particular, they do not account for the error introduced by the $CCZ$ circuit when determining the fidelity of the final magic-state. However, since the $CCZ$ circuits are relatively short (depth $8$ for the codes discussed above), we anticipate that these estimates from the logical memory performance are a close proxy to the fidelity of the target magic-state -- we discuss this perspective in more detail in \cref{sec::noise_sim}.

We note that although the codes we present are finite-block length codes, the construction allows us to define a family of codes by choosing a single high-performing tricycle code from a given block-length. Although such a code family is unlikely to be asymptotically good, we expect that a larger block-length code with a larger distance can always be found -- a claim that is supported by the distance lower bound we prove in \cref{sec::balanced}. Defining a threshold with respect to such a constructed code family allows one to make  guarantees of sub-threshold noise suppression by selecting codes of arbitrarily large block-lengths with larger distances. 

We then show how to construct compact syndrome extraction circuits of optimal depth for all tricycle codes. Finally, we present a concrete protocol for efficiently implementing the preparation and syndrome extraction of these codes on a reconfigurable neutral-atom array platform -- a video illustrating the atom-moving scheme for syndrome extraction of tricycle codes is provided in the ancillary files that accompany this manuscript.

We also clarify some subtle points about the application of the cup-product gate formalism of \cite{breuckmann2024cups}. In particular, the success of this formalism for producing transversal non-Clifford circuits on tricycle codes with high rate and distance requires a generalization of the conditions presented in section $5$ of \cite{breuckmann2024cups}. These resulting conditions will likely also be useful for other homological product and balanced product code families. We discuss these points on the theory of cup-product based gates in \cref{sec::cup}. In \cref{sec::numerical_ccz}, we also present a complementary numerical method that generalizes the cup-product construction for finding code-space preserving $CCZ$ circuits on three-dimensional balanced-product codes.

Our work serves as a step toward a new class of distillation protocols that leverage the structural properties of quantum LDPC codes to reduce resource overhead in universal fault-tolerant computation. Such magic-state generation schemes are naturally suited to fault-tolerant quantum computing architectures that utilize LDPC quantum memories along with universal adapters for magic state injection, such as in Ref.~\cite{xu2025fast, yoder2025tour} or using efficient code-switching protocols with lower-dimensional product codes \cite{li2025transversal,tan2025single,golowich2025constant}. 

\subsection{Related work}

Several recent works introduce binary quantum LDPC code families with asymptotically high rates and relative distances with transversal logical $CCZ$ action. Golowich and Lin \cite{golowich2024quantum} constructed a family of codes based on homological products of classical  Sipser-Spielman expander codes \cite{sipser2002expander} with local Reed-Muller codes \cite{reed1953class, muller1954application} with near-constant rate and relative distance that supports transversal logical non-Clifford gates. These codes however have check-weights that are not constant but grow logarithmically in the block-length. Lin also introduced a general formalism for constructing LDPC quantum code families with transversal non-Clifford gates based on algebraic sheaves \cite{lin2024transversal}. Similarly, Zhu introduced a family of topological codes with constant rate and polynomial distance with transversal $CCZ$ gates  \cite{zhu2025topological, zhu2025transversal}. Quantum rainbow codes were introduced as generalizations of color-codes with transversal non-Clifford gates to higher-dimensional simplicial complexes \cite{scruby2024quantum}, achieving asymptotically constant rate and logarithmic distance. Finally, Breuckmann et al. \cite{breuckmann2024cups} discuss how to construct constant rate and polynomial distance families with transversal gates that ascend the Clifford hierarchy using iterated homological and balanced products of chain complexes and the cup-product gate formalism they develop. 

Adding to these results, this work focuses on a particular class of high-performance finite block-length codes. It is unclear precisely how large the smallest elements of the code families in the aforementioned works are, but constructions based on three-fold homological products of classical codes likely involve at-least several thousand and potentially tens of thousands of physical qubits for the smallest attainable block-lengths with good parameters, due to the cubic growth of the size of the quantum code with respect to the classical codes. The significantly smaller block-lengths of the tricycle codes presented in this work is one of the primary reasons we chose to study them. A direct comparison of the overhead and performance of the LDPC codes from the above constructions with each other and the tricycle codes we present would be an interesting direction for future work.

\section{Main results}
\subsection{Tricycle codes}\label{sec::tricycle_main}

Throughout this work, we denote the binary field by $\mathbb{F}_2 \equiv \{0,1\}$ where arithmetic is modulo $2$. The tricycle codes are quantum CSS codes \cite{calderbank1996good, steane1996multiple} defined by the binary block parity check matrices 

\begin{equation} \label{eq::parity_check_mats}
H_X = \begin{bmatrix}
\mathbf{A}^T & \mathbf{B}^T & \mathbf{C}^T 
\end{bmatrix} \in \mathbb{F}_2^{n_G\times 3n_G}
\quad \quad
H_Z = \begin{bmatrix}
\mathbf{C} & \mathbf{0} & \mathbf{A} \\
\mathbf{0} & \mathbf{C} & \mathbf{B} \\
\mathbf{B} & \mathbf{A} & \mathbf{0}
\end{bmatrix}\in\mathbb{F}_2^{3n_G\times 3n_G}
\end{equation}

where $\mathbf{A}=\sum_{i=1}^{w_a}\mathbf{A_i} \in \mathbb{F}_2^{n_G\times n_G}$, $\mathbf{B}=\sum_{i=1}^{w_b}\mathbf{B_i} \in \mathbb{F}_2^{n_G\times n_G}$, $\mathbf{C}=\sum_{i=1}^{w_c}\mathbf{C_i} \in \mathbb{F}_2^{n_G\times n_G}$ for constants $w_a, w_b, w_c$, and $\mathbf{A_i}, \mathbf{B_i}, \mathbf{C_i}$ are permutation matrices -- i.e matrices with a single 1 in every row and column. The parameter $n_G$ is the linear size of each block and is related to the size of a finite group (hence the subscript $G$) -- the details of this connection are discussed in \cref{sec::balanced}. These parity check matrices thus define codes on $N = 3n_G$ qubits. Modulo two arithmetic implies immediately that the CSS orthogonality condition $H_Z H_X^T = 0$ is satisfied. Moreover, the matrices $\mathbf{A}$, $\mathbf{B}$, $\mathbf{C}$ commute pair-wise. There are $n_G$ many $X$-checks of weight $w_a+w_b+w_c$ and $3n_G$ many $Z$-checks which can be partitioned into three sets of $n_G$ checks each with weights $w_c + w_a$, $w_c + w_b$, and $w_b + w_a$ respectively. By choosing $w_a, w_b, w_c \leq w$ for a constant $w$, the resulting codes are $(\leq w, \leq 3w)$-LDPC codes in the $X$ basis and $(\leq 2w, \leq 2w)$-LDPC in the $Z$ basis, where $(p,q)-$LDPC means that every qubit is involved in at most $p$ checks and each check involves at most $q$ qubits. Examples of particular tricycle codes in \cref{table::all_codes} demonstrate that the checks are typically highly redundant -- often leading to favorable code parameters and ease of decoding -- since the total number of checks $4n_G \gg N-K = 3n_G-K$. This observation is encapsulated by the fact that tricycle codes  possess \textit{metachecks} in the $Z$ basis -- these are relations between the $Z$ checks encoded in a matrix $H_{meta}$ such that $H_{meta}\cdot H_z = 0$. We discuss the metacheck structure and its implications in \cref{sec::single_shot}.

 To ensure practical feasibility of implementing these codes in early fault-tolerant architectures below pseudo-threshold error rates, we restrict our attention to tricycle codes with a maximum check weight at most $12$. More specifically, we consider three types of tricycle codes. The first is constructed with $w_a=4$, $w_b=w_c=2$, producing codes with weight $8$ $X$-checks, and $Z$-checks of weight $4$ and $6$ -- we call these codes $\mathbf{4-2-2}$ codes, referring to the weights $w_a,w_b,w_c$. The second class of codes we consider has $w_a=w_b=4$, $w_c=2$, and therefore has $X$-checks of weight $10$ and $Z$-checks of weight $8$ and $6$ -- we refer to these as $\mathbf{4-4-2}$ codes. Finally, we consider codes with $w_a=w_b=w_c=4$, for which the $X$-checks have weight $12$ and the $Z$-checks have weight $8$ -- we call these $\mathbf{4-4-4}$ codes. One may similarly consider $\mathbf{2-2-2}$ codes with weight $6$ $X-$checks and weight $4$ $Z-$checks, but we empirically found that the rates and distances of such codes with a non-trivial transversal $CCZ$ action are not as favorable. In particular, we could not find any such codes with $K>3$ encoded logical qubits and distance $D>7$ -- such $2-2-2$ codes with $K=3$ encoded logical qubits and distance $D\leq 7$ have been studied in other recent works \cite{jacob2025singleshot, li2025transversal}. 
 
 The reason we consider $w_a,w_b,w_c$ to be $2$ or $4$ is that the specific conditions for tricycle codes to host transversal $CCZ$ circuits (see \cref{sec::cup}) cannot be satisfied when any of $w_{a,b,c} = 3$, and $w_{a,b,c}=1$ leads to a trivial code. One may consider $w_{a,b,c} > 4$ to search for codes with even more favorable parameters than the ones we present here, but we empirically found $6-2-2$ codes to have parameters no better than $4-2-2$ and $4-4-2$ codes, while going to even higher check-weight will likely lead to poor thresholds.

\begin{table}[h!]
\centering
\begin{tabular}{ccccccc} 
\toprule

$\mathbf{N}$ & $\mathbf{K}$ & $\mathbf{D_X}$ & $\mathbf{D_Z}$ & \textbf{Type} & $\mathbf{CCZ}$  \textbf{depth} & $\mathbf{CCZ}$ \textbf{method}\\
\midrule

$48$ & $6$ & $8$ & $4$ & $4-2-2$ & $8$ & STCP\\
$84$ & $6$ & $12$ & $5$ & $4-2-2$ & $8$ & STCP  \\
$108$ & $6$ & $12$ & $6$ & $4-2-2$ & $8$ & STCP \\
$240$ & $6$ & $\leq 22$ & $8$ & $4-2-2$ & $8$ & STCP \\
$480$ & $6$ & $\leq 36$ & $10$ & $4-2-2$ & $8$ & STCP \\

$108$ & $12$ & $11$ & $4$ & $4-4-2$ & $16$ & STCP\\
$180$ & $12$ & $15$ & $6$ & $4-4-2$ & $32$ & STCP \\

$108$ & $15$ & $12$ & $6$ & $4-4-4$ & $96$ & NLR\\
$270$ & $24$ & $15$ & $8$ & $4-4-4$ & $96$ & NLR\\

$324$ & $12$ & $\leq 32$ & $\leq 12$ & $4-4-4$ & $128$ & STCP \\
$480$ & $15$ & $\leq 48$ & $\leq 14$ & $4-4-4$ & $128$ & STCP\\

\bottomrule
\end{tabular}
\caption{\textbf{Examples of tricycle codes and their parameters for various block-lengths}. $N$ is the block-length, $K$ the number of encoded logical qubits, $D_X$ and $D_Z$ are the minimum weights of logical $X$-type and $Z$-type operators respectively. The distances of codes are found exactly by using either a SAT-solver to find the shortest error \cite{gidney2021stim} or a mixed integer program \cite{landahl2011fault} when feasible. Otherwise, distances are estimated -- sometimes, with strong statistical guarantees, in which case we report the distance as exact\protect\footnotemark  -- using the low-weight biased Monte-Carlo sampling method discussed in \cref{sec::distance_finding} with 100 million shots. The construction of these specific codes is described in \cref{sec::balanced} and \cref{table::code_polys}. The column named `Type' denotes the class of tricycle codes, labeled by the weights $w_a-w_b-w_c$ from \cref{eq::parity_check_mats}. The depths of the code-space preserving $CCZ$ circuits with non-trivial logical action are noted for each code. The final column named `$CCZ$ Method' indicates whether the code and non-trivial $CCZ$ circuit was constructed using the symmetric triple cup-product method (STCP) -- see \cref{sec::cup} -- or using the numerical Leibniz rule method (NLR) -- see \cref{sec::numerical_ccz}.}
\label{table::all_codes}
\end{table} 

The codes in \cref{table::all_codes} have imbalanced distances with $d_Z\leq d_X$. In \cref{sec::balanced}, we prove that this is the case for all tricycle codes. The imbalanced distances of these codes is a potentially useful feature for experimental platforms which are typically dominated by noise biased towards one basis. A detailed exploration of how noise bias can boost the experimental performance of tricycle codes is left for future work. 
\footnotetext{Specifically, the method we use returns a confidence interval on the distance conditioned on certain assumptions on the distribution of minimum-weight code-words. These assumptions can in turn be hypothesis tested using the empirical distribution of low-weight code-words found from sampling. We report the distance as exact if the confidence is $>99.9\%$ and if the $p$-value associated with the hypothesis test is $>10\%$. Details of the method are in \cref{sec::distance_finding}.}

Overall, the high distances of tricycle codes combined with the transversal $CCZ$ circuits they host (which we discuss in \cref{sec::transversal_CCZ}) make them ideal candidates for magic state factories for various target logical error rates. While the rates of these codes are lower than their two-dimensional counterpart bicycle codes (see the codes in Ref.~\cite{bravyi2024high} for example), they are much higher than the rates of three dimensional binary homological product codes of comparable block-length, including the $3D$ color code.

For the remainder of this work, we primarily focus on select $4-2-2$ type tricycle codes that are practically relevant due to their lower check-weights and depth-$8$ $CCZ$ circuits. The $4-4-4$ tricycle codes generally exhibit higher rates and distances, but this comes at the price of higher check-weights and a lower circuit-noise threshold of $\approx 0.4\%$, as well as significantly deeper non-trivial $CCZ$ circuits on average -- additional details on the $4-4-4$ codes
are presented in \cref{sec::cup}, \cref{sec::numerical_ccz}, and \cref{sec::numerics-methods}. We empirically found that the $4-4-2$ type codes did not offer substantial improvements in rates, distances, or thresholds compared to the $4-2-2$ and $4-4-4$ type codes to the extent of our code search -- two examples of $4-4-2$ codes with non-trivial $CCZ$ are nonetheless presented in \cref{table::all_codes}.

\subsection{Transversal $CCZ$ circuits}\label{sec::transversal_CCZ} 

Tricycle codes support constant-depth, code-space preserving circuits built from physical $CCZ$ gates acting across three code blocks. Each block contains $3n_G$ qubits, partitioned into three equal-size sectors labeled $I$, $II$, and $III$, each with $n_G$ qubits. We label each qubit as $q_i^j$, where $i \in \{1,2,3\}$ denotes the code block and $j \in \{I,II,III\}$ the sector. Within each sector, qubits are indexed arbitrarily but consistently across blocks.

The structure of a transversal $CCZ$ circuit is encoded in a binary function
\begin{equation}
    f_{CCZ} : Q \times Q \times Q \rightarrow \mathbb{F}_2,
\end{equation}
where $Q$ is the set of all physical qubits. A value $f_{CCZ}(q_1^{j_1}, q_2^{j_2}, q_3^{j_3}) = 1$ indicates a physical $CCZ$ gate between the specified qubits, one from each code block. Otherwise, no gate is applied. This is illustrated in \cref{fig:CCZ_connectivity}a). $f_{CCZ}$ is then linearly extended to be defined as a trilinear function on all computational basis states. To define a valid, code-space preserving circuit, $f_{CCZ}$ must satisfy two additional conditions:
\begin{enumerate}
    \item $f_{CCZ}$ must vanish when any one or more of the arguments correspond to a binary string $b$ defining an $X$-type stabilizer -- i.e $X^b=X^{b_1}\otimes X^{b_2}\cdots \otimes X^{b_{3n_G}}$ is an X-type stabilizer -- and the other arguments similarly correspond to non-trivial X-type logical operators. This condition ensures $f_{CCZ}$ descends to a well-defined function on the space of logical codewords.
    \item $f_{CCZ}$ must define a constant-depth circuit of physical $CCZ$ gates.
\end{enumerate}
We expand on both conditions and their homological interpretation in \cref{sec::cup}. See Proposition \ref{prop:fccz_def} for how $f_{CCZ}$ is defined for tricycle codes.

In recent work, Breuckmann et al.~\cite{breuckmann2024cups} introduced a formalism for constructing appropriate trilinear functions $f_{CCZ}$ that satisfy conditions $1$ and $2$ based on algebraic operations on homological product codes known as \textit{cup-products}. We develop a modification of this formalism and derive conditions that tricycle codes must satisfy to host such transversal $CCZ$ circuits -- see \cref{sec::cup} for details on the conditions. For reasons elaborated in \cref{sec::cup}, we refer to this framework and the associated conditions on the codes as the \textit{symmetric triple cup-product} formalism for transversal $CCZ$ gates, which applies to both three-dimensional homological product and balanced product \cite{breuckmann2021balanced} quantum LDPC codes -- the latter of which tricycle codes are an example of. We then show that there are tricycle codes with favorable rates and distances which satisfy the conditions of the symmetric cup-product framework. We note that tricycle codes constructed using the original cup-product conditions of \cite{breuckmann2024cups} have recently been studied in \cite{jacob2025singleshot, li2025transversal} -- the resulting $2-2-2$ codes always have $K=3$ while $4-2-2$ and other higher weight codes have $D=2$. We find that the new symmetric cup-product conditions we derive in this work are necessary to circumvent these limitations, and are more generally applicable to other classes of higher-dimensional balanced product quantum codes.

In the following discussion, as a slight abuse of terminology, we refer to the `depth' of the $CCZ$ circuit and the maximum degree of any physical qubit -- i.e the number of $CCZ$ gates a qubit is involved in -- interchangeably. Indeed the maximum degree is the relevant quantity that controls the spread of errors between physical qubits of the code-blocks. The minimal depth for scheduling the gates in the circuit is upper bounded by a small constant factor of the degree (see \cref{sec::cup}) \cite{obszarski2017edge}.

The $4-2-2$ type tricycle codes which satisfy the conditions of the symmetric triple cup-product generically host depth $8$ $CCZ$ circuits. Similarly, the $4-4-2$ codes have depth $16$ or depth $32$ circuits, and the $4-4-4$ codes have depth $12$, $64$, or $128$ depth circuits. We explain the choices made in the code-construction that lead to these circuit depths in \cref{sec::cup}.  

In addition to the symmetric triple cup-product framework, we introduce a more general formalism for constructing valid $f_{CCZ}$ functions on balanced-product quantum codes \cite{breuckmann2021balanced} that is amenable to an extensive numerical search method for codes with block-lenghts of a few hundred qubits -- see \cref{sec::numerical_ccz} for details. Using this numerical method, we are able to find additional $4-4-4$ tricycle codes with better rates and distances along with shorter-depth physical $CCZ$ circuits than those that result from the cup-product formalism -- we give two examples of such $4-4-4$ codes in \cref{table::all_codes}, while we leave a more thorough exploration of this method for future work.

The physical qubits of each  tricycle code-block can be naturally partitioned into three sectors in accordance with block structure of the parity-check matrices in \cref{eq::parity_check_mats}. The code-space preserving $CCZ$ circuits always act across sectors of the three code-blocks, as illustrated in \cref{fig:CCZ_connectivity}. The logical action of a transversal $CCZ$ circuit is derived by restricting $f_{CCZ}$ to the logical subspace, using representatives of logical $X$ operators for each block. This produces logical $\overline{CCZ}$ gates between triples of logical qubits whenever $f_{CCZ}(l_1^i, l_2^j, l_3^k) = 1$, where $\{l_1^i\}_{i=1\cdots k}$ are a representative basis set of logical $X$ operators of the first code-block, with $\{l_2^j\}$ and  $\{l_3^k\}$ defined similarly. The resulting state, obtained by applying the logical circuit to $\overline{\ket{+,+,+}}^{\otimes K}$, defines a \textit{hypergraph magic state}~\cite{zhu2025topological, chen2024magic}, with vertices as logical qubits and hyperedges as $\overline{CCZ}$ gates. Recent work studies such hypergraph magic states, arguing that they can contain a large amount of magic and can be classically hard to simulate \cite{chen2024magic}.

\begin{figure}[ht!]
    \centering
    \includegraphics[width=0.95\linewidth]{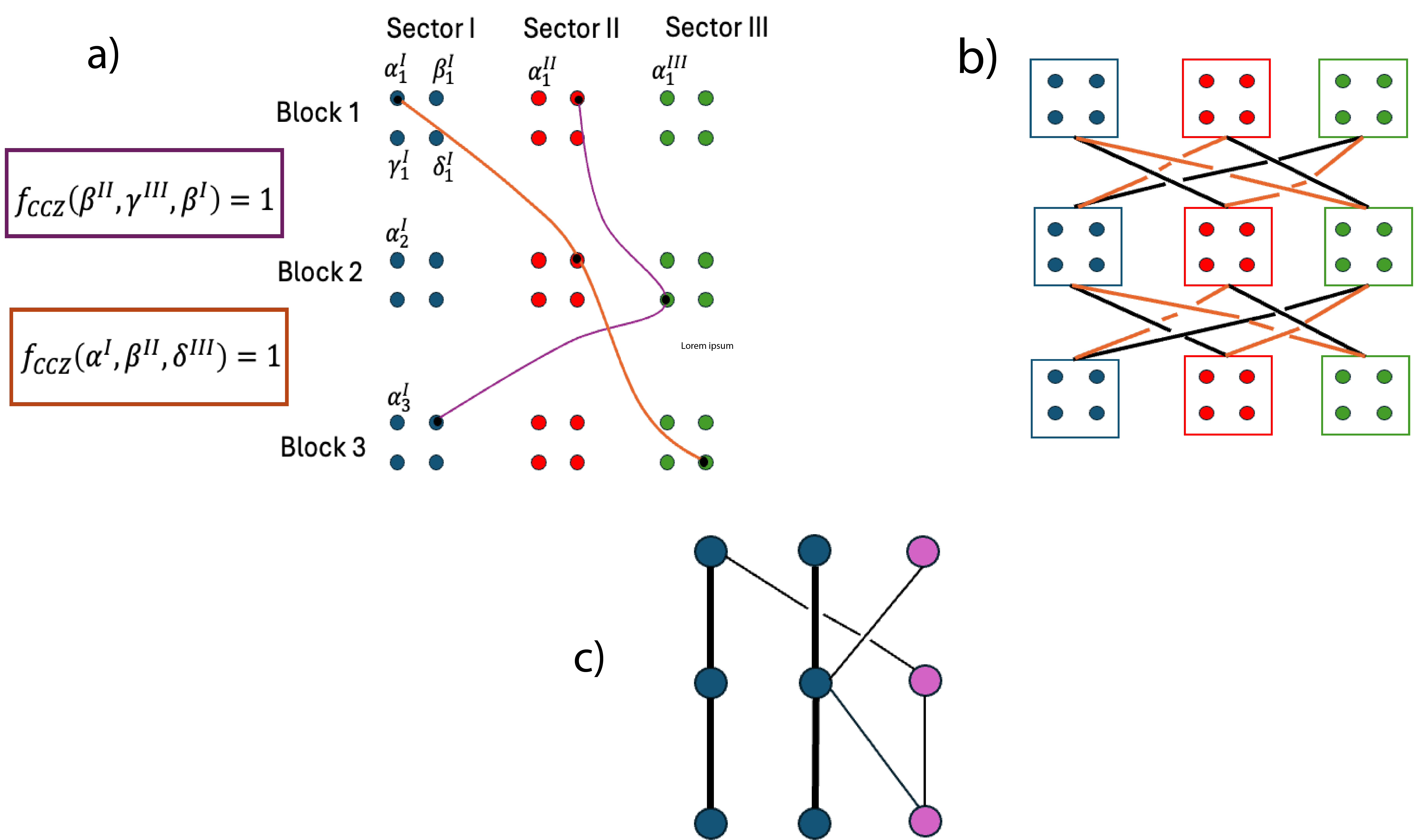}
    \caption{\textbf{Structure of transversal $CCZ$ gates of tricycle codes.} \textbf{a)} Schematic of a transversal $CCZ$ circuit on a 12-qubit code. Each code block is partitioned into three sectors of 4 qubits, labeled $\alpha, \beta, \gamma, \delta$ with subscripts and superscripts indicating the code block and sector. Colored curves denote $CCZ$ gates between triples of qubits where $f_{CCZ}$ is non-zero. \textbf{b)} Structure of transversal $CCZ$ circuits: all sectors participate via two disjoint sets of circuit layers denoted by orange and black edges that can individually be parallelized across qubits. Each qubit undergoes a maximum of $l$ black and a maximum of $m$ orange $CCZ$ gates, leading to a maximum degree of $l+m$. \textbf{c)} Logical $CCZ$ connectivity after basis optimization for a $K=3$ code . Circles denote logical qubits; rows correspond to separate code blocks. Thick black lines indicate usable $\overline{CCZ}$ gates for magic state distillation ($K_{CCZ}=2$ shown), while thin lines involve gauge qubits (pink), initialized in $\overline{\ket{0}}$. Blue circles represent logical qubits in disjoint triples connected only to other qubits in the triple or to gauge qubits.}
    \label{fig:CCZ_connectivity}
\end{figure}

While the produced hypergraph magic states are interesting and potentially computationally useful in their own right, we would also like to be able to extract smaller component magic gates from the logical circuit for quantum computation. One way to achieve this is to demand that the logical circuit consists of $K$ $\overline{CCZ}$ gates across code-blocks, each acting on disjoint triples of logical qubits. However, a choice of logical bases that produces such a logical circuit may not always exist (and is unlikely to). We may instead try to maximize the number of extractable $\overline{CCZ}$ gates on disjoint triples of logical qubits $K_{CCZ} \leq K$. Finding such a subset corresponds to computing the \textit{subrank} of the $f_{CCZ}$ function restricted to logical operators, a known problem in tensor rank theory~\cite{christandl2023gap, kopparty2020geometric, golowich2024quantum, lin2024transversal}. We use a mixed-integer programming method to find feasible solutions to this problem (\cref{sec::logical_opt}), and below we report the best values of $K_{CCZ}$ found for each code in \cref{table::all_codes}. For most codes, our solvers eventually time out within computational resource constraints without proving optimality, suggesting that larger $K_{CCZ}$ values than the ones we report are possible to find with tailored heuristic optimization strategies -- a promising avenue for future work. The resulting logical circuits enable the extraction of $K_{CCZ}$ usable $\overline{CCZ}$ gates. Logical qubits not used in these gates are treated as \textit{gauge qubits} and initialized in the $\overline{\ket{0}}$ state, which nullifies any $\overline{CCZ}$ gates they participate in (\cref{fig:CCZ_connectivity}d) -- see \cref{sec::cup} for details. 

Using the methods described in \cref{sec::logical_opt}, we are able to find at-least $K_{CCZ} \geq 2$ for all the codes in \cref{table::all_codes} using the associated $CCZ$ circuits. We attempted to find larger values of $K_{CCZ}$ for these codes, but the solver we used did not converge within a reasonable amount of time. We therefore emphasize that the $K_{CCZ}$ values found here are lower bounds returned by exact methods that quickly become infeasible for larger $K_{CCZ}$ -- this is primarily due to the current lack of well-performing efficient heuristics for the binary tensor-subrank problem, and we anticipate that progress towards such heuristics will lead to a higher yield of individually distillable $\overline{CCZ}$ type magic states for the codes in \cref{table::all_codes}. 

A subtle point that is understated in LDPC magic state distillation proposals is that extracting such $K_{CCZ}$ disjoint $\overline{CCZ}$ gates from the full logical circuit requires the ability to selectively initialize the gauge logical qubits in the $\overline{\ket{0}}$ state while other logical qubits are initialized in the $\overline{\ket{+}}$ state. For example, the proposals in Ref.~\cite{golowich2024quantum, lin2024transversal, zhu2025topological, zhu2025transversal, breuckmann2024cups} assume the ability to do this. However, selective state initialization is non-trivial for LDPC codes, and in general it is not clear that this is possible in constant depth. Extensions of the techniques introduced in Ref.~\cite{xu2025fast} for initializing arbitrary Pauli product states in homological product codes be useful for selective initialization. Recent work has also introduced batched Pauli-product state-preparation techniques that can achieve selective state-initialization on arbitrary qLDPC codes in constant overhead, as long as many code-blocks are initialized at once in the same state \cite{xu2025batched} -- we anticipate that these techniques will be very useful in a LDPC $\ket{\overline{CCZ}}$ magic state factory setting \cite{xu2025batched}. More generally however, we would like to emphasize an alternative perspective of viewing the global hypergraph magic state of the $K$ logical qubits as a high-magic resource state \cite{chen2024magic}, that can be used to perform useful quantum computations in a target computational code with transversal Clifford operations using appropriate circuit compilation techniques. We leave a detailed exploration of this perspective along with concrete protocols for efficient gate-teleportation between tricycle codes and target Clifford gate-set computation codes for future work.

\subsection{Single-shot distillation} \label{sec::single_shot}

We now describe how tricycle codes enable single-shot magic state generation. By initializing three code blocks fault-tolerantly in $\ket{\overline{+}}^{\otimes K}$ using a constant number of error correction rounds and subsequently applying the transversal $CCZ$ circuit described in the previous section, we obtain high-fidelity hypergraph magic states—all within constant circuit depth, as illustrated in \cref{fig:ccz-factory}.

\begin{figure}[ht!]
    \centering
    \includegraphics[width=0.65\linewidth]{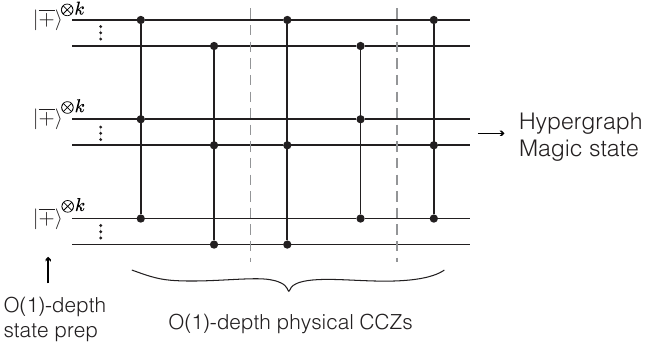}
    \caption{\textbf{Single-shot distillation with tricycle codes.} The logical $\ket{\overline{+}}^{\otimes K}$ state of the tricycle code can be prepared fault-tolerantly in constant depth by harnessing the code’s intrinsic resilience in one basis—namely, by preparing the physical qubits such that the associated stabilizer checks are deterministic—together with single-shot error correction in the complementary, non-deterministic, basis. The logical non-Clifford operation is implemented via a constant-depth circuit composed of physical $CCZ$ gates. The output is a logical hypergraph magic state which embeds $K_{CCZ} \leq K$ disjoint logical $\ket{\overline{CCZ}}$ resource states.}
    \label{fig:ccz-factory}
\end{figure}

Our approach to constant-depth preparation of logical $\ket{\overline{+}}^{\otimes K}$ hinges on two essential properties: intrinsic fault tolerance in one stabilizer basis during state initialization, and the presence of metachecks that enable single-shot error correction in the complementary basis. Specifically, we initialize the physical data qubits in the product state $\ket{+}^{\otimes n}$ and perform a single round of stabilizer measurements. As tricycle codes are CSS codes, the $X$-type checks are implemented as products of physical $X$ operators acting on the input $\ket{+}$ states. Thus, in the absence of errors, the measurement outcomes for the $X$ checks are deterministic, which allows $Z$ errors that occur during state preparation to be detected and corrected

However, initial $Z$-type stabilizer measurements produce non-deterministic $\pm 1$ outcomes, which must be reliably fixed to $+1$ on hardware before applying the $CCZ$ gate~\cite{bluvstein2025architectural}. In contrast to codes like the surface code, which lack single-shot state preparation, the $Z$ parity check matrices for tricycle codes contain a large number of redundant checks. These subsequently form a robust set of metachecks, which resolve a classical code on the syndrome bits, allowing a decoder to identify and correct syndrome measurement errors. In the $Z$ basis, these take the form

\begin{equation}\label{eq:metachecks}
    H_{meta} = \begin{bmatrix}
        \mathbf{B} & \mathbf{A} & \mathbf{C}
    \end{bmatrix}
\end{equation}
where the $\mathbf{A},\mathbf{B},\mathbf{C}$ matrices are the same as in \cref{eq::parity_check_mats}. The tolerance to measurement qubit errors is determined by the single-shot distance $D_Z^{SS}$~\cite{campbell2019theory}, which is defined as the minimum weight of a faulty syndrome that passes all metachecks but is not a $Z$ syndrome. A large $D_Z^{SS}$ indicates considerable redundancy, facilitating the identification of sparse syndrome errors. For all codes examined (see \cref{table::all_codes}), we found $D^{SS}_Z = D_Z$, or at least their upper bounds when exact distances were intractable to compute (see \cref{sec::distance_finding}). Concurrently, $D^{SS}_Z = D_Z$ was proven for 
this $H_{meta}$ analytically in \cite{jacob2025singleshot}. It is also always possible to extend $H_{meta}$ into a full-rank matrix \cite{campbell2019theory} by forcing the decoder to repair the noisy syndrome into one that can be produced by a pattern of data qubit errors, resulting in $D^{SS} = \infty$. 

Metachecks combine with \textit{soundness} properties \cite{campbell2019theory} to enable fault-tolerant inference of the $Z$ stabilizer eigenvalues after only a constant number of error correction rounds under select error models. It is worth noting that this task of preparing the $Z$ stabilizers to $+1$ from a completely random initialization is a more challenging task than just single shot error correction \cite{hong2024single, bombin2015single}, where a decoder may assume that the $Z$ stabilizers were fault-tolerantly fixed to $+1$ prior to the observed noise. For example, many hypergraph product codes enable single-shot error correction but not single-shot state preparation~\cite{hong2024single}.

In \cref{sec:soundness}, we prove soundness properties for a certain sub-class of tricycle codes in the $Z$ basis and review the ensuing guarantee on single-shot $Z$ basis decodability, which applies to state preparation \cite{xu2025fast}. Moreover, we perform direct simulations of single-shot state preparation for three codes with good finite-size encoding rates and distances, showing that general tricycle codes appear to host single-shot state-preparation properties. Our results, shown in \cref{fig:state-preparation} demonstrate exponential suppression of the logical error with increased distance, consistent with fault-tolerant state preparation.

\begin{figure}[h!]
    \centering
    \includegraphics[width=0.65\linewidth]{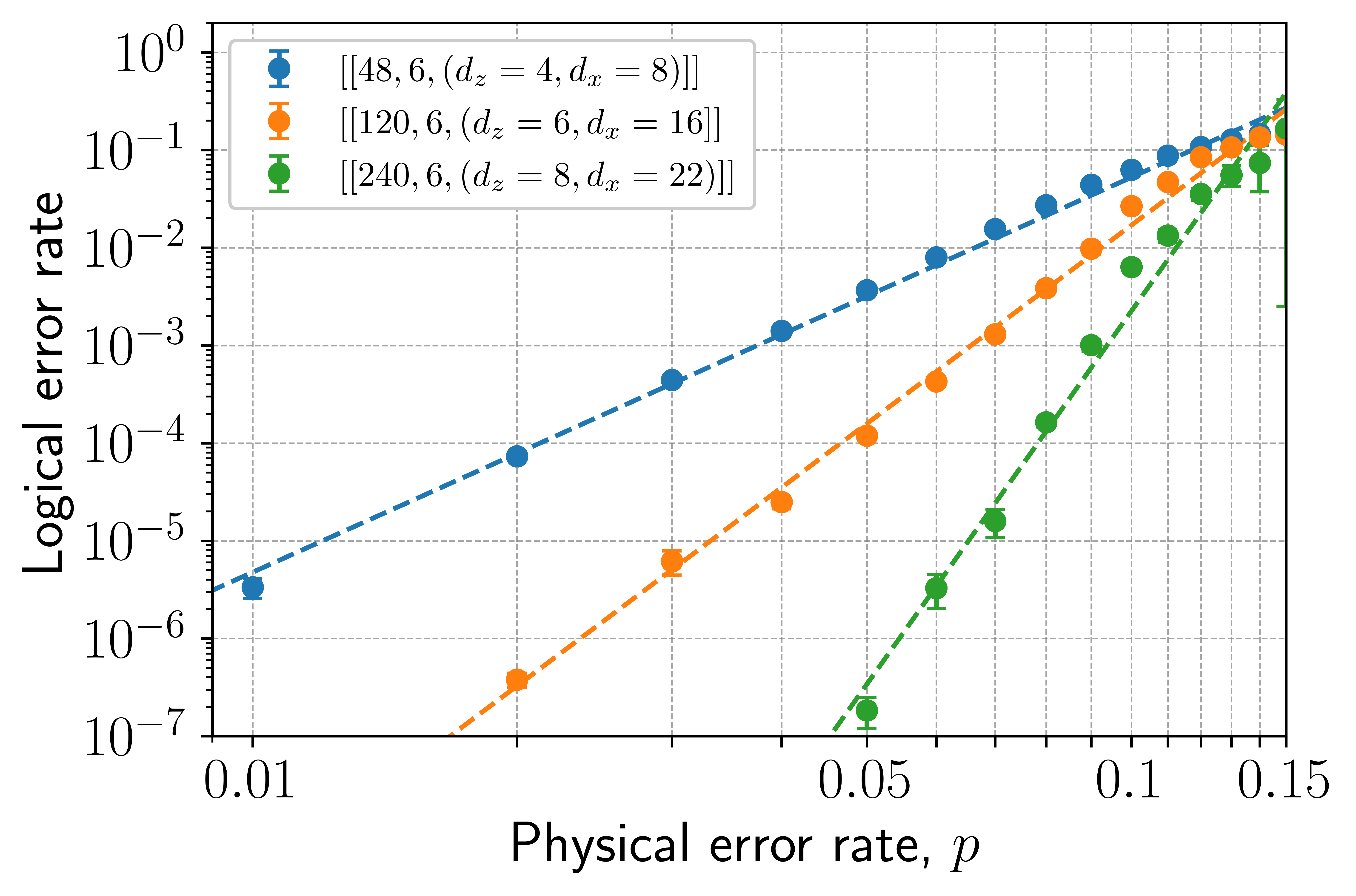}
    \caption{\textbf{Phenomenological noise simulation of single-shot state preparation in the $Z$ basis for $4-2-2$ tricycle codes.} Our method follows Ref.~\cite{hong2024single} and assumes that the initial $Z$ syndrome is trivial. For each code, we simulate one round of syndrome measurement in which measurement errors occur with probability $p$, though we expect performance to improve with a larger decoding window (see \cref{sec::noise_sim}). A most likely error (MLE) decoder applies a minimum weight correction to both the data and measurement qubits. Then, we simulate a noisy transversal $Z$ basis measurement of the data qubits, decode the reconstructed syndrome with the MLE decoder, and apply the corresponding correction. A logical failure is said to occur if the residual $X$ operator is a logical operator of the tricycle code, and the logical error rate is normalized per logical qubit. The observed phenomenological threshold is $\gtrsim 13\%$. Logical error rates are determined via Monte Carlo simulations; error bars indicate standard errors, computed as $\sqrt{p_L(1-p_L)/M}$, where $M$ is the number of samples.}
    \label{fig:state-preparation}
\end{figure}

Going back to the magic state factory, following initialization, the single-shot error correction capabilities of tricycle codes are necessary to prevent errors from spreading. Each $CCZ$ gate will propagate a pre-existing $X$ error into $CZ$ errors on the other two qubits, which are collapsed into nondeterministic patterns of $Z$ errors after syndrome measurement. To prevent such errors from building up, the $Z$ stabilizers must continuously be repaired to $+1$. In codes without single-shot properties, this usually requires `just-in-time decoding' \cite{bombin20182d}, in which error clusters are allowed to expand modestly before being corrected, and accompanies a decrease in performance and threshold of the code \cite{Scruby2022numerical, scruby2025withoutdistillation}. The single shot $Z$ basis of the tricycle code, in contrast, is able to immediately correct $X$ errors (up to a small residual error), limiting repeated propagations of $CZ$ errors. Although the $X$ basis is not single-shot, importantly, the $Z$ errors they detect commute with the $CCZ$ gates, so they do not need be corrected in real time. This is opportune given that the $CCZ$ circuit may disrupt the $X$ stabilizer group until the circuit is completed. As the $CCZ$ circuit is of constant depth, it suffices to measure the $X$ stabilizers only after completion; however, incorporating intermediate correction strategies that leverage stabilizer masking~\cite{mehta2025prep} could be advantageous.

Together, these results demonstrate that our codes enable fault-tolerant preparation of logical $\ket{\overline{+}}$ states in as little as a single round. As a result, magic state factories based on these codes are extremely compact: the initial state $\ket{\overline{+}}_1 \otimes \ket{\overline{+}}_2 \otimes \ket{\overline{+}}_3$ with all $Z$ stabilizers fault-tolerantly fixed to $+1$ can be prepared in constant depth, and the logical $\overline{CCZ}$ operation can also be implemented in constant depth, as described in the previous section, yielding high-fidelity hypergraph magic states in constant depth.

\subsection{Circuit-noise simulations}\label{sec::noise_sim}

We now evaluate the performance of our codes under realistic circuit-level noise models. Our focus is on the $4-2-2$ family of tricycle codes, characterized by modest check weights ($w_x=8$, $w_z=6$) and featuring the shortest-depth $CCZ$ circuits from the symmetric triple cup-product construction (\cref{sec::cup}). Circuit-level noise simulations of certain $4-4-4$ codes are presented in~\cref{sec::numerics-methods} for comparison.

To benchmark performance, we simulate circuit-level noise affecting the entangling gates in the syndrome extraction circuits. We adopt a standard two-qubit depolarizing noise model: for a given two-qubit gate error rate $p_{2q}$, each entangling operation is followed, with equal probability, by one of the fifteen nontrivial two-qubit Pauli errors from ${IX, IY, ..., ZZ}$, and otherwise experiences no error with probability $1-p_{2q}$.

Tricycle codes naturally exhibit higher distance in the $X$ basis compared to the $Z$ basis. Here, we prioritize simulating the more challenging $X$-basis. $Z$-basis results, demonstrating robust single-shot behavior, are included in Appendix G. In the $X$-basis, we use the conventional $d$-round protocol: the syndrome extraction sequence is repeated for $d$ cycles, and decoding is performed using the full record of accumulated syndrome information from all rounds. We utilize the depth-optimal syndrome extraction circuit structure outlined in~\cref{sec::implementation}.

For decoding, we map the syndrome extraction process to a spacetime code, where the checks correspond to parities of stabilizer measurement outcomes across consecutive time steps, and each code qubit is associated with a distinct fault location in the circuit~\cite{xu2024constant, ataides2025constant} (see~\cref{sec::numerics-methods}).

We perform the decoding for the $d$-rounds simulation of the $4-2-2$ codes using a most-likely error (MLE) decoder. The resulting logical error rates, as shown in~\cref{fig:results_new}(b), are defined as the probability of a logical qubit failure, normalized both by the total number of error correction rounds and by the number of logical qubits in the block. We observe a gate-noise threshold of approximately $0.5\%$, and achieve low logical error rates at physically relevant gate errors. For example, we achieve a logical error rate of $\sim 4\times 10^{-5}$ at a physical error rate of $0.1\%$ for the $[[108,6,(d_z=6,d_x=12)]]$ code.

For further performance gains, we consider post-selection, which is standard in the magic-state distillation factory setting \cite{bravyi2012magic}. Specifically, we analyze the $[[48,6,(4,8)]]$ code decoded using BP+LSD~\cite{hillmann2025localized} in conjunction with the clustering post-selection metric introduced in Ref.~\cite{lee2025efficient}. After $d_z=4$ rounds of error correction, the logical error rates as a function of abort rate using the ``$\alpha=2$ clusters llrs'' post-selection metric of Ref.~\cite{lee2025efficient} are depicted in~\cref{fig:results_new}(a). Additionally, we assess full error detection whereby we discard any samples that have any stabilizers flipped. We observe that for the $48$-qubit code, no stabilizer flip in roughly $32\%$ of samples, corresponding to an acceptance fraction of $32\%$ and yielding a very low logical error rate near $6\times 10^{-10}$. For the $84$-qubit code the acceptance fraction for full error detection over $d$ rounds becomes $\sim 8.5\%$  with a logical error rate $< 7\times 10^{-14}$.

In summary, the $4-2-2$ tricycle codes demonstrate strong performance in the $d$-round setting under circuit-level noise. It is worth emphasizing that in practical settings, $d$ rounds may not always be necessary --- e.g., when transversal teleportation can be employed --- so these results may be viewed as lower bounds on achievable QEC performance. We also note that the numerical results presented here and in the appendix pertain to a single code undergoing rounds of error correction and therefore do not directly capture the fidelity of the logical $CCZ$ gate. Nevertheless, since the $CCZ$ circuit has short depth, we expect these results to serve as a reasonable proxy. A detailed simulation of the full $CCZ$ circuit is left to future work.  For additional simulation details and further numerical results, see~\cref{sec::numerics-methods}.

\begin{figure}[ht!]
    \centering
    \includegraphics[width=\textwidth]{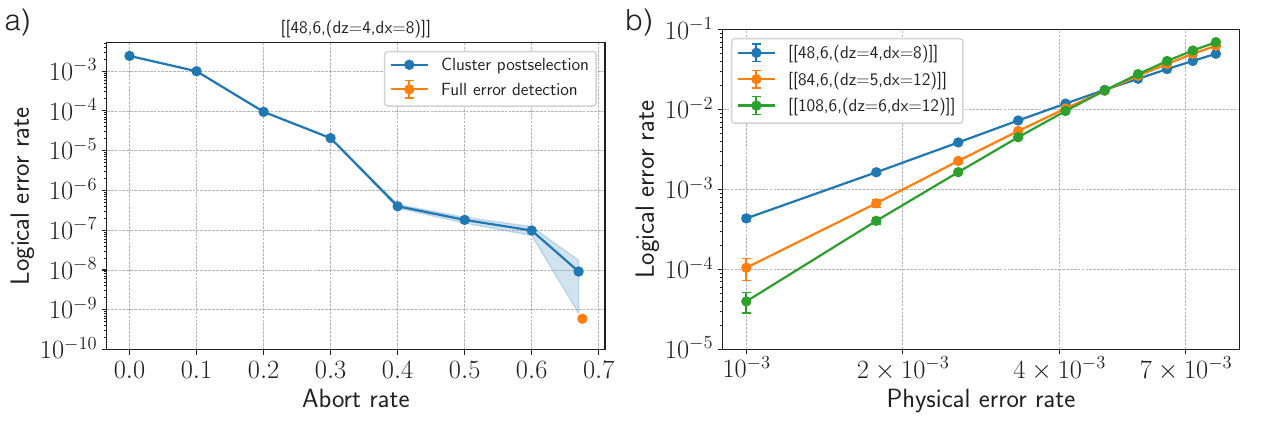}
    \caption{\textbf{Circuit-level noise simulation results for tricycle codes.}
(a) Logical error rate for the $[[48,6,(d_z=4,d_x=8)]]$ code as a function of abort rate under cluster postselection (blue) and full error detection (orange). The cluster postselection data show the tradeoff between logical error rate and postselection (abort) probability using a BP+LSD decoder, while full error detection corresponds to strictly accepting only trials with no detected stabilizer flips. (b) Logical error rate versus two-qubit physical gate error rate $p_{2q}$ for $d$-rounds, fault-tolerant error correction in the $X$ basis, for tricycle codes of increasing size and distance using a most-likely error (MLE) decoder. In both panels, errors are sampled according to a standard two-qubit depolarizing circuit-level noise model and the logical error rate corresponds to the total logical error rate normalized by the number of QEC rounds and by the number of logical qubits. Logical error rates are determined via Monte Carlo simulations, with each data point corresponding to $M$ samples; error bars indicate the standard error as $\sqrt{p_L(1-p_L)/M}$.}
    \label{fig:results_new}
\end{figure}

\begin{figure} [h!]
\centering
\includegraphics{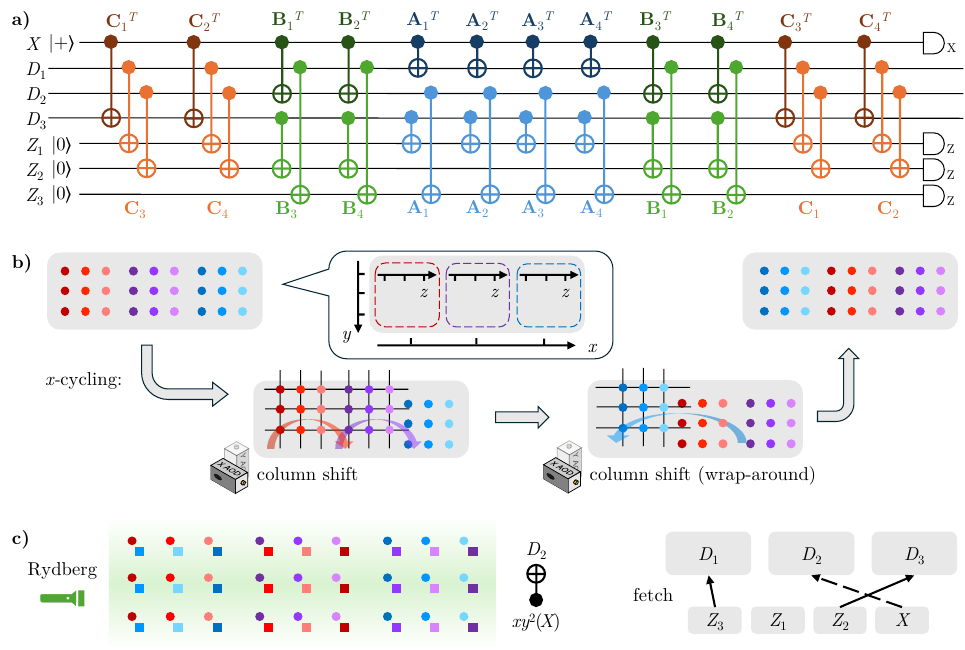}
\caption{\label{fig::implementation}\textbf{Implementation of tricycle codes on neutral atom arrays.}
(a) Syndrome extraction circuit.
Each line denotes a sector. CNOTs are applied on pairs of qubits across two sectors, with the pairing determined by the permutation matrices.
(b) Within each sector, physical qubits sharing the same $x$ index form tiles arranged in a row; within each tile, qubits are ordered by their $y$ and $z$ indices.
Two AOD movements can realize $x$-, $y$-, or $z$-cycling.
(c) Parallel CNOTs can be performed by Rydberg interaction on two overlaying sectors.
Data sectors reside in the entangling zone and have larger spacing to avoid intra-sector interactions.
After permutation in the workspace below, check sectors are fetched to overlay with the corresponding data sectors and perform CNOTs.
Then, check sectors are put back to their original positions, and perform permutation for the next layer.
}
\end{figure}

\subsection{Implementation of syndrome extraction circuits} \label{sec::implementation}

We construct optimal-depth syndrome extraction circuits for the tricycle codes presented.
Denote the data qubit sectors as $D_1$, $D_2$, $D_3$, the Z-check sectors as $Z_1$, $Z_2$, $Z_3$, and the X-check sector as $X$.
Our construction for $4-4-4$ codes is displayed in \cref{fig::implementation}a, where each CNOT symbol represents a parallel group of physical CNOTs between a check sector and a data sector, associated with a specific permutation.
For example, the first CNOT symbol corresponds to CNOTs from the $i$-th X-check to the $j$-th qubit in $D_3$ whenever the $(i,j)$ entry of $\mathbf{C}_1^T$ is 1.
Since $\mathbf{C}_1^T$ is a permutation matrix, there is one and only one $j$ for each $i$.
The circuit achieves the optimal CNOT depth of 12.
The construction consists of five stages represented by colors.
For the $4-4-2$ and $4-2-2$ codes, the overall stage structure is preserved, although some stages contain fewer layers.
The design allows certain degrees of freedom: specifically, how to assign the matrices $\mathbf{A}$, $\mathbf{B}$, and $\mathbf{C}$ to the stages, as well as the order of layers within each stage.
These degrees of freedom lead to 1728 possible circuits for the $4-2-2$ codes. An exhaustive search across these 1728 circuits was performed, from which we select the circuit with the highest noise-suppression for each code in the circuit-noise simulation results of \cref{sec::noise_sim}. For the general construction and a proof of its correctness, see \cref{sec::tricycle_scheduling_method}.

The syndrome extraction circuit can be implemented on either fixed architectures with all required couplings fabricated~\cite{wang2025demonstrationlowoverheadquantumerror}, or on reconfigurable architectures with physical qubit permutation~\cite{bluvstein2022quantum, evered2023high, bluvstein2024logical, sales2025experimental, bluvstein2025architectural, chiu2025continuousoperationcoherent3000qubit, scholl2021quantum, quantinuum23racetrack, ma2023high, gyger24continuous,  bedalov2024faulttolerantoperationmaterialsscience, reichardt2025faulttolerantquantumcomputationneutral}. Here, we focus on reconfigurable neutral atom arrays.

With slight abuse of notation, we label the qubits in each sector by indices $(x, y, z)$, where $1 \le x \le n_x$, $1 \le y \le n_y$, $1 \le z \le n_z$, and $n_G = n_x n_y n_z$.
As shown in the callout of \cref{fig::implementation}b, we place qubits with the same $x$ as tiles arranged in a row; within each tile, qubits are ordered by $y$ and $z$.
Qubits are held in place by traps generated by a spatial-light modulator (SLM);
we assume a sufficient number of SLM traps, and thus do not depict them explicitly.
Crossed 1D acousto-optic deflectors (AODs) create a 2D grid of mobile traps, enabling entire grids of qubits to be picked up and rearranged in the plane.
Our qubit layout leads to efficient implementation of permutation matrices of interest.
For example, \cref{fig::implementation}b illustrates the process of $x$-cycling, corresponding to cyclically shifting the tiles:
first, two tiles are transferred from the SLM to the AOD, shifted right in parallel, and deposited into ancillary SLM traps;
the second step applies a `wrap-around' in the opposite direction for the remaining tile.
Rows and columns in the AOD grid can otherwise move freely, though their order must be preserved, so the two steps for $x$-cycling cannot be combined.
The $y$-cycling and $z$-cycling similarly correspond to cyclically shifting rows and columns within each tile, respectively, as needed to implement polynomial terms like $\mathbf{C}^T_1$.

Entangling gates are performed by exciting adjacent pairs of qubits to Rydberg states.
Parallel CNOTs between corresponding qubits in two sectors are realized by overlaying those sectors and applying a global Rydberg laser illumination, as shown in \cref{fig::implementation}c (dots and squares denote data and check qubits, respectively.
As the colors suggests, the check sector has been permuted with $x$-cycling by one unit and $y$-cycling by two units.)
To prevent unintended interactions between qubits within the same sector, the pairs are separated by a sufficient distances determined by atom species and Rydberg state.

The overall layout (\cref{fig::implementation}c, right) places data sectors in SLM traps within the entangling zone.
For each CNOT layer, check sectors are permuted in a workspace below and fetched to the corresponding data sectors for parallel CNOTs.
In the workspace, smaller grid spacing is used to reduce travel distances for permutation; during fetching, check sectors are expanded to align with data sector spacing.
After entanglement, check sectors return to the workspace for the next permutation.
Due to crossing movement paths, fetching and put-back of Z-check and X-check sectors must be performed separately in this example to preserve AOD column order.

Executing one syndrome cycle requires a constant number of AOD movements.
Let $t_\text{wsp}$ and $t_\text{ent}$ denote the time for AOD to traverse a sector in the workspace and the entangling zone, respectively.
In the implementation above, permutation, fetch, and put-back for all $Z$ sectors can be done in parallel.
Each CNOT layer then involves two fetch and two put-back operations, each bounded by $3t_\text{ent}$ (say, from $X$ to $D_1$).
Permutations for $Z$ and $X$ sectors may require $x$-, $y$-, and $z$-cycling, each up to $2t_\text{wsp}$.
Thus, the per-layer duration is bounded by $12(t_\text{wsp} + t_\text{ent})$.
Further details and possible optimization are discussed in \cref{sec::detail_atom}.

\section{Discussion and Outlook}

We have introduced a class of quantum LDPC codes—\textit{tricycle codes}—that support constant-depth, single-shot preparation of logical magic states and show strong resilience to realistic circuit-level noise with a high threshold of $>0.5\%$. We showed that certain families of tricycle codes allow compact, constant depth preparation of high-magic states known as hypergraph magic states. These hypermagic states in turn embed individually distillable $\overline{CCZ}$-type magic states, enabling universal quantum computation by using tricycle codes as magic state factories. The promising performance of these codes can be improved further along several directions. The asymmetry between the $X$ and $Z$ sectors of tricycle codes make them particularly advantageous for leveraging noise bias in entangling gate operations—a feature pervasive in leading experimental platforms~\cite{aliferis2009fault, bluvstein2025architectural, nigg2014quantum, shulman2012demonstration,bonillaataides2021thexzzx}. Additionally, loss and leakage errors—prevalent in many hardware architectures—represent another avenue for enhancing performance; recent work has shown that appropriate handling of such errors can be leveraged to further improve logical error rates~\cite{baranes2025leveraging,bluvstein2025architectural}. While our present analysis focuses on unbiased depolarizing noise and neglects loss errors, future studies integrating tailored noise bias exploitation and explicit loss-/leakage-leveraging strategies with these codes are likely to yield further gains and represent a compelling direction for subsequent research. Moreover, improved decoding methods would be beneficial as we observe over an order-of-magnitude gap in logical error rates between standard BP+OSD and BP+LSD decoders and an exact but exponentially expensive most-likely-error (MLE) decoder. Encouragingly, recent work has proposed new decoders that enhance performance while maintaining practical inference times~\cite{muller2025improved,bonillaataides2025neural,maan2025decoding,maurer2025realtime}.

Beyond their standalone utility as distillation factories, integrating tricycle codes into broader quantum architectures presents several important challenges. A key open problem is the seamless injection or teleportation of distilled magic states into computational code blocks, such as those based on high-performance bicycle codes. While lattice surgery offers a robust, code-agnostic method for logical state transfer—and has seen significant theoretical and experimental development~\cite{cohen2022low,xu2024constant,williamson2024lowoverhead,ide2024faulttolerant,swaroop2024universal,cross2024linear,cross2024improved,baspin2025fast,zheng2025high,alex2025fast}—a particularly promising alternative is transversal teleportation based on natural isomorphisms between tricycle and bicycle codes. Similar to teleportation protocols between 3D and 2D color codes~\cite{sullivan2024code, daguerre2025experimental}, such schemes using transversal one-way CNOT gates between $3D$ and $2D$ product codes \cite{heussen2024efficient} may enable direct, fault-tolerant transfer of magic states. In particular, we believe this should be possible between bivariate-bicycle codes and bivariate-tricycle codes or between triviariate-bicycle codes and trivariate-tricycle codes, as they share the same product structure. Using extensions of the analytical tools in Ref.~\cite{eberhardt2024logical}, we believe it should be possible to show that the logical operators of pairs of some such $2D$ and $3D$ group-algebra codes split into isomorphic sectors as in hypergraph product codes \cite{tillich2013quantum} -- a natural setting for efficient gate teleportation with transversal one-way CNOTs. A version of this idea has recently been explored in \cite{li2025transversal}. If realized, these protocols could further reduce space-time overhead and enable highly modular architectures with efficiently integrated magic state distillation. We will explore a specific magic-state factory architecture based on high-rate tricycle codes with transversal teleportation onto a target high-rate bicycle code in a follow-up work.

Another important direction is improving the yield of disjoint $\overline{CCZ}$ gates that can be extracted from the hypergraph magic states produced by these circuits (\cref{sec::logical_opt}). This corresponds to the problem of finding the subrank of a binary tensor~\cite{christandl2023gap, kopparty2020geometric, golowich2024quantum, lin2024transversal}, which is computationally hard. While we used mixed-integer programming, new heuristic strategies could uncover larger values of $K_{CCZ}$ for the codes we consider. In parallel, a deeper understanding of the structure of these high-magic hypergraph states—produced by the transversal $CCZ$ circuits described in \cref{sec::transversal_CCZ}—may allow one to compile them directly into useful quantum circuits, offering an alternative to extracting only disjoint $\overline{CCZ}$ gates.

In summary, tricycle codes represent a significant step toward scalable, low-overhead, fault-tolerant magic state distillation with quantum LDPC codes. Ongoing work on new code constructions, transversal gate optimization, and efficient code-switching protocols will continue to strengthen the outlook for practical, high-performance quantum computing.

\section{Acknowledgments}
We thank Qian Xu, Nazli Ugur Koyluoglu, Rohith Sajith, Nishad Maskara, Shayan Majidy, Hengyun Zhou, Harry Putterman,  Brenden Roberts, Jin Ming Koh, Andrei Diaconu, and Jason Cong for valuable discussions. We particularly thank Qian Xu for detailed feedback on our results and manuscript. We acknowledge financial support from IARPA and the Army Research Office, under the Entangled Logical Qubits program (Cooperative Agreement Number W911NF-23-2-0219), the DARPA MeasQuIT program (grant number HR0011-24-9-0359), the Center for Ultracold Atoms (a NSF Physics Frontiers Center, PHY-1734011), the National Science Foundation (grant numbers PHY-2012023 and  CCF-2313084),  the Wellcome Leap Quantum for Bio program, Harvard Quantum Initiative Postdoctoral Fellowship (DBT) and QuEra Computing. After the completion of this project, we became aware of related work studying tricycle codes, albeit with less focus on transversal magic gates~\cite{jacob2025singleshot}. Another related work that appeared around the same time also introduced a general theory for multi-cyclic Abelian group-algebra codes \cite{lin2025abelian}.

\appendix

\section*{Appendix}

\section{Code construction and properties}\label{sec::balanced}

We describe the construction of the tricycle code parity check matrices in more detail, along with general properties of the codes and their connection to balanced products of classical group algebra codes.

Let $G$ be a finite Abelian group, expressed as $G = \mathbb{Z}_{m_1} \times \mathbb{Z}_{m_2} \times \cdots \times \mathbb{Z}_{m_k}$ for some finite $k$. Let $n_G = |G| = \prod_{i=1}^k m_i$. The codes we construct are defined via matrices over the ring $R \coloneqq \FG$, the group algebra of $G$ over $\F$. Elements of $\FG$ are formal sums
\[
\sum_{g \in G} a_g g, \quad \text{where } a_g \in \F,
\]
with componentwise addition and multiplication induced by extending the group operation bilinearly. Since $G$ is Abelian, $\FG$ is a commutative, associative algebra over $\F$ with identity $1_G \in G$.

For $a \in \FG$, we define its \emph{weight} as $|a| = |\{g \in G \mid a_g = 1\}|$. Each $a \in \FG$ determines a binary matrix $\mathbf{A} \in \F^{n_G \times n_G}$ via
\begin{equation} \label{eq::bin_mat_rep}
    \mathbf{A}_{\alpha, \beta} = \mathbb{B}_G(a) \equiv \sum_{g \in G} a_g \delta_{\alpha, g\beta},
\end{equation}
where $\alpha, \beta \in G$ and $\delta$ is the indicator for $\alpha = g\beta$. The map $\mathbb{B}_G : \FG \to \F^{n_G \times n_G}$ gives the regular representation of $a$, following \cite{panteleev2022asymptotically}.

Alternatively, elements of $\FG$ can be identified with polynomials via the isomorphism
\begin{equation}
    \FG \cong \F[x_1, \ldots, x_k] / \langle x_1^{m_1} - 1, \ldots, x_k^{m_k} - 1 \rangle,
\end{equation}
where $x_i$ corresponds to a generator of $\mathbb{Z}_{m_i}$. Under this identification, the weight of $a$ is the number of monomials with nonzero coefficient in the corresponding polynomial.

To compute $\mathbf{A}$, let $S_l$ denote the $l \times l$ binary right cyclic shift matrix, and define $\hat{S}_{m_i} = I_{m_1} \otimes \cdots \otimes S_{m_i} \otimes \cdots \otimes I_{m_k}$. These matrices commute for different $i$. Given $p_a(x_1, \ldots, x_k)$, the polynomial corresponding to $a$, the binary matrix representation is
\begin{equation}
    \mathbf{A} = p_a(\hat{S}_{m_1}, \hat{S}_{m_2}, \ldots, \hat{S}_{m_k}).
\end{equation}
If $|a| = w$, then each row and column of $\mathbf{A}$ has Hamming weight $w$. We can define tricycle codes formally as follows.

\begin{definition}[Tricycle codes]
A three-block Abelian group algebra code (tricycle code) is a CSS code defined by elements $a, b, c \in \FG$ with weights $w_a, w_b, w_c$, and binary matrix representations $\mathbf{A}, \mathbf{B}, \mathbf{C} \in \F^{n_G \times n_G}$. The $X$ and $Z$ parity check matrices are defined as in \cref{eq::parity_check_mats}, with component matrices formed from binarizing group algebra sums: for example, if $a = a_1 + a_2 + a_3 + a_4$, where $a_i \in G$, then $\mathbf{A} = \sum_i \mathbb{B}(a_i)$.
\end{definition}

We focus on codes with $G$ equal to a product of three cyclic groups ($k=3$) as we found these codes to generally have better parameters than those with $k=1,2$ or higher. By analogy to the nomenclature for bicycle codes \cite{kovalev2013quantum, bravyi2024high, panteleev2021degenerate}, these may be referred to as  trivariate-tricycle codes. The $k=1,2$ cases may be referred to as generalized tricycle and bivariate-tricycle respectively. We denote the generators by monomials $x_1 \equiv x$, $x_2 \equiv y$, and $x_3 \equiv z$. The polynomials and groups used in the examples of \cref{table::all_codes}  are listed in \cref{table::code_polys}.

\begin{longtable}{|l|>{\raggedright\arraybackslash}p{1.3cm}|>{\raggedright\arraybackslash}p{3cm}|>{\raggedright\arraybackslash}p{3cm}|>{\raggedright\arraybackslash}p{3cm}|}
\hline
\toprule
$\mathbf{[[N,K,(D_X,D_Z)]]}$ & $\mathbf{(l,m,n)}$ & \textbf{a} & \textbf{b} & \textbf{c} \\
\midrule
$[[48,6,(8,4)]]$ & $(2,2,4)$& $y+z+xz+xyz^2$ & $yz^2+yz^3$ & $y+xyz$\\
\midrule
$[[84,6,(12,5)]]$ & $(2,2,7)$& $y+z+xz+xyz^2$ & $z^3+xz^4$ & $y+yz^4$ \\
\midrule
$[[108,6,(12,6)]]$ & $(3,3,4)$& $x+z^2+yz+x^2yz^3$ & $y^2z+x^2yz^3$ & $x^2+x^2yz^2$ \\
\midrule
$[[240,6,(\leq 22,8)]]$ & $(4,4,5)$ & $xy^2z^3+xy^3z^4+x^2y^2z+x^2y^3z^2$ & $y^3+x^2yz^2$ & $xz^4+x^3y^3z$ \\
\midrule
$[[480,6,(\leq 36,10)]]$ & $(4,5,8)$ & $x^2z^7+x^3y^2+y^4z^4+xy^3z^3$ & $z^6+x^2y^2z^2$ & $x^2+x^3y^4z$ \\

\midrule
$[[108,12,(6,4)]]$ & $(3,3,4)$ & $z+xz^3+xyz^2+x^2y$ & $y^2+y^2z^3+xy^2z+xy^2z^2$ & $z+xyz^3$ \\
\midrule 

$[[180,12,(15,6)]]$ & $(3,4,5)$ & $yz^3+y^3+x^2yz^3+x^2y^3z$ & $xyz^4+xy^2z^2+x^2yz+x^2y^2z^4$ & $z^4+x^2z$ \\
\midrule

$[[108,15,(12,6)]]$ & $(3,3,4)$ & $y+y^2z+xyz^3+x^2y^2z^2$ & $z^2+xy+xy^2z+x^2z^3$ & $yz^3+y^2z+x^2+x^2y^2z^2$ \\
\midrule

$[[270,24,(15,8)]]$ & $(3,5,6)$ & $z^4+y^3+xy^2+x^2yz^4$ & $y^3+y^3z+xy^4+x^2y^2z$ & $yz^4+y^2z+xy^2z+xy^3z^4$ \\
\midrule

$[[324,12,(\leq 32,\leq 12)]]$ & $(3,4,9)$ & $y^2z+xyz+x^2z^6+x^2y^3z^5$ & $z^4+z^5+xyz^7+x^2y^3z^2$ & $yz^7+xz^7+xy^3+x^2y^2$ \\
\midrule
$[[480,15,(\leq 48,\leq 14)]]$ & $(4,5,8)$ & $x^2z^5+x^2yz^4+x^3y^3z^4+x^3y^4z^3$ & $x^2z^3+x^2y^4z^6+x^3z^5+x^3yz^2$ & $yz^5+xz^4+x^2y^4z^4+x^3y^2z^5$ \\

\hline
\caption{Polynomials used to construct trivariate-tricycle codes in \cref{table::all_codes} corresponding to elements of the group algebra of $G=\mathbb{Z}_l\times \mathbb{Z}_m\times\mathbb{Z}_n$. If there are fewer than three cyclic group factors, the corresponding element in the tuple of group orders $(l,m,n)$ is left empty. }
\label{table::code_polys}
\end{longtable}

Next, we describe how tricycle codes can be viewed as three-dimensional hypergraph products (HGPs) \cite{tillich2013quantum} over non-binary rings, and derive a conditional distance lower bound. We will first need the following definitions.

\begin{definition}[Support and intersection subgroups] \label{def::intersection_subgroup}
Let $G$ be a finite Abelian group, and $a = \sum_{g \in G} a_g g \in \FG$ with $a_g \in \{0,1\}$. Define the support subgroup $G_a \coloneqq \langle \{ g : a_g \neq 0 \} \rangle$. For elements $a_1, \ldots, a_k \in \FG$, the intersection subgroup is $N = \bigcap_{i=1}^k G_{a_i}$. Since $G$ is Abelian, $N$ is normal in each $G_{a_i}$.
\end{definition}

\begin{definition}[Connected codes]
Let $a,b,c \in \FG$ with support subgroups $G_a, G_b, G_c$. The code defined by $a,b,c$ is \emph{connected} if $G_a G_b G_c = G$. Otherwise, the code decomposes into subcodes supported on cosets of the subgroup $G_a G_b G_c \subset G$, which has order $|G_a||G_b||G_c|/|N|^2$, where $N$ is the intersection subgroup.
\end{definition}

This notion of connectivity extends the Abelian case of \cite{lin2024quantum}, which showed that connected two-block group algebra codes can be written as hypergraph products (HGP) over non-binary rings. The same argument applies here, yielding the following:

\begin{prop}[Non-binary 3D HGP]
Let $a,b,c \in \FG$ define a connected tricycle code, and let $N = G_a \cap G_b \cap G_c$ with $|N| = c$. Let $l_a = [G_a : N]$ be the index of $G_a$ in $N$, and define $l_b$, $l_c$ analogously. Define $R = \mathbb{F}_2[N]$, and consider $R$-valued matrices:
\begin{align}
A &= A_1 \otimes I_{l_b} \otimes I_{l_c}, \\
B &= I_{l_a} \otimes B_1 \otimes I_{l_c}, \\
C &= I_{l_a} \otimes I_{l_b} \otimes C_1,
\end{align}
for appropriately chosen $A_1 \in R^{l_a \times l_a}$, $B_1 \in R^{l_b \times l_b}$, and $C_1 \in R^{l_c \times l_c}$. The binary matrices $\mathbf{A}, \mathbf{B}, \mathbf{C}$ are then obtained by applying $\mathbb{B}_N$ element-wise: $\mathbf{A}_{i,j} = \mathbb{B}_N(A_{i,j}) \in \mathbb{F}_2^{c \times c}$, etc.

The parity check matrices correspond to a three-fold $R$-linear homological product code \cite{zeng2019higher}:
\begin{align}\label{eq:hgp3nonbinary}
H_X^T &= \begin{bmatrix}
    A_1 \otimes I_{l_b} \otimes I_{l_c} \\
    I_{l_a} \otimes B_1 \otimes I_{l_c} \\
    I_{l_a} \otimes I_{l_b} \otimes C_1
\end{bmatrix}, \\
H_Z &= \begin{bmatrix}
    I_{l_a} \otimes I_{l_b} \otimes C_1 & \mathbf{0} & A_1 \otimes I_{l_b} \otimes I_{l_c} \\
    \mathbf{0} & I_{l_a} \otimes I_{l_b} \otimes C_1 & I_{l_a} \otimes B_1 \otimes I_{l_c} \\
    I_{l_a} \otimes B_1 \otimes I_{l_c} & A_1 \otimes I_{l_b} \otimes I_{l_c} & \mathbf{0}
\end{bmatrix}.
\end{align}
The binary versions are recovered via the binarization map $\mathbb{B}_N$ defined in \cref{eq::bin_mat_rep}.
\end{prop}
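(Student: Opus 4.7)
The plan is to exhibit an explicit re-indexing of the elements of $G$ under which the regular-representation matrices $\mathbf{A}, \mathbf{B}, \mathbf{C}$ acquire the claimed tensor-product form, and then simply read off the identification with the non-binary three-fold homological product.

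First, I would fix transversals (coset representatives) $T_a, T_b, T_c$ of $N$ in $G_a, G_b, G_c$ respectively, so that $|T_a| = l_a$, $|T_b| = l_b$, $|T_c| = l_c$. The definition of connectedness, together with the order formula $|G_aG_bG_c| = |G_a||G_b||G_c|/|N|^2$ stated earlier in the excerpt, implies that the multiplication map
\[
\mu : T_a \times T_b \times T_c \times N \longrightarrow G, \qquad (t_a, t_b, t_c, n) \longmapsto t_a t_b t_c n,
\]
is a bijection. This is the main technical point to verify: surjectivity is the content of $G_a G_b G_c = G$, while injectivity uses that the pairwise intersections reduce to $N$ and that $G_a G_b \cap G_c = N$, which is exactly what the cardinality formula $|G_aG_bG_c| = |T_a| |T_b| |T_c| |N|$ forces in an Abelian group.

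Next, I would order the basis of $\mathbb{F}_2^{n_G}$ indexed by $G$ lexicographically through $\mu$, with the $T_a$-index outermost and the $N$-index innermost. Because $G$ is Abelian, the left-regular action of any $g \in G_a$ sends $t_a t_b t_c n$ to $(g t_a) t_b t_c n$, and conjugating $g t_a$ back into $T_a \cdot N$ only mixes the $T_a$ and $N$ coordinates — leaving the $T_b$ and $T_c$ coordinates fixed. Applying this to each monomial in $a$ and summing shows that, under the chosen ordering, the binary matrix $\mathbf{A}$ factors as $\mathbb{B}_N(A_1) \otimes I_{n_G / (l_a |N|)}$ arranged so that $A = A_1 \otimes I_{l_b} \otimes I_{l_c}$, where $A_1 \in R^{l_a \times l_a}$ with $R = \mathbb{F}_2[N]$ is given explicitly by $(A_1)_{s,t} = \sum_{n \in N} a_{s n t^{-1}} \, n$ for $s, t \in T_a$. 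Strictly analogous computations with the roles of the three factors cyclically permuted yield $B = I_{l_a} \otimes B_1 \otimes I_{l_c}$ and $C = I_{l_a} \otimes I_{l_b} \otimes C_1$.

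With the tensor factorizations in hand, the block forms of $H_X$ and $H_Z$ from \cref{eq::parity_check_mats} rewrite tautologically into the claimed three-fold homological product expressions over $R$. The CSS condition $H_Z H_X^T = 0$ then follows from the pairwise commutativity of $A$, $B$, $C$, which is automatic because they act nontrivially on mutually disjoint tensor factors. Finally, applying the binarization map $\mathbb{B}_N$ from \cref{eq::bin_mat_rep} entry-wise to the $R$-linear parity check matrices recovers the original binary matrices, completing the identification. The main obstacle in the argument is establishing the bijectivity of $\mu$ cleanly; after that, the reduction to tensor-product form and the identification with the 3D HGP structure is bookkeeping.
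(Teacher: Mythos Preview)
Your proposal is correct and follows essentially the same route as the paper: the paper defers to Statements~7--9 of \cite{lin2024quantum}, choosing coset representatives $p_i \in G_a/N$, $q_j \in G_b/N$, $r_k \in G_c/N$ and writing $(A_1)_{i_1,i_2} = \sum_{n\in N} a_{p_{i_1} n p_{i_2}^{-1}}\, n$ (and analogously for $B_1,C_1$), which is exactly your transversal re-indexing and your formula for $A_1$. You have correctly isolated the one nontrivial step --- bijectivity of $\mu$ --- and your observation that the cardinality formula $|G_aG_bG_c|=|G_a||G_b||G_c|/|N|^2$ forces $G_a\cap G_b = G_aG_b\cap G_c = N$ is precisely what makes the argument go through; the paper takes this as given from the cited reference rather than spelling it out.
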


\begin{proof}
    This follows identically from the proofs of Statements 7,8,9 in the appendix of \cite{lin2024quantum}. In particular, by choosing a representative set of coset elements $p_i \in G_a/N$, $q_j \in G_b/N$, $r_k \in G_c/N$ where $i=1\cdots l_a$, $j=1\cdots l_b$, $k=1\cdots l_c$, the $R$-valued matrices $A_1$, $B_1$, $C_1$ are given by 
    \begin{align}
        (A_1)_{i_1,i_2} = \sum_{n \in N} a_{(p_{i_1}n p_{i_2}^{-1})}  n  \\ 
        (B_1)_{j_1,j_2} = \sum_{n \in N} b_{(q_{j_1}n q_{j_2}^{-1})}  n  \\
        (C_1)_{k_1,k_2} = \sum_{n \in N} c_{(r_{k_1}n r_{k_2}^{-1})}  n 
    \end{align}

    where $a_g$, $b_g$, $c_g$ are the coefficients in the group algebra elements $a=\sum_{g\in G}a_g g$ and similarly for $b,c$.
\end{proof}

This perspective of viewing connected tricycle codes as three-dimensional homological products over a ring is useful for proving the following lower bound on the code distances. 

\begin{theorem}\label{thm::dist_lower_bnd}
    Consider a connected three-block Abelian group algebra code constructed from group algebra elements $a,b,c\in \mathbb{F}_2[G]$, with corresponding binary matrices $\mathbf{A},\mathbf{B},\mathbf{C} \in \mathbb{F}_2^{n\times n}$ where $n=|G|$. Further suppose that the classical binary linear codes with parity check matrices $\mathbf{A^T},\mathbf{B^T},\mathbf{C^T}$ have minimum distances $d_A^T, d_B^T, d_C^T$. We use the convention that $d = \infty$ if the corresponding matrix is full rank. Let $N=G_a \cap G_b \cap G_c$ be the intersection subgroup as defined in \cref{def::intersection_subgroup}. The minimum distance of the quantum code is bounded below as
    \begin{align}\label{eq:dist_lb}
        d \geq \frac{1}{|N|} \min\{d_A^T, d_B^T, d_C^T\}
    \end{align}
    
\end{theorem}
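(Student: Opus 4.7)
The plan is to reduce the binary distance bound to an $R$-valued distance bound for the three-fold homological product structure of tricycle codes over $R = \mathbb{F}_2[N]$ established in the preceding proposition, and then convert via the binarization map $\mathbb{B}_N$, which relates $R$-support size and binary Hamming weight by a factor of at most $|N|$.

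First, I would establish an $R$-valued distance bound: the minimum $R$-support size $d_Z^R$ of a non-trivial Z-logical, viewed as a vector in $R^{3 l_a l_b l_c}$, satisfies $d_Z^R \geq \min(d^T_{A_1, R},\, d^T_{B_1, R},\, d^T_{C_1, R})$, where $d^T_{M_1, R}$ denotes the minimum $R$-support of a non-zero codeword of $\ker M_1^T$ over $R$. This is the three-fold generalization of the classical two-fold HGP distance bound due to Tillich and Zémor. It can be proved via the Künneth decomposition of the degree-1 cohomology of $C_A \otimes_R C_B \otimes_R C_C$ into three single-sector summands, each identified with codewords of $\ker M_1^T$ in one of the three factor complexes, together with an argument that mixed-sector cancellations from the Z-stabilizer group cannot reduce the weight below the single-sector minimum. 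Once this is in hand, the conversion to the binary bound is immediate: binary weight is at least $R$-support size (so $d_Z \geq d_Z^R$), while a minimum-$R$-support codeword $u \in \ker M_1^T$ binarizes to a non-zero codeword of $\ker \mathbf{M}^T$ of binary weight at most $|N| \cdot d^T_{M_1, R}$ and at least $d_M^T$, yielding $d^T_{M_1, R} \geq d_M^T / |N|$. Chaining the inequalities gives the claimed bound $d_Z \geq \frac{1}{|N|} \min(d_A^T, d_B^T, d_C^T)$.

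The hardest step is the $R$-valued distance bound for mixed-sector Z-logicals. Over a field, the standard HGP argument shows that any minimum-weight representative can be reduced modulo stabilizers to one supported on a single sector, which is then identified with a classical codeword of $\ker M_1^T$. Over the group algebra $R = \mathbb{F}_2[N]$ additional care is needed due to possible Tor-term corrections to the Künneth formula when the homologies of $C_A, C_B, C_C$ fail to be free $R$-modules. A direct argument using the commutativity of $A_1, B_1, C_1$ over $R$ together with a combinatorial analysis of supports on the $[l_a] \times [l_b] \times [l_c]$ grid can alternatively be used to sidestep these subtleties.
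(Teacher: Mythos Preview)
Your overall architecture --- reduce to an $R$-valued distance bound for the three-fold HGP over $R = \mathbb{F}_2[N]$, then convert via the binarization $\mathbb{B}_N$ at the cost of the factor $|N|$ --- matches the paper's strategy. The paper likewise works through the non-binary 3D HGP structure of the preceding proposition and deduces the binary bound from an $R$-support bound.

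Where the approaches diverge is precisely at the step you flag as hardest: establishing the $R$-valued bound when $R$ is not a field. You correctly identify the obstruction (Tor corrections in K\"unneth, failure of the field-based single-sector reduction), but your two proposed workarounds --- handling Tor directly, or a ``direct combinatorial analysis of supports on the $[l_a]\times[l_b]\times[l_c]$ grid'' --- remain unspecified. The paper takes neither route. Instead, following Lemma~3 and Statement~10 of \cite{lin2024quantum} together with the decomposition in Appendix~B of \cite{panteleev2021quantum}, it splits $\mathbb{F}_2[N]$ as a direct sum over cyclic group-algebra components and shows that over each cyclic component the matrices $A_1, B_1, C_1$ admit all-unit Smith Normal Forms. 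This reduces the $R$-valued three-fold HGP to a direct sum of HGPs over rings where the SNF essentially recovers the field argument, and the standard HGP distance bound then applies componentwise. Your proposal would close if you supplied this SNF/cyclic-decomposition step (or made the combinatorial alternative precise); as written, the crucial technical step is named but not resolved.
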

\begin{proof}
The proof that $d_Z$ satisfies this inequality follows from the proof of Statement 10 in \cite{lin2024quantum}. In particular, Lemma 3 of \cite{lin2024quantum} also holds for three-dimensional homological products by decomposing them into a direct sum of three-dimensional homological products over cyclic group algebras (analogous to Appendix B. in \cite{panteleev2021quantum}). The matrices $\mathbf{A},\mathbf{B},\mathbf{C}$ are constructed identically, and following the proof of Lemma 3. in \cite{lin2024quantum}, each component code over a cyclic group algebra can be shown to have all-unit Smith Normal Forms under the assumptions of the lemma. The rest of the proof is identical, applied instead to the matrix $H_X$ defined in \cref{eq::parity_check_mats}.

Then, in \cref{thm::dzdx}, we prove that $d_Z \leq d_X$, making \cref{eq:dist_lb} a lower bound for the overall code distance. 
\end{proof}

For disconnected codes formed from $m$ connected subcodes, the bound applies to each component and can be summed. In practice, this lower bound is often loose: the codes we construct typically exceed it by a large margin. It may be possible to improve this bound using techniques from \cite{zeng2019higher} adapted to non-binary rings. 

While the above bound may not be tight, it does imply the existence of a threshold for tricycle codes. This is because classical quasi-Abelian group-algebra codes with parity check matrices of the form $\mathbf{A^T}$, $\mathbf{B^T}$, $\mathbf{C^T}$ are known to include codes that have distances that grow linearly in the block-length \cite{lin2024quantum}. Combined with the above bound, this implies the existence of tricycle codes with distances that are extensive in the block-length with at-least $\sqrt[3]{N}$ scaling, implying the existence of a threshold. 

In the following theorem, we show that tricycle codes satisfy $d_Z \leq d_X$s. This would make \cref{thm::dist_lower_bnd} a lower bound on both the $X$ and $Z$ distances. 

\begin{theorem}
    \label{thm::dzdx}
All connected three-block Abelian group algebra codes defined by \cref{eq::parity_check_mats} satisfy $d_Z \leq d_X$.
\end{theorem}

\begin{proof}
        We will assume that none of $A_1$, $B_1$, $C_1$ are full rank, else the tricycle code is trivial with $k = 0$ and $d = \infty$ by convention. Noting $R \cong \bigoplus_{g \in N} \F$, the map $\mathbbm b_N: R \rightarrow \mathbb F_2{|N|}$ defined by $\mathbbm b_N: a = \sum_g a_g g \mapsto (a_{g_1} \ \cdots \ a_{g_|N|})$ is a vector space isomorphism. With action defined element-wise, it can be extended to $\mathbbm b_N^m: R^m \rightarrow \mathbb F_2^m$; we omit the superscript hereafter. We define the norm $|\cdot|_R$ on $\mathbb R^m$ via $|v|_R = \sum_{i = 1}^{m} |\{g \in G: (v_i)_g = 1\}|$ for any $v \in R^m$. Importantly, $|v|_R = |\mathbbm b_N(v)|$ in general. Let $\mathbb B_N: R \rightarrow \mathbb F_2^{|N| \times |N|}$ be the regular representation of $N$ introduced in \cref{eq::bin_mat_rep}, extended element-wise and linearly to $R^{m \times m}$. For any $A \in R^{m \times m}$ and $v \in R^m$, $\mathbbm b_N(Av) = \mathbb B_N(A) \mathbbm b_N(v)$. 

    Let $x = (x_1 \ x_2 \ x_3) \in R^{l_a l_b l_c}$ denote a minimum weight logical $X$ operator. Define 
        \begin{equation}\label{eq::BB code}
        (\delta^{-1})^T := \begin{pmatrix}
            A_1^T \otimes_R I_{l_b} & I_{l_a} \otimes_R B_1^T
        \end{pmatrix} \hspace{20pt} \delta^0 := \begin{pmatrix}
            I_{l_a} \otimes_R B_1 & A_1 \otimes_R I_{l_b}
        \end{pmatrix}.
    \end{equation}
    The CSS code with $R$-valued $X/Z$ parity check matrices $(\delta^{-1})^T/\delta^0$ is a bicycle code. $H_Zx = 0$ implies $(x_1 \ x_2) \in \ker(\delta^0 \otimes_R I_{l_c}) = (\ker \delta^0) \otimes_{R} R^{l_c} \cong (\ker \delta^0)^{\oplus l_c}$. Thus, one can express $(x_1 \ x_2) = \bigoplus_{i = 1}^{l_c} \tilde x_i$, $\tilde x_i \in \ker \delta^0$. Suppose that for at least one $i^* \in [1, l_c]$, $\tilde x_{i^*} \notin \im \delta^{-1}$. Then, $\tilde x_{i^*}$ is a nontrivial logical $X$ operator of this bicycle code; there exists some $\tilde z \in \ker (\delta^{-1})^T \setminus \im \delta^{0T}$ with $|\mathbbm b_N(\tilde z)| \leq |\mathbbm b_N(\tilde x)|$ (Lemma 1 of Ref. \cite{bravyi2024high}). We will lift $\bar z$ to an equal-weight logical $Z$ operator of the tricycle code.

    Expressing $H_X/H_Z$ in terms of $(\delta^{-1})^T/\delta^0$,
    \begin{equation}
        H_X = \begin{pmatrix}
            (\delta^{-1})^T \otimes_R I_{l_c} &  I_{l_a l_b} \otimes_R C_1^T
        \end{pmatrix} \hspace{20pt} H_Z = \begin{pmatrix}
            \delta^0 \otimes_R I_{l_c} & 0 \\
            I_{2 l_a l_b} \otimes_R C_1 & \delta^{-1} \otimes_R I_{l_c}
        \end{pmatrix}.
    \end{equation}
    From the Künneth theorem for $R$-module cohomology (see section \ref{app:products} of the supplementary material and \cite{loormobius}),
\begin{equation}\label{eq: Z homology}
\frac{\ker H_X}{\im H_Z^T} \cong \left(\frac{R^{l_a}}{\im (\delta^{-1})^T} \otimes_R \ker C_1^T
\right) \oplus \left(\frac{\ker (\delta^{-1})^T}{\im \delta^{0T}} \otimes_R \frac{R^{l_c}}{\im C_1^T}\right) .
\end{equation}
Since $R^{l_c} = \im C_1^T + R^{l_c}/\im C_1^T$ is spanned by a basis of unit vectors, and $\im C_1^T \subsetneq R^{l_c}$ by assumption, there exists some unit vector $\hat e$, $|\hat e|_R = 1$, such that $z = (0 \ \tilde z \otimes_R \hat e)$ is a nontrivial logical $Z$ operator. For any $g \in G, \ r \in R$, $|r|_R = |gr|_R$. Therefore, assuming $\tilde x_{i^*} \in \ker \delta^0 \setminus \im \delta^{-1}$ exists, $d_Z \leq |\mathbbm b_N(z)| = |z|_R = |\tilde z|_R \leq |\tilde x|_R \leq |x|_R = \mathbbm b_N(x) = d_X$. 

Repeating the analysis for the other two ways to express the tricycle code as a $R$-homological product of a bicycle code and a classical group-algebra code, the above analysis is only incomplete in the case
\begin{equation} \label{eq: Bicycle code stabilizers}
    x \in \im \begin{pmatrix}
        T_A & 0 \\
        0 & I_{l_a l_b l_c} \\
        T_C & 0
    \end{pmatrix} \cap \im \begin{pmatrix}
        T_A & 0 \\
        T_B & 0\\
        0 & I_{l_a l_b l_c} 
    \end{pmatrix} \cap \im \begin{pmatrix}
        0 & I_{l_a l_b l_c} \\
        T_B & 0 \\
        T_C & 0
    \end{pmatrix},
\end{equation}
where $T_A = A_1 \otimes_R I_{l_b} \otimes_R I_{l_c}$, $T_B = I_{l_a} \otimes_R B_1 \otimes_R I_{l_c}$, and $T_C = I_{l_a} \otimes_R I_{l_b} \otimes_R C_1$. Such an $x$ exists iff there exist $u_1, u_2, u_3 \in R^{2 l_a l_b l_c}$ such that $x_1 = T_A u_1  = T_A u_2$, $x_2 = T_B u_2 = T_B u_3$, and $x_3 = T_C u_1 = T_C u_3$, yet
\begin{equation}\label{eq: Nontrivial on tricycle code}
    x \notin \im H_X^T = \im \begin{pmatrix}
        T_A \\ T_B \\ T_C
    \end{pmatrix}.
\end{equation}
Let $v_A = u_1 + u_2$, $v_B = u_2 + u_3$, so $u_1 + u_3 = v_A + v_B$. It follows that $v_A \in \ker T_A, v_B \in \ker T_B, v_A + v_B \in \ker T_C$. Expressing
\begin{equation}
x = \begin{pmatrix}
    T_A u_1 \\
    T_B u_2 \\
    T_C u_3
\end{pmatrix} = \begin{pmatrix}
    T_A u_1 \\
    T_B(u_1 + v_A) \\
    T_C(u_1 + v_A + v_B)
\end{pmatrix} = \begin{pmatrix}
    T_A u_1 \\
    T_B(u_1 + v_A) \\
    T_Cu_1 
\end{pmatrix} = H_X^Tu_1 + \begin{pmatrix}
    0 \\ T_B v_A \\
    0
\end{pmatrix},
\end{equation}
a solution for $x$ exists iff $(0 \ T_B v_A \ 0)$ is a nontrivial logical $X$ operator, which requires $v_A \notin \ker T_B$. Note $v_A \in \ker T_A \cap (\ker T_B+ \ker T_C)$. Now we use the following lemma.

\begin{lemma}
    In this setting, the intersection can be distributed over subspace addition.
        \begin{equation}
        \ker T_A \cap (\ker T_B + \ker T_C) = (\ker T_A \cap \ker T_B) + (\ker T_A \cap \ker T_C)
    \end{equation}
\end{lemma}
\begin{proof}
    The reverse inclusion is generally true. For the forward direction, define the canonical projection $\Pi: R^{l_a l_b l_c} \rightarrow \ker T_A$. Any $v \in \ker T_A \cap (\ker T_B + \ker T_C)$ satisfies $\Pi v = v$ and can be written as $v= t_B + t_C$, where $t_B \in \ker T_B$ and $t_c \in \ker T_C$. Since $\Pi$ factorizes as $\Pi=  \pi \otimes_R I_{l_b} \otimes_R I_{l_c}$, where $\pi: R^{l_a} \rightarrow \ker A_1$, $\Pi t_B \in \ker T_B$ and $\Pi t_C \in \ker T_C$. By definition, $\Pi t_B, \Pi t_C \in \ker T_A$.
\end{proof}
Thus, $v_A \in (\ker T_A \cap \ker T_B) + (\ker T_A \cap \ker T_C) \subseteq \ker T_B + (\ker T_A \cap \ker T_C)$. If we write $v_A \in t_B + t_{AC}$, where $t_B \in \ker T_B$, $t_{AC} \in \ker T_A \cap \ker T_C$, then, observe that $(0 \ T_B v_A \ 0) = H_X^T t_{AC}$ and is thus logically trivial. This contradicts \cref{eq: Nontrivial on tricycle code}, so \cref{eq: Bicycle code stabilizers} cannot hold. \end{proof}

\begin{corollary}\label{cor: inequality for disconnected codes}
    \cref{thm::dzdx} extends to disconnected codes.
\end{corollary}
\begin{proof}
    Any disconnected tricycle code is a direct sum of connected tricycle codes. \cref{thm::dzdx} can then be applied to the component with the lowest weight logical $X$ operator. 
\end{proof}

Next, we explain how tricycle codes arise as the \textit{balanced product} of classical group-algebra codes. This construction builds on the homological formulation of CSS codes and the balanced product construction from Refs.~\cite{breuckmann2021balanced, panteleev2022asymptotically, panteleev2021quantum, breuckmann2024cups}, which we assume readers are familiar with in the following discussion -- a review is provided in the supplementary material.

A classical group-algebra code is defined by an element $a \in \FG$, with parity check matrix $H = \mathbb{B}_G(a)$, where $\mathbb{B}_G$ is the binarization map from \cref{eq::bin_mat_rep}. This corresponds to the 2-term cochain complex
\begin{equation}\label{eq:classical_complex}
\begin{tikzcd}
\FG \arrow[r,"a"] & \FG
\end{tikzcd}
\end{equation}
where the coboundary map is given by multiplication by $a$ in the algebra. The bits of the classical code are associated with the $1$-cochains (right) and the checks are associated with the $0$-cochains (left).

Given two such complexes defined by elements $a, b \in \FG$, their balanced product results in the 3-term complex
\begin{equation}\label{eq:bicycle_complex}
\begin{tikzcd}
R \arrow[r,"\spmat{a \\ b}"] & R^2 \arrow[r,"\spmat{b & a}"] & R
\end{tikzcd}
\end{equation}
where $R = \FG$ for notational convenience. Binarizing this complex yields the Abelian bicycle code, with $H_X^T = \begin{bmatrix} \mathbb{B}(a)^T & \mathbb{B}(b)^T \end{bmatrix}$ and $H_Z = \begin{bmatrix} \mathbb{B}(b) & \mathbb{B}(a) \end{bmatrix}$ \cite{lin2024quantum, kovalev2013quantum}, where the $0$-cochains are $X$-checks, $1$-cochains are qubits, and $2$-cochains are Z-checks (with 0,1,2 annotating the spaces in the complex from left to right). This complex arises naturally from the balanced product using the isomorphism $R \otimes_R R \cong R$; see \cite{eberhardt2024logical, breuckmann2021balanced, panteleev2022asymptotically} and the supplementary material.

Tricycle codes generalize this by iterating the balanced product once again with a third classical group-algebra code defined by $c \in \FG$, giving the 4-term cochain complex:
\begin{equation}\label{eq:tricycle_complex}
\begin{tikzcd}
\FG \arrow[r,"\spmat{a \\ b \\ c}"] & \FG^3 \arrow[r,"\spmat{c & 0 & a \\ 0 & c & b \\ b & a & 0}"] & \FG^3 \arrow[r,"\spmat{b & a & c}"] & \FG
\end{tikzcd}
\end{equation}
This again follows from expanding the terms of the balanced product complex and applying $R \otimes_R R \cong R$.

The quantum CSS code associated with the tricycle construction is defined by the first three terms of \cref{eq:tricycle_complex}, with consecutive spaces representing $X$-checks, qubits, and $Z$-checks respectively. Binarizing the coboundary maps yields the parity check matrices in \cref{eq::parity_check_mats}, and the final map defines the metachecks in \cref{eq:metachecks}. This complex appears in the appendix of \cite{eberhardt2024logical}, and tricycle codes of this form are also briefly noted in \cite{breuckmann2024cups}.

This homological viewpoint is especially useful for constructing transversal logical gates. In the next section, we apply the formalism of \cite{breuckmann2024cups} to equip the complex \cref{eq:tricycle_complex} with an algebraic structure that enables transversal $CCZ$ circuits.

\section{Single-shot guarantees}\label{sec:soundness}
We prove that tricycle codes possess soundness properties in one basis, which in turn guarantee some fault-tolerant single-shot state preparation and error correction capabilities under adversarial noise. Our proofs heavily reference Campbell's results on the soundness of 4D homological product codes in Ref. \cite{campbell2019theory}.

\begin{definition} \textnormal{(Soundness)}:
Let $t \in \mathbb{Z}_+$ and $f: \mathbb{Z} \rightarrow \mathbb R$ be a function with $f(0) = 0$. For some ring $R$ and norm $|\cdot|$ on $R^n$, we say that a parity check matrix $H \in R^{m \times n}$ is a $(t,f)$-sound map if, for any $v \in R^n$ such that $|Hv| < t$, there exists some $v' \in R^n$ such that $Hv = Hv'$ and $|v|' \leq f(|Hv|)$.
\end{definition}

To begin, we consider homological product codes formed from classical seed codes over $\mathbb F_{2^k}$ for some $k > 1$. Let $|\cdot|$ denote the Hamming weight: for $v \in \mathbb{F}_{2^k}^n$, $|v| = |\{i: v_i \neq 0\}|$. A similar Hamming weight applies over $R$. For a classical code over $\mathbb F_{2^k}$-vector spaces, we define its distance in the usual way with respect to the Hamming weight.

\begin{lemma} \label{lem: First soundness} \textnormal{(Extension to the first soundness lemma, \cite{campbell2019theory}):} Let $\mathcal{C} = C^0 \xrightarrow{\delta^0} C^1$ be a cochain complex over $\mathbb F_{2^k}$-vector spaces for some $k > 1$ corresponding to a code over $l$ dits. Denote the Hamming distance of $\mathcal C$ by $d_H(\delta^0) = d_0$ and the transpose code by $d_0^T$. Let $\tilde{\mathcal C} = \mathcal{C} \otimes_{\mathbb F_{2^k}}\mathcal{C}$, yielding the $3$-term complex $\tilde{\mathcal C} = \tilde C_{-1} \xrightarrow{\tilde \delta^{-1}} \tilde C^0 \xrightarrow{\tilde \delta^0} \tilde C^1$, where 
    \begin{equation}
        \tilde\delta^{0T} = \begin{pmatrix}
            \delta^{0T} \otimes_{\mathbb F_{2^k}} I \\
            I \otimes_{\mathbb F_{2^k}} \delta^0
        \end{pmatrix}, \qquad\ \tilde \delta^{-1} = \begin{pmatrix}
            I \otimes_{\mathbb F_{2^k}} \delta^{0T} \\
            \delta^0 \otimes_{\mathbb F_{2^k}} I
        \end{pmatrix}.
    \end{equation}
    Then, $\tilde \delta^{0T}, \ \tilde \delta^{-1}$ are $(t, f)$-sound, where $t = \min\{d_0, d_0^T\}$ and $f(x) = x^2/4$. 
\end{lemma}
    We provide a self-contained proof of this lemma, which mostly mirrors Appendix D in Ref. \cite{campbell2019theory}, for illustration. In later adaptations, we only discuss necessary modifications to results from Ref. \cite{campbell2019theory}.
\begin{proof}
We focus on $\tilde \delta^{0T}$; the proof sketch for $\tilde \delta^{-1}$ is similar. Let $s \in \im \tilde \delta^{0T}$ with $|s| < t$. $v \in \mathbb F_{2^k}^n$, $s = \tilde \delta^{0T} v$. Letting $s = (s_L \ s_R)$ and reshaping $\mathbb{F}_{2^k}^l \otimes_{\mathbb F_{2^k}} \mathbb F_{2^k}^l \rightarrow \mathbb F_{2^k}^{l \times l}$ yields the equations $S_L = \delta^{0T} V$ and $S_R = V \delta^{0T}$. For any $A \in \mathbb F_{2^k}^{l \times l}$, let $\textnormal{rsp}(A) \subseteq \{1, \dots, n\}$ denote the set of indices corresponding to nonzero rows of $A$, and likewise with $\textnormal{csp}(A)$ for columns.

\begin{lemma}\label{lem: Reduction}
    There exists some $w \in \mathbb  F_{2^k}^l$ such that $s = \tilde \delta^{0T} w$, $|\textnormal{csp}(S_L)| = |\textnormal{csp}(W)|$, and $|\textnormal{rsp}(S_R)| = |\textnormal{csp}(W)|$. 
\end{lemma}
\begin{proof}
    At least one $w$ satisfying the syndrome equation must exist. For any such $w$, $\textnormal{csp}(S_L) \subseteq \textnormal{csp}(W)$ and $\textnormal{rsp}(S_R) \subseteq \textnormal{rsp}(W)$, $|\textnormal{csp}(S_L)| \leq |\textnormal{csp}(W)|$ and $|\textnormal{rsp}(S_R)| \leq |\textnormal{rsp}(W)|$. The $\geq$ direction is more difficult to prove. For concreteness, assume $|\textnormal{csp}(S_L)| < |\textnormal{csp}(V)|$ and $|\textnormal{rsp}(S_R)| < |\textnormal{rsp}(V)|$. Then, there exist column and row vectors $a$ and $b^T$ in $V$ such that $a \in \ker \delta^{0T}$ and $b \in \ker \delta^0$. $b^T$ is the $i$th row of $V$, which we denote by $V_{i, \cdot} = b^T$, and $a$ is the $j$th column: $V_{\cdot, j} = a$. If $V_{ij} \neq 0$, then $V' = V + V_{ij}^{-1} ab^T$ has strictly smaller support size: $V'_{\cdot, j} = V_{\cdot, j} + V_{ij}^{-1} (a b^T)_{\cdot, j} = 0$ since $(a b^T)_{\cdot, j} = b^T_j a = V_{ij} a$. Identically, $V'_{i, \cdot} = 0$, and no new rows or columns can pick up support. Thus, we update $V \gets V'$ and repeat until the algorithm stalls, which is when there are no rows and columns $b^T$ and $a$ such that $V_{ij} \neq 0$. Up to row and column permutations, this leaves $V$ in the form
    \begin{equation}
        V = \begin{pmatrix}
            D & B & 0 \\
            A & C & 0  \\
            0 & 0 & 0
        \end{pmatrix}
    \end{equation}
    where columns in the left column block contain $a \in \ker \delta^{0T}$, and those in the middle column block satisfy $\delta^{0T}a \neq 0$. Similarly, rows $b^T$ in the top row block satisfy $b \in \ker \delta^0$, whereas for those in the middle block, $\delta^0b \neq 0$. Since no row-column pairs intersect, $D=0$. If $A$ is nonzero, any nonzero column $a \in \ker \delta^{0T}$ satisfies $|a|\geq d_0^T$, $|\text{rsp}(S_R)| \geq |c| \geq d_0^T$, which is a contradiction. Thus, $A = 0$, and similarly, $B = 0$.
\end{proof}

Since $|s|  = |S_L| + |S_R| \geq |\text{csp}(S_L)| + |\text{rsp}(S_R)|$, the general identity $(a + b)^2/4 \geq ab$ for integers $a, b$ yields
\begin{equation}
    |s|^2/4 \geq |\text{csp}(S_L)||\text{rsp}(S_R)| = |\text{csp}(V)||\text{rsp}(V)| \geq |V|.
\end{equation}
\end{proof}

Similarly, we have the following result.
\begin{lemma} \label{lem: Second soundness} \textnormal{(Extension to the second soundness lemma, \cite{campbell2019theory})}:
    Let $\tilde C^{-1} \xrightarrow{\tilde \delta^{-1}} \tilde C^0 \xrightarrow{\tilde \delta^0} \tilde C^1$ be a cochain complex over $\mathbb{F}_{2^{k_i}}$-vector spaces where $\tilde \delta^{0T}$, $\tilde \delta^{-1}$ are $(t, f)$-sound with $f(x) = x^2/4$. Then, $\breve{\mathcal C} = \tilde{\mathcal C} \otimes_{\mathbb F_{2^k}} \tilde{\mathcal C}$ is the 4-complex 
    \begin{equation} \label{eq: 4-term} \breve{\mathcal C} =\breve{C}^{-2} \xrightarrow{\breve{\delta}^{-2}} \breve{C}^{-1} \xrightarrow{\breve{\delta}^{-1}} \breve{C}^{0} \xrightarrow{\breve{\delta}^{0}} \breve{C}^{1} \xrightarrow{\breve{\delta}^{1}} \breve{C}^{2}
    \end{equation}
    where 
\vspace{-11pt}
\begin{gather}
\check{\delta}^{-2} = \begin{pmatrix}
I \otimes_{\mathbb F_{2^k}} \tilde{\delta}_0^T \\
\tilde{\delta}_{-1} \otimes_{\mathbb F_{2^k}} I
\end{pmatrix}, \qquad
\check{\delta}^{-1} = \begin{pmatrix}
I \otimes_{\mathbb F_{2^k}} \tilde{\delta}_{-1}^T & 0 \\
\tilde{\delta}_{-1} \otimes_{\mathbb F_{2^k}} I & I \otimes_{\mathbb F_{2^k}} \tilde{\delta}_0^T \\
0 & \tilde{\delta}_0 \otimes_{\mathbb F_{2^k}} I
\end{pmatrix}, \\
\check{\delta}^{0T} = \begin{pmatrix}
\tilde{\delta}_{-1}^T \otimes_{\mathbb F_{2^k}} I & 0 \\
I \otimes_{\mathbb F_{2^k}} \tilde{\delta}_{-1} & \tilde{\delta}_0^T \otimes_{\mathbb F_{2^k}} I \\
0 & I \otimes_{\mathbb F_{2^k}} \tilde{\delta}_0
\end{pmatrix}, \qquad
\check{\delta}^{1T} = \begin{pmatrix}
\tilde{\delta}_0^T \otimes_{\mathbb F_{2^k}} I \\
I \otimes_{\mathbb F_{2^k}} \tilde{\delta}_{-1}
\end{pmatrix}.
\end{gather}
The maps $\breve \delta^0$ and $(\breve \delta^{-1})^T$ are $(t, g)$-sound, with $g(x) = x^3/4$. 
\end{lemma}
\begin{proof}
    For $k=1$, the proof of this result is Lemma 6 in \cite{campbell2019theory}, which in turn relies on ``partial soundness'' and ``inheritance of soundness'' results, which are Lemma 7 and Claim 3 there, respectively. For $k > 2$, the proofs of all three statements can be adapted with a small modification, just like \cref{lem: Reduction}: during the reduction steps $V \gets V + ab^T$ is modified into $V \gets V + V_{ij}^{-1} ab^T$. The $\mathbb F_2$ case is simply the special case where $V_{ij}^{-1} = 1$ given $a$ and $b^T$ intersect. Still working within the Hamming norm, the remainder of the analysis extends. 
\end{proof}

We can invoke these to demonstrate soundness properties for some connected 4D balanced product codes. 
\begin{theorem}\label{thm: 4D RHPC Soundness}
    Let $\breve{\mathcal C}$ denote a 4-term cochain complex \cref{eq: 4-term} over $R$-modules, built from the classical seed code $\mathcal C = C^0 \xrightarrow{\delta^0} C^1$ by repeated applications of the $R$-valued homological product, where $R = \mathbb F_2[N]$. Assume $|N|$ is odd. Let $d_0$ denote the distance of the binarized classical code $\mathbbm b_N(C^0) \xrightarrow{\mathbb{B}_N(\delta^0)} \mathbbm b_N(C^1)$. Then, $\breve C$ has $(t, f)$-soundness, where $t = d_0/|N|$, and $g(x) = \frac{|N|+1}{8}x^3$
\end{theorem}
\begin{proof}
Since $|N|$ is odd, $|N| \nmid \text{char}(\F) = 2$, so by Maschke's theorem, $R$ is semisimple. From the Wedderburn-Artin theorem, it can be shown that $
    R \cong \mathbb{F}_{2^{k_1}} \times \cdots \times \mathbb{F}_{2^{k_m}}$ \cite{panteleev2021quantum}. The explicit form of the isomorphism is $\psi: r \mapsto (r \cdot e_1, r \cdot e_2, \dots, r \cdot e_m)$ where $\{e_1, \dots, e_m\}$ is the set of primitive orthogonal idempotents, which also satisfy a completeness relation $\sum_i e_i = 1_R = 1_G$. The isomorphism can be extended to matrix rings $R^{m \times n} \cong \bigoplus_i \mathbb F_{2^{k_i}}^{m \times n}$. Let $A \in R^{n \times m}$ and $B \in R^{m \times p}$. Then, $AB = \bigoplus_{i, j} Ae_i B e_j = \bigoplus_{i} A_{(k_i)} B_{(k_i)}$, and, identically, $A \otimes_R B = \bigoplus_{i} [A_{(k_i)} \otimes_{\mathbb{F}_{2^{k_i}}} B_{(k_i)}]$. Therefore, $\mathcal C \simeq \bigoplus_i \mathcal{C}_{(k_i)}$, where $\mathcal C_{k_i}$ is a $\mathbb F_{2^{k_i}}$-code, with the  parity check matrix decomposed as $\delta^0 \simeq \bigoplus_i \delta^0_{(k_i)}$. 

Note that for any $v \in R^m$, $|v|_R \leq |N||v|$. Thus, $d_H(\delta^{0}_{(k_i)}) \geq d_H(\delta^0) \geq d_R(\delta^0)/|N| = d_H(\mathbb B_N(\delta_0))/|N|$ provides a lower bound on the Hamming distance of each $\delta^0_{(k_i)}$. Since the tensor product over $R$ splits component-wise, one can see the $R$-homological product codes also decompose component-wise, e.g., $\breve{\mathcal C} = \bigoplus_i[\tilde{\mathcal{C}}_{(k_i)} \otimes_{\mathbb{F}_{2^{k_i}}} \tilde{\mathcal{C}}_{(k_i)}]$. We apply lemmas \ref{lem: First soundness} and \ref{lem: Second soundness} to each classical code component $\mathcal{C}_{(k_i)}$, obtaining that every $\breve \delta^0_{(k_i)}$ (and $(\breve \delta^{-1}_{(k_i)})^T$) are at least $(d_H(\delta^0_{(k_i)}), g)$-sound. Let $s \in \im \tilde \delta_0$ decompose as $s = \bigoplus_i s_{(k_i)}$. Since $\psi$ is any isomorphism, so $\ker \psi^{-1} = \{0\}$, $|s_{(k_i)}|_H \leq |s|_H$. If $|s|_R \leq d_H(\mathbb B_N(\delta_0))/|N|$, then it follows by soundness that for each $i$, there exists $v_{(k_i)} \in \mathbb{F}_{2^{k_i}}^{6l^4}$ such that $s_{(k_i)} = \breve \delta^0_{(k_i)} v_{(k_i)}$, and $|v_{(k_i)}| \leq |s_{(k_i)}|^3/4$. This gives $|v|_R \leq m |v_{(k_i)}|_H \leq m|s|_H^3/4 \leq  m|s|^3_R/4 = m|\mathbbm b_N(s)|^3/4$. Finally, since $m$ is the number of equivalence classes of integer pairs $(x, y)$ such that $y = 2^{k} x \bmod |N|$ for some integer $k$ (formally known as the number of 2-cyclotomic cosets modulo $|N|$), it has the upper bound $m \leq (|N|+1)/2$. 
\end{proof}

\begin{corollary} \label{thm: tricycle soundness} \textnormal{(Soundness of tricycle codes)}: Consider any connected tricycle code with parameters $[[n, k, (d_Z, d_X)]]$, which can be expressed as the balanced product of binary classical codes $\mathcal C_1, \mathcal C_2$, and $\mathcal C_3$. There exists a decoder such that for any pattern of measurement errors $u \in \mathbb{F}_2^n$ satisfying $|u| \leq \frac{1}{2|N|}\min_i d_i$ and data errors $v \in \mathbb F_2^n$ satisfying $(|N|+1)|u| + |v| < d_Z/2$, the decoder takes in the syndrome $H_Z v + u$ and outputs $\hat v$ such that, modulo stabilizers, $|(\hat v + v) \bmod 2|$ leaves a residual error of size at most $(|N|+1)|u|$. 
\end{corollary}
\begin{proof}
    First, note that it is easy to extend \cref{thm: 4D RHPC Soundness} to when the classical seed codes are different, though this generally results in the bases having different soundness parameters. Then, recognizing that a tricycle code is just a 4D $R$-homological product code in which one of the seed codes is a degree-0 chain complex \cite{hong2024single}, one can check that the unaffected basis remains sound with the claimed $t = \min_i d_i$ and the claimed $g$. As mentioned in \cref{sec::single_shot}, one can modify a minimum-weight decoder to decode the metacheck matrix $H_{\text{meta}}$ in such a way that $d_{ss} = \infty$. Then, combining Theorem 1 of \cite{campbell2019theory}, \cref{thm::dzdx}, and \cref{thm: 4D RHPC Soundness}, we obtain the corollary.
\end{proof}

As with \cref{cor: inequality for disconnected codes}, this result can be readily extended to disconnected codes. Even though \cref{thm: tricycle soundness} only applies to codes where $|N|$ is odd, we expect, based on our numerics in \cref{sec::single_shot} and \cref{sec::noise_sim} that codes with even $|N|$ are sound. In fact, many of our codes in  \cref{table::code_polys} are connected with even-order intersection subgroups. 

\section{Determining code distances}\label{sec::distance_finding}

We use three distinct methods to either find the exact distances or determine high statistical confidence estimates of the distances of tricycle codes. 

The first method is a Mixed-Integer-Program (MIP). This involves first using Gaussian row-reduction over $\F$ to find a basis of logical $X$ and $Z$ operators $L_X$ and $L_Z$ independently for a given code. One then formulates the low-weight logical operator search as a mixed-integer-program by iterating over the basis of logical operators. For example, to find $d_X$, for the $i^{th}$ logical operator $l_i \in L_Z$, we minimize the hamming weight of candidate bitstring solutions $b$ under the constraints $H_Z b = 0 \ \mathrm{mod}(2)$ and $b^T l_i = 1 \ \mathrm{mod}(2)$ where the latter constraint enforces that $b$ corresponds to an $X$-type logical operator that anticommutes with the $i^{th}$ logical $Z$ operator. The minimum weight logical operator is then found after iterating over all $l_i$ if the solver converges within time constraints. We used the Gurobi \cite{gurobi} package in Python for this task. 

We also sometimes use the shortest-error search method provided in the STIM package \cite{gidney2021stim} combined with the PySat package \cite{imms-sat18} to directly formulate the problem of finding code distances as a Boolean constraint satisfaction (SAT) problem. For certain codes, we found that this method converged while the previous MIP method times out within time constraints, or vice-versa.

The MIP and SAT methods are not readily useful for the larger codes we consider in this work, and sometimes do not converge in one basis even for smaller codes, as in general, finding the exact distance of a linear code is known to be $NP$-Hard. In these cases, we estimate the minimum distance of a CSS quantum code using a Monte Carlo algorithm that performs structured sampling over the space of low-weight logical operators. The algorithm is based on the information-set algorithm of \cite{leon1988probabilistic} for classical linear codes, with the additional step that one has to remove low-weight codewords belonging to the row-space of the opposite parity check matrix. We note that the algorithm we use is essentially identical to that of the QDistRnd package \cite{pryadko2023qdistrnd} that is provided in the GAP computational algebra programming language. We developed an independent implementation in the python programming language. We briefly summarize the algorithm as well as the empirical probability of success estimates derived for the QDistRnd algorithm.

We discuss how to estimate $d_Z$, with the analysis for $d_X$ being identical with the parity check matrices swapped. The objective is to find a nonzero vector \( c \in \ker(H_X) \) such that \( c \notin \text{rowspace}(H_Z) \), with minimal Hamming weight. The algorithm proceeds by first computing a basis \( G \) for the nullspace \( \ker(H_X) \) using row-reduction over $\F$. In each Monte Carlo trial, a random column permutation \( \pi \) is applied to the columns of \( G \), followed by row reduction (forward elimination) to produce a low-weight generating set for \( \ker(H_X) \). The inverse permutation \( \pi^{-1} \) is then applied to return to the original coordinate system. Rows of the resulting matrix that are linearly independent of the rows of \( H_Z \) are retained as candidate logical operators -- this is once again determined using efficient row-reduction over $\F$. The smallest observed weight among all valid logical operators encountered over many trials is reported as the estimated distance.

To assess the reliability of the distance estimate, we track the number of rediscoveries of each distinct minimum-weight logical operator over \( N \) trials. Let \( m \) denote the number of distinct minimum-weight vectors found, and let \( n_i \) be the number of times the \( i \)-th one is observed. We define the average rediscovery count as
\[
\langle n \rangle = \frac{1}{m} \sum_{i=1}^m n_i.
\]
Assuming that each trial samples a minimum-weight codeword independently and with equal probability \( \lambda > 0 \), the number of times each such codeword is found follows a Poisson distribution with mean \( N \lambda \). Under this model, the probability that \emph{no} minimum-weight codeword is discovered in \( N \) trials is at most
\[
P_{\mathrm{fail}} \leq \exp(-\langle n \rangle),
\]
where \( \langle n \rangle \approx N \lambda \). This gives a conservative upper bound on the probability of missing the true minimum distance. In particular, when \( \langle n \rangle \gg 1 \), the failure probability becomes exponentially small.

To further evaluate sampling quality, we test whether the distribution of rediscovery counts \( \{n_i\} \) is consistent with a uniform multinomial distribution. If each minimum-weight codeword is equally likely to be sampled, the frequencies \( \{n_i\} \) should follow a multinomial distribution with equal probabilities. Deviations from this uniformity are detected using a Pearson chi-squared test. A small associated p-value indicates uneven sampling, suggesting either bias in the heuristic or incomplete exploration of the codeword space. In the main text, we only report the distance found from this method as exact if the returned confidence $1-P_{fail} > 0.99$ and if the p-value associated with the chi-squared test is $>0.1$. To ensure reliability of the hypothesis test, we also only infer a p-value from the chi-squared test if at least $2$ distinct minimum weight code-words are found, and if at least $5$ occurrences of each codeword are found. In all other cases, we report the estimated distances as an upper bound, although the noise simulations in \cref{sec::noise_sim} show that noise suppression is compatible with code distances that are equal or near these estimates.

\section{Symmetric triple cup-product and transversal CCZ action}\label{sec::cup}

Now we discuss how we derive the transversal $CCZ$ circuits discussed in \cref{sec::transversal_CCZ} from the cup-product logical gate formalism of \cite{breuckmann2024cups}. 

First, we introduce the necessary formalism at a high level. A cup-product is an algebraic structure that can be defined on some cochain complexes, such as the one in \cref{eq:tricycle_complex}. Let us label the cochain spaces of the complex in \cref{eq:tricycle_complex} as $0,1,2,3$-cochains from left to right, indicated with the notation $C^0$, $C^1$, $C^2$, and $C^3$ respectively. Furthermore, we are interested in the case where the $C^i$ are finite-dimensional vector spaces over $\F$ with bases $X^i$. We define the weight $|c|$ of a cochain $c \in C^i$ as the number of non-zero terms in its basis expansion over $X^i$. For example, for the tricycle code complex in \cref{eq:tricycle_complex}, the bases $X^i$ are defined by $G$, and the cochain weight coincides with the weight of group-algebra elements. We will always associate qubits with the $1-$cochains of a complex, with $0-$ and $2-$ cochains corresponding to spaces of $X$ and $Z$ checks respectively. Thus, we will label the elements of the basis of $1-$cochains $X^1$ by qubit labels $q_1 \cdots q_N$ for the $N$ qubits of a CSS quantum code. 

A cup-product is then a bilinear map 

\begin{equation}\label{eq:cup_def}
    \cup: C^i \times C^j \rightarrow C^{i+j}
\end{equation}

with the convention that $C^{i+j} = 0$ if $i+j > 3$ in the case of the tricycle complex. Not every such bilinear map defines a cup-product; the map $\cup$ must satisfy an additional technical condition known as a Leibniz rule -- see \cite{MR1867354, breuckmann2024cups, zhu2025topological}. However, given such a cup-product, we can construct a circuit of diagonal physical gates that preserves the code-space of the CSS code associated with the cochain complex. In general, a cup-product on an $m-$term cochain complex will yield a physical circuit of diagonal $C^{m-2}Z$ gates from the $m-1$-th level of the Clifford hierarchy \cite{breuckmann2024cups}.  In the situation of interest, we have the $4-$term cochain complex in \cref{eq:tricycle_complex}, leading to a circuit of $CCZ$ gates. As discussed in \cref{sec::transversal_CCZ}, such a circuit can be specified by a trilinear function $f$ acting on the space of physical qubits -- the $1-$cochains $C^1$ in our convention. The circuit associated with the cup-product is then defined by the function

\begin{equation}\label{eq:cup_circuit}
    f_{\cup}(q_i, q_j, q_k) = |(q_i\cup q_j) \cup q_k| \ \mathrm{mod} \ 2\in \F
\end{equation}

where non-zero values indicate a physical $CCZ$ gate acting between the qubits of three distinct code-blocks labeled by those arguments of $f_\cup$. The $|\cdot|$ here is the size of the support of the argument when considered as a binary vector with respect to the $X^i$ bases. Circuits defined by such functions constructed from cup-products of finitely many $1-$cochains (three, in this case) lead to sparse, finite depth circuits \cite{breuckmann2024cups, lin2024transversal}. Moreover, such functions $f_\cup$ naturally have the property that they vanish on $1-$coboundaries: $f_{\cup}$ vanishes if any of its three arguments is of the form $d^0(c)$ for $c\in C^0$, where $d^0$ is the coboundary map from $C^0$ to $C^1$ -- equivalent to the matrix $H_X^T$. This condition is thus equivalent to $f_{\cup}$ vanishing when any one of its arguments corresponds to to an $X-$type stabilizer while the other two arguments correspond to non-trivial $X-$type logical operators (see \cref{sec::transversal_CCZ})  and is called \textit{coboundary} invariance. The coboundary invariance property is essential to ensure that the circuit defined by $f_{\cup}$ descends to a well-defined logical operation \cite{lin2024transversal, breuckmann2024cups} -- intuitively, this is so that the action of $f_{CCZ}$ on logical operators does not change upon adding stabilizers to the logical operators. 

The formalism of \cite{breuckmann2024cups} constructs cup-products on classical $2$-term cochain complexes which satisfy certain sufficient conditions such that homological and balanced product complexes constructed from them (see supplementary material) are also equipped with an appropriate cup-product. This yields a trilinear function of the form in \cref{eq:cup_circuit} on the product complexes associated with quantum CSS codes. Since tricycle codes are balanced products of classical Abelian group-algebra codes, these methods readily apply, which we review here. We will now review the key elements of this formalism.

To equip a two-term cochain complex $\begin{tikzcd}
C^\bullet: C^0 \arrow[r, "\delta"] & C^1
\end{tikzcd}$ corresponding to a classical linear code with a cup-product, one must construct partitions of the $1-$coboundaries $B^1$ (see supplementary material), known as \textit{pre-orientations}. For any $a \in C^0$, we consider partitions 

\begin{equation}\label{eq:preorientations}
    \delta(a)=\delta_{in}(a) \sqcup \delta_{out}(a) \sqcup \delta_{free}(a)
\end{equation}

where $\sqcup$ denotes a disjoint union and we identify an element $x \in C^\bullet$ with its support as binary vector. The corresponding partitions are called 'in', 'out' and 'free' preorientations of the element $a\in C^0$ respectively. A choice of pre-orientation induces a cup-product on the classical code by the following definition.

\begin{definition}[Classical complex cup-product]
The cup-product on a 2-term cochain complex induced by the preorientations of \cref{eq:preorientations} is defined as 
    \label{def:classical_cup_def}
    \begin{align}
        & a \cup a = a \quad \forall a\in C^0 \\
        & a \cup x = x \quad \text{for } a\in C^0, x\in C^1 \ \quad \text{if }  x \in \delta_{out}(a) \\
        & x \cup a = x \quad \text{for } a\in C^0, x\in C^1 \ \quad \text{if }  x \in \delta_{in}(a) \\
        & 0 \quad \text{otherwise}
    \end{align}
\end{definition}

Note that this cup-product is not associative in general: $(a\cup b)\cup c \neq a\cup(b\cup c)$. 

Different choices of partitions induce different cup-products on the classical code according to the rules in \cref{def:classical_cup_def}. Given three such classical group algebra codes defined by elements $a,b,c\in \FG$, each with their respective in, out, and free partitions, the three-fold balanced product tricycle code can be equipped with a natural cup-product that is inherited from the cup-products on the classical codes -- see Ref. \cite{breuckmann2024cups} for details on the inherited cup-product on quantum codes.  Additional conditions on the classical preorientations are required for the inherited cup-product on quantum codes -- and hence the function $f_\cup(q_i, q_j,q_k)$ as defined in \cref{eq:cup_circuit} -- to induce a code-space preserving circuit. Ref. \cite{breuckmann2024cups} shows that a sufficient condition on the preorientations is the so called \textit{integrated Leibniz rule}.

\begin{prop}[Integrated Leibniz Rule]\label{prop:integrated_leibniz}
    The cup-product on classical codes from \cref{def:classical_cup_def} is said to satisfy an integrated Leibniz rule if for all $a_1, a_2, a_3 \in C^\bullet$
    \begin{equation}\label{eq:integrated_leibniz}
        |(\delta(a_1)\cup a_2) \cup a_3| + |(a_1 \cup \delta(a_2) \cup a_3| + |(a_1\cup a_2) \cup \delta(a_3)| = 0  \ \mathrm{mod} \ 2
    \end{equation}
The integrated Leibniz rule on three classical codes is a sufficient condition for the $CCZ$ circuit defined by the inherited cup-product on the product quantum code, $f_{\cup}(q_1, q_2, q_3) = |(q_1 \cup q_2) \cup q_3| \ \mathrm{mod} \ 2$, to be a coboundary invariant operation -- i.e a code-space preserving unitary operator.
\end{prop}

\begin{proof}
    See section $5$ of Ref. \cite{breuckmann2024cups}.
\end{proof}

From \cref{eq:integrated_leibniz}, further equivalent sub-conditions can be derived for the `in', `out', and `free' partitions. These conditions are listed in proposition 5.2 of \cite{breuckmann2024cups}. However, we note that proposition $5.2$ as stated in \cite{breuckmann2024cups} is technically incorrect unless the cup-product is associative which is not generically the case. While the cup-product is indeed associative when the 'non-overlapping bits' condition -- $\delta_{in}(a_1)\cap \delta_{in}(a_2) = \varnothing$ for $a_1\neq a_2$ (and similarly for $\delta_{out}$) -- holds, this is no longer true when bits of the $in$ and $out$ partitions are allowed to overlap, which is the case for tricycle codes when one or more of the group-algebra elements have weight $>2$. The correct set of conditions can be derived by applying the cup-products in sequence as $(a\cup b) \cup c$ in the derivation of proposition 5.2 of \cite{breuckmann2024cups}. However, we find that the resulting conditions are too restrictive on the parameters of the code -- for instance, these conditions can be found in Refs. \cite{jacob2025singleshot, li2025transversal} in which they were used to study $2-2-2$ codes with $K=3$ and $4-2-2$ and $4-4-4$ codes with $D=2$ with non-trivial $CCZ$ action.

We instead derive a new set of conditions on the pre-orientiations that follow from mixing the order of cup-product applications for different orders of arguments to define a different trilinear operation -- we call this operation a \textit{symmetric triple cup-product}, as illustrated by the following definition.

\begin{definition}[Symmetric triple cup-product] \label{def:stcp}

The symmetric triple cup-product is a trilinear function on a $2-$term cochain complex 
\begin{equation*}
   \_\cup\_\cup\_: C^{i} \times C^{j} \times C^{k} \rightarrow C^l \quad i,j,k,l \in \{0,1\}
\end{equation*}

that is non-zero only if $i+j+k=l$ and is defined as

\begin{align}
a_1\cup a_2 \cup a_3 \equiv & \nonumber\\
    &(a_1 \cup a_2) \cup a_3 \quad \mathrm{if} \ a_1\in C^1, a_2\in C^0, a_3 \in C^0 \label{eq:sym1}\\
    & (a_1 \cup a_2) \cup a_3 \quad \mathrm{if} \ a_1\in C^0, a_2\in C^1, a_3 \in C^0 \label{eq:sym2} \\
    & a_1 \cup (a_2 \cup a_3) \quad \mathrm{if} \ a_1\in C^0, a_2\in C^0, a_3 \in C^1 \label{eq:sym3}\\
    & (a_1 \cup a_2) \cup a_3 \quad \mathrm{if} \ a_1\in C^0, a_2\in C^0, a_3 \in C^0 \label{eq:sym4}\\
    & 0 \quad \text{otherwise}
\end{align}

where the cup-products in parentheses are the standard ones defined in \cref{def:classical_cup_def}.
\end{definition}

Henceforth, whenever a triple cup-product is written with no parentheses, it is to be understood as the symmetric version defined above. Note that the order of parentheses for the specific cases in \cref{eq:sym2} and \cref{eq:sym4} does not matter as for such arguments the standard cup-product is associative. The product quantum code (homological or balanced products) of three classical codes equipped with symmetric triple cup-products naturally inherits a symmetric triple cup-product operation defined on the $4$-term cochain complex corresponding to the quantum code. The symmetric triple cup-product on product quantum codes is defined essentially identically as in the original cup-product construction of Ref. \cite{breuckmann2024cups}, except with the order of operations taken as in \cref{def:stcp}. Similarly, one can show that the appropriate version of the integrated Leibniz rule is sufficient for the trilinear function defined by the symmetric triple cup-product on quantum codes to produce a code-space preserving $CCZ$ circuit.

\begin{prop}[Symmetric Integrated Leibniz Rule]\label{prop:integrated_leibniz}
    The symmetric triple cup-product on classical codes is said to satisfy an integrated Leibniz rule if for all $a_1, a_2, a_3 \in C^\bullet$
    \begin{equation}\label{eq:integrated_leibniz}
        |\delta(a_1)\cup a_2 \cup a_3| + |a_1 \cup \delta(a_2) \cup a_3| + |a_1\cup a_2 \cup \delta(a_3)| = 0  \ \mathrm{mod} \ 2
    \end{equation}
The integrated Leibniz rule on the symmetric triple cup-product is a sufficient condition for the $CCZ$ circuit defined by the inherited symmetric triple cup-product on the quantum code, $f^{sym}_{\cup}(q_1, q_2, q_3) = |q_1 \cup q_2 \cup q_3| \ \mathrm{mod} \ 2$, to be a coboundary invariant operation -- i.e a code-space preserving unitary operator.
\end{prop}

\begin{proof}
    The proof is analogous to that of Lemma 5.1 of Ref. \cite{breuckmann2024cups}.
\end{proof}

The symmetric integrated Leibniz rule can then be expressed in terms of a number of sub-conditions on the coboundary partitions.

\begin{prop}

A preorientiation on a classical code induces a symmetric triple cup-product that satisfies the integrated Leibniz rule if the 'in', 'out', and 'free' partitions satisfy the following conditions:

\begin{align}
    & |\delta_{in}(a_1)|+|\delta_{out}(a_1)| = 0 \ \mathrm{mod}(2) \quad \forall a_1 \in X^0 \ \label{eq:cup_cond1}\\
    & |\delta_{in}(a_1) \cap \delta_{free}(a_2)| = 0 \ \mathrm{mod}(2)\quad \forall a_1,a_2\in X^0: \ a_1\neq a_2 \label{eq:cup_cond2}\\
    & |\delta_{out}(a_1) \cap \delta_{free}(a_2)| = 0 \ \mathrm{mod}(2) \quad \forall a_1,a_2\in X^0: \ a_1\neq a_2 \label{eq:cup_cond3}\\
    & |\delta_{in}(a_1) \cap \delta_{in}(a_2)| + |\delta_{out}(a_1) \cap \delta_{out}(a_2)|  = 0 \ \mathrm{mod}(2)\quad \forall a_1,a_2\in X^0: \ a_1\neq a_2 \label{eq:cup_cond4}\\
    & |\delta_{in}(a_1) \cap \delta_{in}(a_2) \cap \delta_{in}(a_3)| + |\delta_{out}(a_1) \cap \delta_{out}(a_2) \cap \delta_{out}(a_3)| + |\delta_{free}(a_1) \cap \delta_{in}(a_2) \cap \delta_{in}(a_3)| \label{eq:cup_cond5} \\ &+ |\delta_{out}(a_1) \cap \delta_{out}(a_2) \cap \delta_{free}(a_3)| + |\delta_{out}(a_1) \cap \delta_{free}(a_2) \cap \delta_{in}(a_3)| = 0 \ \mathrm{mod}(2) \quad \nonumber \\ &\forall a_1,a_2,a_3\in X^0: \ a_1\neq a_2\neq a_3 \nonumber
\end{align}

where $X^0$ is the basis set for $C^0$.

\end{prop}

\begin{proof}
    By direct case-by-case computation using \cref{def:classical_cup_def}, \cref{def:stcp}, and the Leibniz rule.
\end{proof}

Note that these conditions apply to the classical codes from which any three-dimensional homological product and balanced product quantum codes are constructed. Compared to the original conditions of Ref. \cite{breuckmann2021balanced} these conditions are less restrictive and may yield product codes with $CCZ$ gates with better parameters for other code families besides tricycle codes.

We now apply these conditions to the classical Abelian group Algebra codes which are used to construct tricycle codes. Below, we will use $|a|$ to denote the weight of the group-algebra element $a$ -- i.e the number of group elements that appear in its expansion. We also write $a_1\cap a_2$ for $a_1,a_2\in \FG$ to denote the set of group elements that are common to both group-algebra elements. For a classical Abelian group-algebra code associated with the two-term cochain complex of \cref{eq:classical_complex} defined by an element $a\in \FG$, defining a preorientation amounts to choosing a partition of the group algebra element into smaller disjoint group-algebra elements

\begin{equation}\label{eq:ga_partition_classical}
    a = a_{in}+a_{out}+a_{free}
\end{equation}

where $a_{in}, a_{out}, a_{free}\in \FG$. The preorientations are then defined as $\delta_{in}(\alpha) = a_{in}\cdot \alpha$, $\delta_{out}(\alpha) = a_{out}\cdot \alpha$ and $\delta_{free}(\alpha) = a_{free}\cdot \alpha$ for $\alpha\in \FG$ where the multiplications are within the group-algebra. Note that for balanced product codes, the formalism of Ref. \cite{breuckmann2021balanced} requires an additional condition that preorientations are preserved by group actions. The preorientations in \cref{eq:ga_partition_classical} naturally satisfy this condition -- for example, $g\cdot \delta_{in}(\alpha) = g \cdot a_{in} \cdot \alpha = \delta_{in}(g\cdot \alpha)$ and similarly for the other partitions.

The conditions in \cref{eq:cup_cond1}-\cref{eq:cup_cond5} take a particularly simple form when the free partition is empty. In this case, the conditions on the classical preorientation of the code defined by $a\in \FG$ are then (with the conditions on the $b$ and $c$ classical code preorientations defined identically):

\begin{align} 
    & |a_{in}|+|a_{out}| = 0 \ \mathrm{mod}(2) \label{eq:empty_free1}\\
    & |a_{in}\cdot g \cap a_{in}\cdot h| + |a_{out}\cdot g \cap a_{out}\cdot h|  = 0 \ \mathrm{mod}(2)\quad \forall g,h\in G: \ g\neq h \label{eq:empty_free2}\\
    & |a_{in}\cdot f \cap a_{in}\cdot g \cap a_{in}\cdot h| + |a_{out}\cdot f \cap a_{out}\cdot g \cap a_{out}\cdot h| = 0 \ \mathrm{mod}(2) \quad \label{eq:empty_free3} \\ &\forall f,g,h\in G: \ f\neq g\neq h\nonumber
\end{align}

To construct high-rate and distance tricycle codes from weight-$4$ group-algebra elements which satisfy the above conditions, it turns out to be necessary to set the free partition to be empty (see the discussion after \cref{thm:weight4_partitions}), and thus we will work with this version of the Leibniz rule conditions. 
For three classical group-algebra codes with preorientiations that satisfy the above conditions, the balanced product yields a symmetric triple cup-product on the quantum code's cochain complex, and consequently produces a constant depth $CCZ$ circuit defined by the induced symmetric triple cup-product on the quantum code $f_{CCZ}(q_1, q_2, q_3) = |q_1 \cup q_2 \cup q_3| \ \mathrm{mod} \ 2$ (the precise definition of $f_{CCZ}$ is discussed in \cref{prop:fccz_def} below).

Note that for weight $2$ group-algebra elements, these conditions are trivially satisfied automatically by choosing $a_{in}$ and $a_{out}$ to be singletons. All codes in \cref{table::all_codes}  are constructed by finding classical group-algebra elements which satisfy these conditions with a numerical search. A simple rule to generate weight $4$ group-algebra elements that satisfy these conditions is to choose $a = a_1 + a_1\cdot s + a_2 + a_2\cdot s$ for a fixed element $s \in G$ such that none of the $4$ terms are equal. As long as $s^2 \neq e$, we are able to find tricycle codes constructed from such weight-$4$ elements with non-trivial $CCZ$ circuits and $k > 3, d > 2$, circumventing the limitations of the original conditions in \cite{breuckmann2024cups}. We formalize this with the following theorem.

\begin{theorem}\label{thm:weight4_partitions}
    There exist weight-$4$ elements $a = a_1+a_2+a_3+a_4 \in \FG$ that satisfy conditions \cref{eq:empty_free1} - \cref{eq:empty_free3} for any $G$ with $|G|\geq 4$.
\end{theorem}

\begin{proof}
We specify one prescription for constructing preorientations which satisfy \cref{eq:empty_free1} - \cref{eq:empty_free3}. First note that $|\alpha \cdot g \cap \beta \cdot h| = |\alpha \cap \beta \cdot hg^{-1}|$ for any $\alpha,\beta \in \FG$ and $g,h\in G$.

Thus we can simplify the conditions to (where $e$ denotes the group identity element below)

\begin{align} 
    & |a_{in}|+|a_{out}| = 0 \ \mathrm{mod}(2) \label{eq:cup_proof1}\\
    & |a_{in}\cap a_{in}\cdot w| + |a_{out} \cap a_{out}\cdot w|  = 0 \ \mathrm{mod}(2)\quad \forall w \in G: \ w\neq e \label{eq:cup_proof2}\\
    & |a_{in} \cap a_{in}\cdot v \cap a_{in}\cdot w| + |a_{out} \cap a_{out}\cdot v \cap a_{out}\cdot w| = 0 \ \mathrm{mod}(2) \quad \label{eq:cup_proof3} \\ &\forall v,w\in G: \ v\neq w\neq e, \nonumber
\end{align}

Now we choose any $a_1, a_3\in G$ with $a_1 \neq a_3$ and choose another fixed element $s \in G$ with $s\neq e$ such that: $a_1 s \neq a_3s$ and $a_1  s \neq a_3$ and $a_1 \neq a_3s $. Then we define $a_2 \equiv a_1s$ and $a_4\equiv a_3 s$ and set the partitions $a_{in}=a_1+a_2$, $a_{out}=a_3+a_4$ and $a_{free}=\varnothing$. \cref{eq:cup_proof1} is clearly satisfied since both partitions have weight $2$. Next, notice that $|a_{in}\cap a_{in}w| = \mathds{1}_{w=s}$ and similarly $|a_{out}\cap a_{out}w| = \mathds{1}_{w=s}$. Thus $|a_{in}\cap a_{in}w|+|a_{out}\cap a_{out}w| = 2\cdot \mathds{1}_{w=s} = 0 \ \mathrm{mod} \ 2$ so \cref{eq:cup_proof2} is also satisfied. Finally, since $a_{in}$, $a_{out}$ both are weight $2$, it is easy to see that $a_{in} \cap a_{in}\cdot v \cap a_{in}\cdot w = \varnothing$ and similarly for $a_{out}$ (this is true for any weight $2$ group-algebra elements and non-identity $v\neq w$). Thus, \cref{eq:cup_proof3} is also satisfied.
\end{proof}

The prescription in the proof of \cref{thm:weight4_partitions} is not strictly the only way to satisfy conditions \cref{eq:cup_cond1} - \cref{eq:cup_cond5} for weight-$4$ elements. Another possibility involves working with groups $G$ that have an \textit{involution} $t\in G$ such that $t^2=e$. One can then construct preorientations with non-empty free partitions and one of the in and out partitions being empty -- for example, by choosing $a_{in}=a_1+a_1t$ and $a_{free}=a_3+a_3t$. However, we empirically find that tricycle codes which use such group algebra elements always have distance $D=2$. More generally, we find that when the offset $s$ used to construct the partitions of \cref{thm:weight4_partitions} is an involution, $D=2$ always, although we are unable to prove it. This appears to be the same structure that causes all the $4-2-2$ and $4-4-4$ codes studied in Ref. \cite{jacob2025singleshot} using the original cup-product conditions to have $D=2$, as their construction can also be interpreted in terms of using an involutive group element to satisfy the associated conditions. The symmetric triple cup-product conditions we introduce instead allow us to choose offsets $s$ that are not involutions, leading to tricycle codes with better parameters.

Next, we define the $f_{CCZ}$ function for the quantum tricycle codes that arises from the cup-product specified by a choice of preorientations on the constituent classical group-algebra codes. The structure of the cochain complex in \cref{eq:tricycle_complex} implies that we can partition the qubits of the code into three sectors of $|G|$ qubits each -- see \cref{sec::balanced}. Within each sector, a qubit can be labelled by a group element. Given some order of the group elements, the notation $g^i$ will thus be used to denote qubit $g$ of the $i$-th sector ($i=I,II,III$). For the following, denote $\alpha_I \equiv a$, $\alpha_{II} \equiv b$, $\alpha_{III} \equiv c$ as the three group-algebra elements that specify the classical codes (this notation makes the expression below compact). 

\begin{prop}\label{prop:fccz_def}
Consider a tricycle code defined by group-algebra elements $\alpha_I, \alpha_{II}, \alpha_{III} \in \FG$ with respective `in' and `out' pre-orientations. The function $f_{CCZ}$ specifying the transversal $CCZ$ circuits in \cref{sec::transversal_CCZ}   is defined as 

\begin{equation}\label{eq:fccz_def}
    f_{CCZ}(p^i, q^j, r^k) = |r \alpha_i^{in} \alpha_j^{in} \cap q \alpha_i^{in} \alpha_k^{out} \cap p \alpha_j^{out} \alpha_k^{out} |\cdot \delta_{i\neq j\neq k}\quad \mathrm{mod}(2) 
\end{equation}

where $i,j,k \in \{I,II,III\}$ label the sectors of the qubits $p,q,r$ in the three code-blocks respectively, and $\delta_{i\neq j\neq k}$ indicates that $f_{CCZ}=0$ if any two or more qubits from its arguments are in the same sector. If any partition in the above expression is empty, $f_{CCZ}=0$ on these arguments.
\end{prop}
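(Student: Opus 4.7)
The plan is to derive the formula for $f_{CCZ}$ by directly unpacking the cup product on the tricycle complex $C^\bullet$ in \cref{eq:tricycle_complex}, using the fact that this complex arises as the triple balanced product of three classical group-algebra complexes and that the cup product on such a balanced product is induced (per Breuckmann et al.) from the classical cup products determined by each pre-orientation. Recall that $f_{CCZ}(p^i,q^j,r^k) = |p^i \cup q^j \cup r^k|$ by \cref{eq:cup_circuit}, where the right-hand side is the weight (mod $2$) of a $3$-cochain in $C^3 \cong R$. So the task reduces to computing this triple cup product explicitly.

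First I would make the identification of qubits with $1$-cochains explicit. Using the balanced-product decomposition, $C^1 = (R \otimes_R R \otimes_R R) \oplus (R \otimes_R R \otimes_R R) \oplus (R \otimes_R R \otimes_R R) \cong R^{\oplus 3}$, where the three summands correspond to placing the ``$1$-cochain factor'' in the first, second, or third classical complex respectively; these are precisely the sectors $I, II, III$. A qubit $p^i$ is the image of $p \in R$ in the $i$-th summand. Since the cup product on a tensor product of complexes is graded-multiplicative, $p^i \cup q^j \cup r^k$ lands in the tensor summand of total tridegree $(\deg_I, \deg_{II}, \deg_{III})$ obtained by componentwise addition. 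For the result to live in the tridegree-$(1,1,1)$ piece (i.e., in $C^3$), the three contributions must come from distinct sectors, otherwise some component has degree $\geq 2$ in a complex that is trivial beyond degree $1$. This immediately produces the factor $\delta_{i \neq j \neq k}$ in \cref{eq:fccz_def}.

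Next I would compute the nontrivial case, say $(i,j,k) = (I,II,III)$. Using graded multiplicativity, the triple cup product decomposes as the product of three single-complex cup products: one of a $1$-cochain with a unit, another similar, and at the core the induced diagonal-approximation structure built from the pre-orientations. Following Definition 5.2 of \cite{breuckmann2024cups}, the cup product in a classical group-algebra code defined by $a$ with pre-orientation $a = a_{in} + a_{out} + a_{free}$ is determined by counting how $a_{in}$-shifts and $a_{out}$-shifts pair up. When combined across the three classical codes with the balanced-product identification $R \otimes_R R \cong R$ at each step, the resulting $3$-cochain evaluated at the group identity gives the product $|p^{-1}\alpha_i^{in} \cap q^{-1}\alpha_j^{out}| \cdot |q^{-1}\alpha_j^{in} \cap r^{-1}\alpha_k^{out}| \pmod 2$. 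The factor structure reflects the iterated (Alexander--Whitney style) associativity of the cup product: the first factor records the ``in--out'' incidence for the cup product $p^i \cup q^j$ in $C^2$, and the second factor records the subsequent incidence when cupping the result with $r^k$.

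Finally, I would verify two consistency checks. First, that the expression is symmetric under the cyclic relabeling $(i,j,k) \mapsto (j,k,i)$ as forced by associativity of $\cup$; this reduces to a rearrangement of intersection counts using that $G$ is Abelian. Second, that the formula vanishes when any argument is an $X$-coboundary $d^0(c)$: this follows from coboundary-invariance of $\cup$, proven abstractly in \cite{breuckmann2024cups}, but I would also note it follows transparently from the formula since substituting a coboundary pairs $\alpha_i$ against itself in the intersection counts in a way that pairs up cancellable terms mod $2$. The main obstacle is bookkeeping: tracking the isomorphism $R \otimes_R R \cong R$ through the balanced-product cup construction with enough care to pin down which of $p, q, r$ and which of the $in/out$ partitions appear in each intersection, and in what order. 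Once the conventions are fixed, the identity follows directly from the definitions, and the factor $\delta_{i\neq j\neq k}$ is automatic from the degree count.
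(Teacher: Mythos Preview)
Your proposal is correct and follows essentially the same approach as the paper: both derive \cref{eq:fccz_def} by unpacking the balanced-product cup product inherited from the classical pre-orientations, per \cite{breuckmann2024cups}; indeed your sketch is more explicit than the paper's own informal proof, which simply defers the case $(i,j,k)=(I,II,III)$ to section~5.3.2 of that reference and asserts the other five orderings follow similarly. One small caution: your first consistency check (cyclic symmetry forced by associativity of $\cup$) is not quite right---cup products are associative but generally not commutative, so the six sector orderings genuinely give distinct contributions (this is what distinguishes the single-sector and triple-sector circuits in \cref{sec::transversal_CCZ})---but this side remark does not affect your main derivation.
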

\begin{proof}[Proof (informal)]
    This definition follows from a straightforward but tedious computation of the cup-product on the balanced-product quantum code induced by the classical code cup-products, with the new mixed-orders we define to attain the conditions in \cref{eq:cup_cond1}--\cref{eq:cup_cond4}. 
\end{proof}

Note that the expression in \cref{prop:fccz_def} holds for any tricycle code equipped with pre-orientations that satisfy the conditions of \cref{eq:cup_cond1}-\cref{eq:cup_cond5}, not just those constructed from weight-$4$ group-algebra elements. To be clear, the resulting $CCZ$ circuit has a $CCZ$ gate acting on qubits $p^i, q^j, r^k$ if and only if $f_{CCZ}(p^i, q^j, r^k)=1$. The $\delta_{i\neq j \neq k}$ structure is manifest in the illustrations of the circuits in \cref{fig:CCZ_connectivity}, where no $CCZ$ acts between qubits in the same sector across code-blocks. One must check (numerically) whether the resulting circuits have a non-trivial logical action or not, as has been done for all the codes listed in \cref{table::all_codes}.

From the $f_{CCZ}$ function of \cref{prop:fccz_def}, it is not difficult to ascertain the maximum degrees (the maximum number of $CCZ$ gates a particular qubit is involved in) of the resulting $CCZ$ circuits which use the construction in \cref{thm:weight4_partitions}to construct weight-$4$ group-algebra elements with valid preorientations for the $4-2-2$, $4-4-2$, and $4-4-4$ type tricycle codes. Below, we let $s_i$ denote the offset element in the proof of \cref{thm:weight4_partitions} chosen for $\alpha_i$, $i\in \{I,II,III\}$, and we list weight-$4$ elements before weight-$2$ elements -- for example, for $4-4-2$ codes we let $\alpha_I$ and $\alpha_{II}$ be weight-$4$ and $\alpha_{III}$ be weight-$2$.

A straightforward analysis shows that the possible maximum circuit degrees are:

\begin{enumerate}
    \item $4-2-2$ codes: degree $8$
    \item $4-4-2$ codes:
    \begin{itemize}
        \item $s_1 = s_2$: degree $16$
        \item If $s_1 \neq s_2$: degree $32$
    \end{itemize}
    \item $4-4-4$ codes:
    \begin{itemize}
        \item $s_1=s_2=s_3$: degree $12$
        \item $s_i \neq s_j = s_k$ for $i,j,k \in \{I,II,III\}$ and $i\neq j \neq k$:
        degree $64$
        \item $s_1\neq s_2 \neq s_3$: degree 128
    \end{itemize}
\end{enumerate}

We found empirically that the $4-4-4$ codes with degree $64$ and degree $12$ circuits from the above constructions had relatively poor rate and distance parameters, which is why they are not included \cref{table::all_codes}.

In the main text, we abuse terminology and refer to the maximum degree of a circuit and the minimal depth of the circuit interchangeably. For the $CCZ$ circuits considered in this work, the true minimal circuit depth is related to the chromatic index of the corresponding tripartite $3-$uniform hypergraph. A standard bound \cite{obszarski2017edge} relates the minimal depth  and maximum degree $\Delta$ as

\begin{equation}
    \text{depth} \leq 3(\Delta-1)
\end{equation}

Another widely believed conjecture \cite{alon1997degree} states thst 

\begin{equation}
    \text{depth} \leq \frac{3\Delta}{2}
\end{equation}

for $3-$uniform hypergraphs where any two hyperedges share at-most $2$ vertices, as is the case for the $CCZ$ circuits we consider in our work. For example, using an integer programming method, we found that the minimal depth of the $CCZ$ circuit for the main example $[[48,6,(8,4)]]$ $4-2-2$ code considered in the main-text (\cref{sec::noise_sim}) is $10$, while the maximum degree of the circuit is $8$. In practice, it is the degree of the circuit that controls the spread of errors between qubits, while extra circuit layers from scheduling conflicts simply add a relatively small idling error.

\section{Numerical Leibniz rule for $CCZ$ gates on $3D$ balanced-product codes}\label{sec::numerical_ccz}

We now present a numerical method that generalizes the cup-product based gate approach to construct code-space preserving $CCZ$ circuits on three-dimensional balanced product codes. We applied this method to two $4-4-4$ tricycle code in \cref{table::all_codes} to find circuits with depth shorter than that from the symmetric triple cup-product (STCP) conditions in \cref{eq:cup_cond1}-\cref{eq:cup_cond5} (which typically have depth $128$ for $4-4-4$ codes). We refer to the method described in the remainder of this section as the 'Numerical Leibniz Rule' (NLR). This method yields code-space presercing circuits for any three-dimensional balanced product code. However, unlike the cup-product based construction, there is no guarantee on the depths of the resulting circuits. We describe a procedure that nevertheless allows some empirical control of the resulting circuit depths for certain codes, which are typically shorter than or comparable to the depths of the cup-product based $CCZ$ circuits with appropriate choices of method hyperparameters. We present the key ideas in this section, while we intend to provide a more detailed exposition of this method in future work.

The approach involves defining an Ansatz for a code-space preserving trilinear function $f_{CCZ}$ on $3D$ balanced-product codes in terms of trilinear functions on the component classical codes. We then impose a generalization of the integrated Leibniz rule in \cref{prop:integrated_leibniz} on each of the classical trilinear functions, which we will show is a sufficient condition for the trilinear function on the quantum code to be a coboundary invariant operation corresponding to a code-space preserving $CCZ$ circuit. We assume familiarity with the homological formulation of balanced product codes -- see the supplementary material for a brief introduction. For simplicity of terminology, we describe the method as it applies to tricycle codes, but it can be readily extended to other $3D$ balanced product codes as well as to code-space preserving $C^kZ$ circuits on $k$-dimensional balanced product codes. 

We consider the cochain complex of \cref{eq:tricycle_complex} with the $1-$cochains $\tilde{C}^1$ denoting the space of qubits. A basis for $\tilde{C}^1$ is given by elements \begin{equation}\label{eq:tricycle_1cochain_basis}
    p_1 \coloneq[p,1,1]_G \quad q_2 \coloneq[1,q,1]_G \quad r_3 \coloneq [1,1,r]_G \quad \forall p,q,r \in G
\end{equation}

where $\alpha_1, \alpha_2, \in G$ are $1$ cochains of the classical group-algebra code complex in \cref{eq:classical_complex}, which we denote as $C^1_i$ ($i=1,2,3$ for each classical code), and $1 \in G$ denotes the group identity element which is to be understood as an element of the $0-$cochains of the classical complex denoted by $C^0_i$ (see section $5.3$ of Ref. \cite{breuckmann2024cups} for more details). The subscripts $1,2,3$ indicate which index of the vector is an element of $C^1_i$, while the others are elements of $C^0_i$.

The notation $[\cdot,\cdot,\cdot]_G$ is to be understood as a vector in $(\FG)^{\oplus 3}$ subject to the equivalence relation 
\begin{equation}\label{eq:balanced_relation}
    [g^{-1} \cdot \alpha, \ gh^{-1} \cdot\beta \ , h\cdot \gamma]_G \sim [\alpha, \beta, \gamma]_G \quad \forall g,h \in G \quad \alpha,\beta,\gamma \in \FG
\end{equation}

This relation reflects the group-action quotient space structure of the balanced product code \cite{breuckmann2021balanced} (see supplementary material). We now construct a trilinear operation on the quantum code that shares some structural properties with the cup-product based trilinear function in \cref{eq:cup_circuit}. We start by defining appropriate functions on the classical code complexes.

\begin{definition}[Group-equivariant functions on classical group-algebra codes]\label{def:group_equiv_funcs}

Let $C_i$ denote the cochain complex of the $i^{th}$ classical group algebra code ($i=1,2,3$) as in \cref{eq:classical_complex}, with $C^0_i \equiv \FG$ and $C^1_i \equiv \FG$ denoting the $0$ and $1$
 cochain spaces respectively Define a set of functions \begin{equation}\label{eq:NLR_classical_func}
     f_i^j: C \times C \times C \rightarrow \F
 \end{equation}

for $i,j \in \{1,2,3\}$ with the properties 

\begin{enumerate}
    \item $f_i^j(x_1,x_2,x_3)$ is non-zero only if $x_j \in C_i^1$ and $x_k \in C_i^0$ for $k \neq j$.
    \item Group equivariance: $f_i^j(x_1, x_2, x_3) = f_i^j(g\cdot x_1, g\cdot x_2, g \cdot x_3) \quad \forall g\in G$
\end{enumerate}

\end{definition}

The $f_i^j$ functions can be viewed as a generalization of the cup-product operation on the classical codes. Now we construct a function $f_{CCZ}: \tilde{C^1}\times\tilde{C^1} \times \tilde{C^1} \rightarrow \F$ on the quantum code that is defined in terms of these functions, such that $f_{CCZ}$ respects the group-action balancing relation in \cref{eq:balanced_relation}. To this end, we consider the following structure for $f_{CCZ}$.

\begin{definition}[Product Ansatz for trilinear function on tricycle codes] \label{def:product_ansatz}

Let $f_i^j$ with $i,j\in \{1,2,3\}$ be group-equivariant trilinear functions as defined in \cref{def:group_equiv_funcs}. Let $p^i,q^j,r^k$ for $i,j,k \in \{1,2,3\}$ be basis elements of $\tilde{C}^1$ as in \cref{eq:tricycle_1cochain_basis}. Then, $f_{CCZ}$ is a trilinear function on $\tilde{C}^1$ defined on the basis elements as 

\begin{gather} \label{eq:f_ccz_nlr}
 f_{CCZ}(p_i, q_j, r_k) = \\    
\begin{cases}
\sum_{\substack{g_1, g_2, g_3 \\ h_1, h_2, h_3} \in G} f_1^1(g_1 \cdot p, g_2, g_3)\cdot f_2^2(g_1^{-1}h_1^{-1}, g_2^{-1}h_2^{-1} \cdot q, g_3^{-1}h_3^{-1})\cdot f_3^3(h_1, h_2, h_3 \cdot r) \ \mathrm{mod} \ 2 \quad  i=1,j=2,k=3 \\
\sum_{\substack{g_1, g_2, g_3 \\ h_1, h_2, h_3} \in G} f_1^1(g_1 \cdot p, g_2, g_3)\cdot f_2^3(g_1^{-1}h_1^{-1}, g_2^{-1}h_2^{-1}, g_3^{-1}h_3^{-1} \cdot r)\cdot f_3^2(h_1, h_2 \cdot q, h_3) \ \mathrm{mod} \ 2\quad  i=1,j=3,k=2 \\
\sum_{\substack{g_1, g_2, g_3 \\ h_1, h_2, h_3} \in G} f_1^2(g_1, g_2 \cdot q, g_3)\cdot f_2^1(g_1^{-1}h_1^{-1} \cdot p, g_2^{-1}h_2^{-1}, g_3^{-1}h_3^{-1})\cdot f_3^3(h_1, h_2, h_3 \cdot r) \ \mathrm{mod} \ 2\quad  i=2,j=1,k=3 \\
\sum_{\substack{g_1, g_2, g_3 \\ h_1, h_2, h_3} \in G} f_1^3(g_1, g_2, g_3 \cdot r)\cdot f_2^1(g_1^{-1}h_1^{-1} \cdot p, g_2^{-1}h_2^{-1}, g_3^{-1}h_3^{-1})\cdot f_3^2(h_1, h_2\cdot q, h_3) \ \mathrm{mod} \ 2\quad  i=2,j=3,k=1 \\
\sum_{\substack{g_1, g_2, g_3 \\ h_1, h_2, h_3} \in G} f_1^2(g_1, g_2\cdot q, g_3)\cdot f_2^3(g_1^{-1}h_1^{-1}, g_2^{-1}h_2^{-1}, g_3^{-1}h_3^{-1} \cdot r)\cdot f_3^1(h_1\cdot p, h_2, h_3) \ \mathrm{mod} \ 2\quad  i=3,j=1,k=2 \\
\sum_{\substack{g_1, g_2, g_3 \\ h_1, h_2, h_3} \in G} f_1^3(g_1, g_2\cdot q, g_3)\cdot f_2^2(g_1^{-1}h_1^{-1}, g_2^{-1}h_2^{-1}\cdot q, g_3^{-1}h_3^{-1})\cdot f_3^1(h_1\cdot p, h_2 , h_3)\ \mathrm{mod} \ 2 \quad  i=3,j=2,k=1 \\
0 \quad \text{otherwise}
\end{cases} \nonumber
\end{gather}
\end{definition}

The above definition may appear involved, but each of the $6$ nonzero cases simply defines the trilinear function on a triple of sectors of the three-blocks of the tricycle codes, producing a $CCZ$ circuit structure as in \cref{fig:CCZ_connectivity}, analogous to the cup-product based circuits from \cref{sec::cup}. The $6-$fold sum over group elements combined with group-equivariance of the component functions ensures that $f_{CCZ}$ is well-defined on the equivalence classes of the balanced product in \cref{eq:balanced_relation}. 

To ensure that the $f_{CCZ}$ function in \cref{eq:f_ccz_nlr} is a coboundary invariant operation that produces a code-space preserving $CCZ$ circuit, we impose a version of the Leibniz rule similar to that in \cref{prop:integrated_leibniz} on the $f_i^j$ functions on classical codes.

\begin{prop}[Generalized Leibniz rule]
    Suppose the $i^{th}$ classical group-algebra code is defined by an element $\alpha_i \in \FG$, as in \cref{eq:classical_complex} ($i\in \{1,2,3\}$). The functions $f_i^j$ in \cref{def:group_equiv_funcs} are said to satisfy a generalized Leibniz rule if,
    \begin{equation}\label{eq:generalized_leibniz}
        f_i^1(\alpha_i \cdot g_1, g_2, g_3) + f_i^2(g_1, \alpha_i \cdot g_2, g_3) + f_i^3(g_1, g_2, \alpha_i \cdot g_3) = 0 \ \mathrm{mod}\ 2 \quad \forall g_1,g_2,g_3 \in G
    \end{equation}
    A sufficient condition for $f_{CCZ}$ from \cref{def:product_ansatz} to correspond to a code-space preserving circuit is for all $f_i^j$ to satisfy the generalized Leibniz rule.
\end{prop}
\begin{proof}
    The proof is structurally identical to that of Lemma 5.1 of Ref. \cite{breuckmann2024cups}. In particular, one can show that the $f_{CCZ}$ function inherits a similarly defined generalized Leibniz rule from the component $f_i^j$ functions due to the product structure of \cref{eq:f_ccz_nlr}. The generalized Leibniz rule is then in turn a sufficient condition for $f_{CCZ}$ to be well-defined on the first cohomology space of the complex \cref{eq:tricycle_complex}.
\end{proof}

The NLR method can then be summarized as:

\begin{enumerate}
    \item Construct a basis-set of possible group-equivariant trilinear functions $f_i^j$ that satisfy \cref{eq:generalized_leibniz} on fixed classical codes defined by $\alpha_i\in \FG$ for $i=1,2,3$.
    \item Search randomly over candidates $f_i^j$: for each set of candidates, construct $f_{CCZ}$ as in \cref{eq:f_ccz_nlr} and iterate until the maximum degree of $f_{CCZ}$ is low.
\end{enumerate}

Step $1$ can be formulated efficiently. In particular, for each $i=1,2,3$, the set of functions $f_i^1, f_i^2, f_i^3$ are codewords of a classical binary parity check code with parity check matrix $H_{leibniz}$ of shape $|G|^2 \times 3|G|^2$. Each row of $H_{leibniz}$ corresponds to one constraint of the form \cref{eq:generalized_leibniz} -- there are $|G|^2$ such constraints instead of $|G|^3$ due to the group equivariance property of the $f_i^j$ functions. The columns of $H_{leibniz}$ have weight $3|\alpha_i|$, corresponding to the non-zero the constraint \cref{eq:generalized_leibniz}. Consequently, a basis of possible functions $f_i^j$ that satisfy the generalized Leibniz rule can be found efficiently by finding a basis for the nullspace of $H_{leibniz}$.

Step $2$. can be optimized with various heuristics. In practice, we found that the shortest depth circuits were found by generating a large set of low-weight codewords of $H_{leibniz}$ using a variant of the algorithm described in \cref{sec::distance_finding} and randomly searching over such a set. This is the method used to find the circuits for the $4-4-4$ type NLR codes in \cref{table::all_codes}. We numerically verified that the resulting circuits indeed preserve the code-space, by explicitly checking the coboundary invariance condition. In general, we find that sparsity of the functions (low weight of the codewords) leads to lower-degree $f_{CCZ}$ functions. Other strategies using simulated annealing or structured searches over a basis of codewords for $H_{leibniz}$ could potentially lead to finding lower degree circuits for a particular code. In particular, it is currently unclear if this method can efficiently yield circuits for certain codes with depths $\leq 20$ to be feasible for practical magic-state generation.

One caveat with this method is that it aims to construct a code-space preserving circuit for a fixed code. Consequently, for certain codes, there may be no such circuits with low degree. Therefore, a code-search must be performed in conjunction with this method to find good codes with short-depth $CCZ$ circuits. We will study this method and its application to other codes in more detail and generality in future work.

\section{Logical circuit optimization}\label{sec::logical_opt}

In this section we discuss how the logical $\overline{CCZ}$ circuits that produce the hypergraph magic states described in \cref{sec::transversal_CCZ} can be optimized to maximize the yield of disjoint $\overline{CCZ}$ magic-gates. 

The logical connectivity is determined by the cohomology invariant $f_{CCZ}$ function restricted to a basis of logical $X$-type operators for the three code-blocks. In cohomological terms, this is a basis for the first cohomology group $H^1(C)$ of the associated cochain complex. Denote a choice of these bases as $\{[l^1_i]\}_{i=1\cdots K}, \{[l^2_j]\}_{j=1\cdots K}, \{[l^3_k]\}_{k=1\cdots K}$ where the $l_i$ are representative elements of $\mathrm{ker}(H_Z)$ and $[\cdot]$ denotes an equivalence class modulo elements of $\mathrm{im}(H_Z^T)$. We then construct a $K\times K\times K$ binary $3$-tensor corresponding to the restriction of $f_{CCZ}$ to the logical operators:

\begin{equation}\label{eq:logical_tensor}
    T^{log}_{ijk} = f_{CCZ}(l^1_i, l^2_j,l^3_k) \in \F
\end{equation}

The tensor $T^{log}$ encodes all the structure of the resulting logical circuit:  $T^{log}_{ijk} = 1$ implies a $\overline{CCZ}$ gate between the $i$-th logical qubit of the first code block, $j$-th logical qubit of the second block, and $k$-th logical qubit of the third block. Naturally, changing the logical basis of each block will change the connectivity of the resulting circuit and the associated hypergraph magic state, corresponding to a change of basis on each leg of the $T^{log}$ tensor. We would like to be able to extract some number of disjoint $\overline{CCZ}$ gates from the circuit defined by $f_{CCZ}$ by choosing an appropriate basis of logical $X$ operators for each code block. The maximum such number of disjoint $\overline{CCZ}$, $K_{CCZ}$, that are extractable from the circuit is equal to the \textit{subrank} of $T^{log}$. Intuitively, the subrank of a binary tensor measures the size of the largest identity-like diagonal subtensor that can be embedded within the tensor \cite{christandl2023gap, kopparty2020geometric}. More formally, the subrank of the tensor $T^{log}$ is the largest integer $r\geq 0$ such that there exist matrices $M^1, M^2, M^3 \in \F^{r \times K}$ that satisfy 

\begin{equation}\label{eq:subrank}
    (M^1 \otimes M^2 \otimes M^3)\cdot T^{log} = \mathbb{I}^{r\times r\times r} \ \mathrm{mod} \ 2
\end{equation}

where $\mathbb{I}^{r\times r\times r}$ is the $r\times r \times r$ tensor with $1$s on the diagonal and $0$ everywhere else. Equivalently, we can express this in components as 

\begin{equation}\label{eq:logical_tensor_comp}
    \sum_{i,j,k} T^{log}_{ijk}M^1_{a,i} M^2_{b,j} M^3_{c,k} = \mathbbm{1}_{a=b=c} \ \mathrm{mod}(2) \quad a,b,c\in [1\cdots r]
\end{equation}

The $M$ matrices can be viewed as change of basis matrices such that rows of $M$ define new logical operators in a different basis that produces $r$ disjoint $\overline{CCZ}$ gates. In particular, the optimal choice of $r$ logical operators for the $I$-th code block is given by $\{\sum_j  M^I_{a,j} (l^I_a)_j\}_{a=1\cdots r}$. The remaining $K-r$ logicals are found by choosing any set of logical operators that are linearly independent from the first $r$. These $K-r$ logicals are treated as gauge logical qubits, to be initialized in the $\ket{\overline{0}}$ state -- see the discussion at the end of \cref{sec::transversal_CCZ}.

Unfortunately, finding the subrank $r$ and the associated bases $M^1, M^2, M^3$ is a problem that is widely believed to be NP-Hard, even over the real numbers \cite{kopparty2020geometric}. In this work, we find sub-optimal solutions using a Mixed-Integer-Programming (MIP) method -- we use the Gurobi Python package for this task \cite{gurobi}. We first find a set of initial logical operators numerically using row-reduction over $\F$. For fixed $r$ we then treat the $3rK$ entries of $M^1, M^2, M^3$ as binary variables for the integer program, and enforce the condition in \cref{eq:logical_tensor_comp} as a set of integer constraints which involve another $2r^3$ many slack variables. The complexity of the integer program therefore grows dramatically with $r$. We iterate from $r=0$ to increasing values of $r$ until the MIP solver converges or is unable to find a feasible solution within time constraints. This leads to the reported values of $K_{CCZ}$ in \cref{sec::transversal_CCZ}. An important direction for future work is to find tailored optimization heuristics for the binary tensor subrank problem to extract larger values of $K_{CCZ}$, signifcantly improving the throughput of $\overline{CCZ}$ type magic states from the hypergraph magic states produced by our circuits.

\section{Numerical details of noise simulations}\label{sec::numerics-methods}

Here, we provide further numerical simulations of the tricycle codes to demonstrate robust single-shot error correction in the $Z$ basis and assess fault-tolerant behavior in the $X$ basis using repeated rounds of syndrome extraction~\cite{bravyi2024high}.

For both types of simulation, we simulate syndrome extraction circuits for both $X$ and $Z$ checks of the tricycle codes. Each simulation begins by initializing the data qubits in the appropriate basis, followed by multiple cycles of syndrome extraction. At each cycle, fresh ancilla qubits are initialized, entangled with the data via two-qubit gates, and subsequently measured to extract the stabilizer values.

For both single-shot $Z$-basis and $d$-round $X$-basis simulations, we employ a coloration circuit for syndrome extraction together with the standard two-qubit depolarizing circuit-noise error model also used in the main text simulations.

For the single-shot scenario, we utilize a windowed decoding protocol, wherein each window comprises a fixed number of syndrome extraction rounds followed by decoding and correction. This window is then repeated multiple times, providing a concrete probe of the code’s sustainable logical error suppression~\cite{brown2016fault}. Specifically, our simulations employ windows of three rounds, each repeated fourteen times for a total of forty-two rounds---a protocol previously demonstrated to saturate sustainable performance in high-rate codes~\cite{xu2024constant}. With fixed window size, only codes possessing genuine single-shot capability exhibit logical error suppression that scales with code distance; in contrast, codes lacking such features see their performance limited by the window size. Although a window size of one is theoretically sufficient, empirical studies indicate that size three yields enhanced robustness to measurement errors and generally improved logical error rates~\cite{xu2024constant, ataides2025constant}. For conventional $d$-round simulations in the $X$ basis, the syndrome extraction is repeated for $d$ cycles, and decoding is performed using the entire record of accumulated syndrome information.

Here, we perform decoding using a belief-propagation with ordered statistics decoding (BP+OSD) scheme as presented in Ref.~\cite{panteleev2021degenerate}. To enable circuit-level decoding, we map each syndrome extraction circuit to a spacetime code~\cite{xu2024constant, ataides2025constant}: checks of the spacetime code correspond to parities of stabilizer measurements across consecutive rounds, and the code qubits represent distinct fault locations (e.g., a two-qubit gate error at a specific location in space and time). Formally, for each round, the value of a stabilizer at time $t$ is $m_t \in {0,1}$; the detector value is defined as $d_t \coloneq m_t + m_{t-1}$ (mod $2$). In single-shot simulations, belief propagation is applied to each window, except for the final readout which uses BP+OSD; for $d$-round simulations, BP+OSD is used for decoding the full syndrome record.

All circuits were constructed using Stim~\cite{gidney2021stim}, and BP+OSD decoding was implemented using the tools from Ref.~\cite{roffe2020decoding, Roffe_LDPC_Python_tools_2022}. For the $X$-basis $d$-rounds simulations, we apply a BP+OSD decoder to the full spacetime code with 10000 maximum iterations of min-sum BP and min-sum scaling factor of 0. We used OSD-CS of order 10. For the $Z$-basis single-shot simulations, we decode each window of three cycles using a BP decoder with maximum iterations set to $\#$ qubits / 5 and min-sum scaling factor of 0.9. To decode the final transversal measurement we use BP+OSD with maximum iterations set to $\#$ qubits / 5, min-sum scaling factor of 0.9 and OSD-E of order 10. Although we did not attempt extensive hyperparameter optimization, preliminary trials suggest that targeted tuning of BP+OSD parameters could produce significant gains in logical error rate.

We simulate various $4-4-4$ tricycle code and our simulation results are summarized in \cref{fig:results}. Logical error rate is defined as the probability that any logical qubit within the code block fails under the simulated noise model. We observe a gate-noise threshold greater than $0.4\%$ for each of the $X$ and $Z$ basis results (see the discussion after \cref{thm::dist_lower_bnd} in for further details). By fitting the measured logical error rates to an exponential decay model, $p_L = \alpha (p/p_{\text{th}})^{\beta n^{\gamma}}$, we estimate the achievable error rates for both $X$ and $Z$ bases. \cref{table::simulations} presents logical error rates for two key physical error benchmarks: $p_{2q} = 10^{-3}$, representative of near-term hardware, and $10^{-4}$, representative of projected advances, together with relevant code parameters.

We observe robust single-shot performance across several tricycle code instances. $d$-round performance is weaker, reflecting the higher check weight and reduced distance in the $X$ basis.

For completeness, we also simulate the single-shot performance of the $4-2-2$ tricycle codes explored in the main text. We used the same BP+OSD decoding setup described above. The results can be found in~\cref{fig:single-shot-422}, which demonstrate impressive single-shot error correction suppression for this class of codes. In practice, however, we anticipate that the overall magic state factory performance will be limited by the $d$-rounds $X$-basis simulated in the main text, owing to its higher check weight and lower distance compared to the $Z$ basis.

\begin{figure}[h!]
    \centering
    \includegraphics[width=\textwidth]{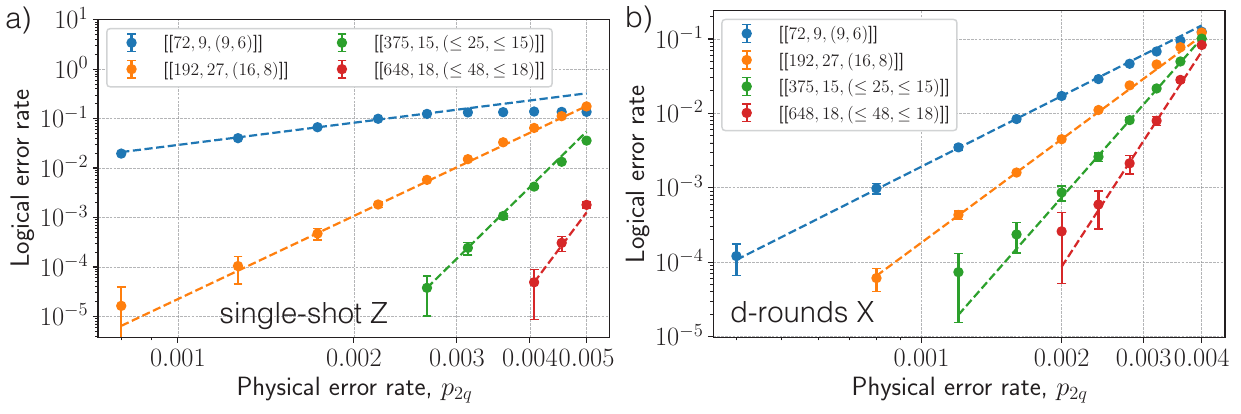}
    \caption{\textbf{Circuit-level noise simulation results for $4-4-4$ tricycle codes.}
(a) Logical error rate (as measured by block failure probability) versus two-qubit physical gate error rate ($p_{2q}$) for single-shot error correction in the $Z$ basis. Single-shot performance is evaluated using a windowed decoding protocol: three rounds of syndrome extraction followed by decoding and correction, with the window repeated 14 times (for 42 total rounds) to probe sustainable suppression of logical errors. (b) Logical error rate versus $p_{2q}$ for fault-tolerant, $d$-round error correction in the $X$ basis. In both panels, errors are sampled under a standard two-qubit depolarizing circuit-level noise model, and results are shown for various tricycle code. Logical error rates are determined via Monte Carlo simulations using a BP+OSD decoder, with each data point corresponding to $M$ samples; error bars indicate standard errors, computed as $\sqrt{p_L(1-p_L)/M}$.}
    \label{fig:results}
\end{figure}

\begin{table}[h!]
\centering
\begin{tabular}{lllll} 
\toprule
\toprule
$\mathbf{[[N,K,(D_X,D_Z)]]}$ & $\mathbf{p_L^{(X)}(0.1\%)}$ & $\mathbf{p_L^{(Z)}(0.1\%)}$ & $\mathbf{p_L^{(X)}(0.01\%)}$ & $\mathbf{p_L^{(Z)}(0.01\%)}$ \\
\midrule
$[[72,9,(9,6)]]$ & $2\times 10^{-3}$ & $3\times 10^{-2}$ & $1\times 10^{-6}$ & $9 \times 10^{-4}$ \\
$[[192,27,(16,8)]]$ & $2\times 10^{-4}$ & $2 \times 10^{-5}$ & $4 \times 10^{-9}$ & $6 \times 10^{-11}$ \\
$[[375,15,(\leq 25, \leq15)]]$ & $5\times 10^{-6}$ & $4\times 10^{-10}$ & $4\times 10^{-13}$ & $1\times 10^{-21}$  \\
$[[648,18,(\leq 48, \leq18)]]$ & $1 \times 10^{-7}$ & $1\times 10^{-14}$ & $3 \times 10^{-17}$ & $2\times 10^{-30}$  \\
\bottomrule
\bottomrule
\end{tabular}
\caption{\textbf{Example $4-4-4$ tricycle codes and their logical error rates under circuit-level noise.}
Logical error rates $p_L^{(X)}$ and $p_L^{(Z)}$ are shown for both $p_{2q} = 10^{-3}$ and $p_{2q} = 10^{-4}$, for different [[N, K, D]] code parameters.}
\label{table::simulations}
\end{table}

\begin{figure}[h!]
    \centering
    \includegraphics[width=0.5\textwidth]{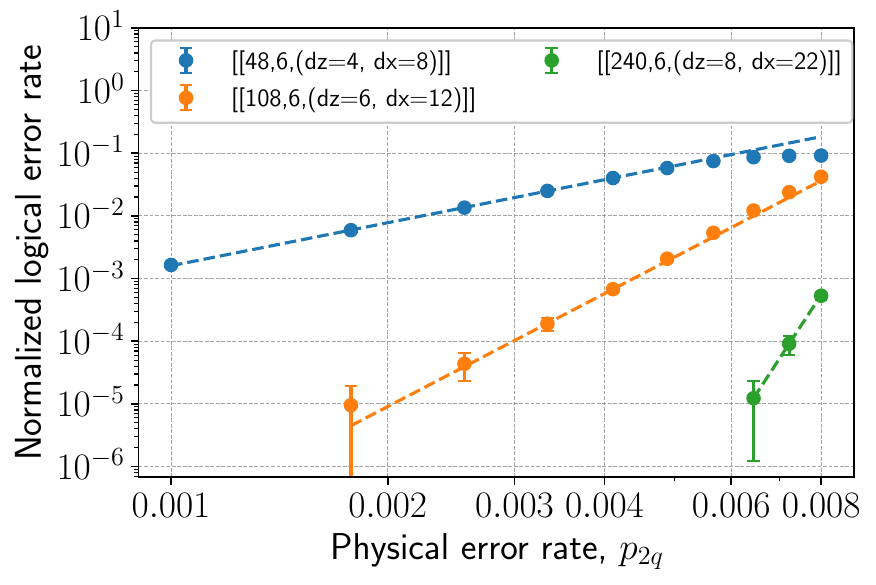}
    \caption{\textbf{Single-shot simulation results for $4-2-2$ tricycle codes.} Logical error rate normalized by number of QEC rounds and number of logical qubits versus two-qubit physical gate error rate ($p_{2q}$) for single-shot error correction in the $Z$ basis. Logical error rates are determined via Monte Carlo simulations using a BP+OSD decoder, with each data point corresponding to $M$ samples; error bars indicate standard errors, computed as $\sqrt{p_L(1-p_L)/M}$.}
    \label{fig:single-shot-422}
\end{figure}

\section{Optimal-depth syndrome extraction circuits}\label{sec::tricycle_scheduling_method}

We provide a constructive proof of the following theorem.
\begin{theorem}\label{thm::tricycle_scheduling}
For tricycle codes defined by $\mathbf{A}=\sum_{i=1}^{w_a}\mathbf{A}_i$, $\mathbf{B}=\sum_{i=1}^{w_b}\mathbf{B}_i$, and $\mathbf{C}=\sum_{i=1}^{w_c}\mathbf{C}_i$ where $w_a$, $w_b$, and $w_c$ are  even, there exists syndrome extraction circuits with CNOT depth $w_a+w_b+w_c$.
\end{theorem}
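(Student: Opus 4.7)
The plan is to recast the scheduling problem as a bipartite edge-coloring problem and then invoke König's edge-coloring theorem.

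First, to see that $w_a + w_b + w_c$ is a lower bound on the depth, I would note that from the form of $H_X$ and $H_Z$ in \cref{eq::parity_check_mats}, every data qubit lies in exactly $w_a + w_b + w_c$ stabilizers (one block of $H_X$ and two of the three blocks of $H_Z$, each contributing one of the weights $w_a, w_b, w_c$), and likewise every X-ancilla participates in $w_a + w_b + w_c$ CNOTs. Since any layer may include at most one two-qubit gate per qubit, $w_a + w_b + w_c$ is the minimum achievable depth.

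For the matching upper bound, I would work at the level of sectors rather than individual qubits. Build a bipartite multigraph $\mathcal{G}$ with left vertices $\{D_1, D_2, D_3\}$ (the three data sectors) and right vertices $\{X, Z_1, Z_2, Z_3\}$ (the four check-ancilla sectors), placing one edge for each permutation summand $\mathbf{A}_i$, $\mathbf{B}_i$, $\mathbf{C}_i$ that appears in a block of $H_X$ or $H_Z$. A direct accounting of the nine sector-pairs yields edge multiplicities drawn from $\{w_a, w_b, w_c\}$, with $\deg(D_j) = \deg(X) = w_a + w_b + w_c$ and the three $Z_i$ vertices having smaller degrees. Hence $\Delta(\mathcal{G}) = w_a + w_b + w_c$. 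Because $\mathcal{G}$ is bipartite, König's edge-coloring theorem produces a proper edge coloring using exactly $\Delta(\mathcal{G})$ colors.

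Each color class is a set of edges whose sector endpoints are pairwise disjoint. Unpacking an edge into its $n_G$ underlying physical CNOTs (the perfect matching between the $n_G$ qubits of the two linked sectors dictated by the associated permutation matrix) shows that the CNOTs from distinct edges in one color class act on disjoint physical qubits, so they execute in a single layer. Interpreting color classes as layers produces a depth-$(w_a + w_b + w_c)$ circuit. Correctness of the stabilizer measurement then follows because CNOTs sharing an ancilla commute (common control for X-ancillas, common target for Z-ancillas), while the CSS condition $H_Z H_X^T = 0$ — equivalent here to the pairwise commutation of $\mathbf{A}, \mathbf{B}, \mathbf{C}$ — guarantees that any ordering of an X-CNOT and a Z-CNOT on a shared data qubit produces the intended measurement outcomes.

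The main obstacle I anticipate is turning this König-type existence argument into the explicit, symmetric construction depicted in \cref{fig::implementation}a and identifying where the even-weight hypothesis is used. I expect the role of the even condition is to permit a clean pairing of the permutation summands — for example, splitting each of $\{\mathbf{A}_i\}$, $\{\mathbf{B}_i\}$, $\{\mathbf{C}_i\}$ into two equal halves and interleaving them across layers — yielding the balanced, explicitly implementable schedule realized in the paper's neutral-atom protocol, rather than relying on an abstract coloring of the multigraph.
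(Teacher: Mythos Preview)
Your K\"onig edge-coloring argument successfully produces a depth-$(w_a+w_b+w_c)$ circuit with no qubit collisions, but the correctness claim in your final paragraph is false, and this is a genuine gap. It is \emph{not} true that the CSS condition $H_Z H_X^T = 0$ guarantees that any ordering of X-ancilla and Z-ancilla CNOTs on shared data qubits yields correct syndrome measurements. Consider a single X-check $X_{d_1}X_{d_2}$ and Z-check $Z_{d_1}Z_{d_2}$ with ancillas $x$ (in $\ket{+}$) and $z$ (in $\ket{0}$): the ordering $\mathrm{CNOT}(d_1,z),\,\mathrm{CNOT}(x,d_1),\,\mathrm{CNOT}(x,d_2),\,\mathrm{CNOT}(d_2,z)$ sends the ancilla stabilizer $Z_z \mapsto Z_x Z_{d_1} Z_{d_2} Z_z$, so the Z-measurement outcome is randomized by the X-ancilla. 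The CSS condition only ensures that the \emph{total} overlap between each X-check and Z-check is even; what matters for the circuit is, for each $(j,k)$ pair, the parity of the number of shared data qubits on which the X-ancilla-$j$ CNOT precedes the Z-ancilla-$k$ CNOT, and an arbitrary K\"onig coloring gives no control over this.

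This is exactly why the paper does not invoke a coloring argument but instead writes down an explicit five-stage schedule and tracks the stabilizer table through it. The even-weight hypothesis is not cosmetic: it allows each of $\mathbf{B}$ and $\mathbf{C}$ to be split into halves $\mathbf{B}_<,\mathbf{B}_>$ and $\mathbf{C}_<,\mathbf{C}_>$, with the Z-ancilla CNOTs for one half applied before and the other half after the X-ancilla CNOTs on the same data sector. The resulting cross-terms on the Z-ancilla columns, e.g.\ $\mathbf{A}^T\mathbf{C}^T_> + \mathbf{A}^T\mathbf{C}^T$ after stage~3, cancel to zero only after the symmetric stages~4 and~5, and the cancellation uses both the pairwise commutativity of $\mathbf{A},\mathbf{B},\mathbf{C}$ and the equal split. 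Your intuition in the last paragraph that the even condition enables a ``clean pairing'' is on the right track, but the pairing is needed for correctness, not merely for aesthetics of the layout; without it the corollary following the theorem shows that depth $W+1$ or $W+2$ may be required.
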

\begin{proof}
Let $X \to \mathbf{A}_i^T(D_1)$ denote CNOTs from check sector $X$ to data sector $D_1$ according to permutation $\mathbf{A}_i^T$.
The arrow indicates the CNOT direction: from each qubit $j$ in $X$ to qubit $k$ in $D_1$ whenever the $(j,k)$ entry of $\mathbf{A}_i^T$ is 1.
Since $\mathbf{A}_i^T$ are permutation matrices, the CNOTs are disjoint and can be executed in parallel.

We schedule all CNOTs in five stages, with $w_a + w_b + w_c$ layers in total (square brackets indicate groups of CNOTs in the same layer):
\begin{equation}\label{eq::tricycle_scheduling}\begin{split}
& \text{(1) for } 1\le i \le w_c/2 \text{, apply } [\mathbf{C}_{i+w_c/2}(D_1)\to Z_1,\ \mathbf{C}_{i+w_c/2}(D_2) \to Z_2,\ X\to  \mathbf{C}_i^T(D_3)]; \\
& \text{(2) for } 1\le i \le w_b/2 \text{, apply }  [\mathbf{B}_{i+w_b/2}(D_1)\to Z_3,\ X\to \mathbf{B}_i^T(D_2),\ \mathbf{B}_{i+w_b/2}(D_3)\to Z_2];\\
& \text{(3) for } 1\le i \le w_a \text{, apply } [X \to \mathbf{A}^T_{i}(D_1),\ \mathbf{A}_i(D_2)\to Z_3,\ \mathbf{A}_{i}(D_3)\to Z_1]; \\
& \text{(4) for } 1\le i \le w_b/2 \text{, apply } [\mathbf{B}_{i}(D_1)\to Z_3,\ X \to \mathbf{B}_{i+w_b/2}^T(D_2),\ \mathbf{B}_{i}(D_3)\to Z_2]; \\
& \text{(5) for } 1\le i \le w_c/2 \text{, apply } [\mathbf{C}_{i}(D_1)\to Z_1,\ \mathbf{C}_{i}(D_2) \to Z_2,\ X \to \mathbf{C}_{i+w_c/2}^T(D_3)]. \\
\end{split} \end{equation}

It remains to verify that this schedule measures the stabilizers while preserving logical operators. 
The CNOTs' effects can be tracked using a stabilizer table; since CNOTs do not mix X and Z parts, the action on each part can be analyzed separately. We focus on the X-part; the Z-part is similar.

Consider the stabilizer table extended with an arbitrary logical Pauli X, supported on 1-entries of row vectors $u$, $v$, and $w$.
The circuit is correct if the CNOTs yield the expected outcome:
\begin{equation} \label{eq::tricycle_scheduling_condition}
\begin{array}{c}
\begin{array}{ccccccc} 
X\ &D_1 &D_2 &D_3 &Z_1 &Z_2 &Z_3  \end{array} \\
\left[\begin{array}{c|ccc|ccc} 
\ \mathbf{I}\ &\ 0\ &\ 0\   &\ 0\   &\ 0\   &\ 0\   &\ 0\   \\
0 &\mathbf{A}^T &\mathbf{B}^T &\mathbf{C}^T &0   &0   &0 \\
0 &u   &v   &w   &0   &0   &0
\end{array}\right]
\end{array}
\quad \to \quad
\begin{array}{c}
\begin{array}{ccccccc} 
X\ &D_1 &D_2 &D_3 &Z_1 &Z_2 &Z_3  \end{array} \\
\left[\begin{array}{c|ccc|ccc} 
\ \mathbf{I}\ &\mathbf{A}^T &\mathbf{B}^T &\mathbf{C}^T   &\ 0\   &\ 0\   &\ 0\   \\
\ \mathbf{I}\ &\mathbf{A}^T &\mathbf{B}^T &\mathbf{C}^T &0   &0   &0 \\
0 &u   &v   &w   &0   &0   &0
\end{array}\right]
\end{array}
\end{equation}

At stage 1, the CNOTs from $X$ to $D_3$ add rows from column $X$ to those in column $D_3$, permuted by each $\mathbf{C}^T_i$.
Let $\mathbf{C}^T_< := \sum_{i=1}^{w_c/2}\mathbf{C}^T_i$ and $\mathbf{C}^T_> := \sum_{i=1+w_c/2}^{w_c}\mathbf{C}^T_i$;
these CNOTs correspond to right-multiplying column $X$ by $\mathbf{C}^T_<$ and accumulating into $D_3$.
The other two CNOT groups propagate from data sectors to Z-checks.
As each $\mathbf{C}_i$ is a permutation matrix, $\mathbf{C}_i(D_1) \to Z_1$ is equivalent to $D_1 \to \mathbf{C}_i^T(Z_1)$;
we use this equivalence below.
Thus, the CNOT action right-multiplies $D_1$ and $D_2$ by $\mathbf{C}^T_>$ and accumulates the results to $Z_1$ and $Z_2$, respectively.
After stage 1, the table becomes
\begin{equation*}\left[
\begin{array}{c|ccc|ccc} 
\mathbf{I} &0   &0   &\mathbf{C}^T_<   &0   &0   &0 \\
0 &\mathbf{A}^T &\mathbf{B}^T &\mathbf{C}^T &\mathbf{A}^T\mathbf{C}^T_>   &\mathbf{B}^T\mathbf{C}^T_>   &0 \\
0 &u   &v   &w   &u\mathbf{C}^T_>   &v\mathbf{C}^T_>   &0
\end{array}\right].
\end{equation*}

After stage 2, the table becomes
\begin{equation*}\left[
\begin{array}{c|ccc|ccc} 
\mathbf{I} &0   &\mathbf{B}^T_<   &\mathbf{C}^T_<   &0   &\mathbf{C}^T_<\mathbf{B}^T_>=\mathbf{B}^T_>\mathbf{C}^T_<   &0 \\
0 &\mathbf{A}^T &\mathbf{B}^T &\mathbf{C}^T &\mathbf{A}^T\mathbf{C}^T_>   &\mathbf{B}^T\mathbf{C}^T_> + \mathbf{C}^T\mathbf{B}^T_> = \mathbf{B}^T\mathbf{C}^T_> + \mathbf{B}^T_>\mathbf{C}^T   &\mathbf{A}^T\mathbf{B}^T_> \\
0 &u   &v   &w   &u\mathbf{C}^T_>   &v\mathbf{C}^T_> + w\mathbf{B}^T_>   &u\mathbf{B}^T_>
\end{array}\right],
\end{equation*}
where the equalities hold since all polynomial terms in $\mathbf{A}$, $\mathbf{B}$, and $\mathbf{C}$ commute.
We will use this property throughout and consistently write products of terms in dictionary order.

After stage 3, the table becomes
\begin{equation*}\left[
\begin{array}{c|ccc|ccc} 
\mathbf{I} &\mathbf{A}^T   &\mathbf{B}^T_<   &\mathbf{C}^T_<   &\mathbf{A}^T \mathbf{C}^T_<  &\mathbf{B}^T_>\mathbf{C}^T_<   &\mathbf{A}^T\mathbf{B}^T_< \\
0 &\mathbf{A}^T &\mathbf{B}^T &\mathbf{C}^T &\mathbf{A}^T\mathbf{C}^T_> + \mathbf{A}^T\mathbf{C}^T  &\mathbf{B}^T\mathbf{C}^T_> + \mathbf{B}^T_>\mathbf{C}^T   &\mathbf{A}^T\mathbf{B}^T_> + \mathbf{A}^T\mathbf{B}^T \\
0 &u   &v   &w   &u\mathbf{C}^T_>+w\mathbf{A}^T   &v\mathbf{C}^T_> + w\mathbf{B}^T_>   &u\mathbf{B}^T_> + v\mathbf{A}^T
\end{array}\right].
\end{equation*}

After stage 4, the table becomes
\begin{equation*}\left[
\begin{array}{c|ccc|ccc} 
\mathbf{I} &\mathbf{A}^T   &\mathbf{B}^T   &\mathbf{C}^T_<   &\mathbf{A}^T \mathbf{C}^T_<  &\mathbf{B}^T\mathbf{C}^T_<   &\mathbf{A}^T\mathbf{B}^T_< +\mathbf{A}^T\mathbf{B}^T_<=0 \\
0 &\mathbf{A}^T &\mathbf{B}^T &\mathbf{C}^T &\mathbf{A}^T\mathbf{C}^T_> + \mathbf{A}^T\mathbf{C}^T  &\mathbf{B}^T\mathbf{C}^T_> + \mathbf{B}^T\mathbf{C}^T   &\mathbf{A}^T\mathbf{B}^T + \mathbf{A}^T\mathbf{B}^T = 0 \\
0 &u   &v   &w   &u\mathbf{C}^T_>+w\mathbf{A}^T   &v\mathbf{C}^T_> + w\mathbf{B}^T   &u\mathbf{B}^T + v\mathbf{A}^T
\end{array}\right],
\end{equation*}
where the equalities hold since the addition is modulo 2.

Finally, after stage 5, the table becomes
\begin{equation*}\left[
\begin{array}{c|ccc|ccc} 
\mathbf{I} &\mathbf{A}^T   &\mathbf{B}^T   &\mathbf{C}^T   &0 &0  &0 \\
0 &\mathbf{A}^T &\mathbf{B}^T &\mathbf{C}^T &0  &0  &0 \\
0 &u   &v   &w   &u\mathbf{C}^T+w\mathbf{A}^T=0   &v\mathbf{C}^T + w\mathbf{B}^T=0   &u\mathbf{B}^T + v\mathbf{A}^T=0
\end{array}\right],
\end{equation*}
where the equalities hold since any logical Pauli X commutes with Z-checks, i.e., $[u\ v\ w]H_z^T=0$.
\end{proof}

The construction allows certain degrees of freedom. 
First, the assignment of the matrices $\mathbf{A}$, $\mathbf{B}$, and $\mathbf{C}$ to stages 1+5, 2+4, and 3 can be chosen arbitrarily. 
Second, the ordering of layers within each stage is flexible, provided that stages 1 and 5 contain complementary terms, as do stages 2 and 4.
Let $w_1$, $w_2$, and $w_3$ denote the weights of polynomials assigned to stages 1+5, 2+4, and 3, respectively. (In the proof above, $w_1 = w_c$, $w_2 = w_b$, and $w_3 = w_a$.)
Then, the layers within stage 3 can be ordered $w_3!$ ways.
The order of terms in stages 1 and 5 for the X-checks can be arbitrary, resulting in $w_1!$ possibilities. Since stages 1 and 5 must be complementary for the Z-checks, the assignment of terms to stages 1 and 5 is fixed, but permutations within each stage remain possible, contributing an additional $((w_1/2)!)^2$ arrangements.
Similarly, for stages 2 and 4, the same counting applies with weight $w_2$.
Thus, the total number of distinct circuit configurations is
\begin{equation}
w_1! \times \bigl((w_1/2)!\bigr)^2 \times w_2! \times \bigl((w_2/2)!\bigr)^2 \times w_3!.
\end{equation}
For example, when $w_1 = w_2 = 2$ and $w_3 = 4$, there are $96$ circuits, while for $w_1 = w_3 = 2$ and $w_2 = 4$, there are $384$ circuits.
In addition, the assignments of $\mathbf{A}$, $\mathbf{B}$, and $\mathbf{C}$ to the stages 1+5, 2+4, and 3 themselves can be permuted.
For the $4-2-2$ codes, this yields: two assignments with $w_1 = w_2 = 2, w_3 = 4$, two assignments with $w_1 = w_3 = 2, w_2 = 4$, and two assignments with $w_2 = w_3 = 2, w_1 = 4$.
Altogether, this gives $96 \times 2 + 384 \times 4 = $ 1,728 different circuits for the $4-2-2$ codes.
A similar calculation for the $4-4-2$ and $4-4-4$ codes results in 55,296 and 1,327,104 distinct circuits, respectively.

In the counting above, same polynomial terms are scheduled together if possible, to enable more efficient atom rearrangement (more details in \cref{sec::detail_atom}).
For example, in stage 3, $\mathbf{A}_1$ from $X$ to $D_1$, from $D_2$ to $Z_3$, and from $D_3$ to $Z_1$ are always in the same layer, but they do not have to be.
If we consider such possibilities, the number of different choices for stage 3 is no longer $w_3!$ but $(w_3!)^3$. 
For a specific choice of the weights, the number of different circuits becomes
\begin{equation}
w_1! \times \bigl((w_1/2)!\bigr)^4 \times w_2! \times \bigl((w_2/2)!\bigr)^4 \times (w_3!)^3.
\end{equation}
For $4-2-2$ codes, this amounts to 135,168 circuits.

\cref{thm::tricycle_scheduling}  applies to all codes presented in the main text.
For other cases, let $W = w_a + w_b + w_c$ and $V = \left|\{w \in \{w_a, w_b, w_c\} \mid w \text{ is even}\}\right|$. Our scheduling method produces the following result, which is at most two CNOT layers from optimal.

\begin{corollary}\label{thm::tricycle_schedule_allcase}
For tricycle codes with $V=0$, $1$, $2$, and $3$, there exists syndrome extraction circuits with CNOT depth $W$, $W$, $W+1$, and $W+2$, respectively.
\end{corollary}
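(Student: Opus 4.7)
The plan is to extend the scheduling of \cref{thm::tricycle_scheduling} to odd check-weights by (i) reassigning the roles of $\mathbf{A}, \mathbf{B}, \mathbf{C}$ to minimize the penalty, (ii) splitting any odd-weight matrix into nearly-equal halves of sizes $\lfloor w/2 \rfloor$ and $\lceil w/2 \rceil$ rather than two equal halves, and (iii) checking that the stabilizer-table cancellations in the theorem's proof survive the resulting unequal split.

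First I would observe that the three weights enter the five-stage schedule asymmetrically. Stage 3 iterates over all $w_a$ terms atomically without splitting $\mathbf{A}$, so odd $w_a$ incurs no depth penalty. By contrast, stages 1 and 5 jointly consume $\mathbf{C}$ via a split $\mathbf{C}^T = \mathbf{C}^T_< + \mathbf{C}^T_>$, and stages 2 and 4 do the same for $\mathbf{B}$; in each such stage three CNOT groups run in parallel, so the stage's depth is the size of the largest group. When $w_c$ is odd, the split $(\lfloor w_c/2\rfloor, \lceil w_c/2\rceil)$ forces stages 1 and 5 each to take $\lceil w_c/2\rceil$ layers, for a total of $w_c+1$; likewise odd $w_b$ costs one extra layer across stages 2 and 4.

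Next, using that the balanced-product construction is symmetric under permutations of $(a,b,c)$ (the three classical group-algebra factors enter \cref{eq:tricycle_complex} on equal footing, and relabeling simply permutes the $Z$-check sectors), I would relabel so that any odd weight is pushed into position $a$ first, since that is the one slot whose parity is free. If at most one weight is odd it absorbs into position $a$ for no extra cost; if exactly two are odd, one absorbs into position $a$ and the other adds $+1$ from either stages $1$+$5$ or stages $2$+$4$; if all three are odd, the two that cannot occupy position $a$ each add $+1$. This gives depths $W, W, W+1, W+2$ across the four cases, matching the claim.

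The main obstacle is verifying that the five-stage propagation of the stabilizer table in the theorem's proof still cancels correctly under unequal splits. The critical algebraic identities used there are $\mathbf{B}^T_< + \mathbf{B}^T_> = \mathbf{B}^T$, $\mathbf{C}^T_< + \mathbf{C}^T_> = \mathbf{C}^T$, pairwise commutativity of polynomial terms in $\mathbf{A}, \mathbf{B}, \mathbf{C}$, and the parity-$2$ cancellation of any subset sum added to itself; none of these require equal-sized halves, so the same end-of-stage entries should emerge. The only genuinely new bookkeeping is that inside a single stage with an odd weight, the final iteration has only two of its three parallel CNOT groups active (the third, smaller group having been exhausted one iteration earlier); I would walk through the stabilizer-table updates stage by stage to confirm that this sparse final layer is counted correctly and does not upset the $\mathrm{mod}\,2$ cancellations that leave the X-check row as $[\mathbf{I}\ \mathbf{A}^T\ \mathbf{B}^T\ \mathbf{C}^T\ 0\ 0\ 0]$ and force the logical-Pauli row's $Z$-block entries to vanish via $[u\ v\ w]H_Z^T = 0$.
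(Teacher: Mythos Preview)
Your proposal is correct and follows essentially the same approach as the paper: exploit the asymmetry of the five-stage schedule (stage 3 uses all of $\mathbf{A}$ without splitting, so an odd $w_a$ is free), relabel the polynomials so odd weights are pushed into the $\mathbf{A}$ slot first, and absorb any remaining odd weight in $\mathbf{B}$ or $\mathbf{C}$ at the cost of one extra layer across the corresponding pair of stages. Your observation that the stabilizer-table cancellations rely only on $\mathbf{B}^T_<+\mathbf{B}^T_>=\mathbf{B}^T$, $\mathbf{C}^T_<+\mathbf{C}^T_>=\mathbf{C}^T$, commutativity, and mod-$2$ arithmetic---and not on the halves being equal---is exactly the point, and your plan to verify the sparse final layer does not disturb the cancellations is the right check. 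The only cosmetic difference is that the paper writes out the extra layer explicitly as an isolated first step of stage~1 and last step of stage~5 (with shifted indices), whereas you let it fall out naturally from running the larger half one step longer; both realize the same $w_c+1$ total for stages~1 and~5.
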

\begin{proof}
\cref{thm::tricycle_scheduling} covers $V=0$.
For $V=1$, assign the odd-weight polynomial as $\mathbf{A}$, and the previous proof still applies.
For $V=2$, assign the odd-weight polynomials as $\mathbf{A}$ and $\mathbf{C}$.
Modifying stages 1 and 5 of in \cref{eq::tricycle_scheduling} as follows suffices:
\begin{equation}\label{eq::tricycle_scheduling_odd}\begin{split}
\text{(1) } & \text{apply } X\to  \mathbf{C}_1^T(D_3);\\
&  \text{for } 1\le i \le \lfloor w_c/2 \rfloor \text{, apply } [\mathbf{C}_{i+\lfloor w_c/2 \rfloor}(D_1)\to Z_1,\ \mathbf{C}_{i+\lfloor w_c/2 \rfloor}(D_2) \to Z_2,\ X\to  \mathbf{C}_{1+i}^T(D_3)]; \\
\text{(5) } & \text{for } 1\le i \le \lfloor w_c/2 \rfloor \text{, apply } [\mathbf{C}_{i}(D_1)\to Z_1,\ \mathbf{C}_{i}(D_2) \to Z_2,\ X\to  \mathbf{C}_{1+i+\lfloor w_c/2 \rfloor}^T(D_3)]; \\
& \text{apply } [\mathbf{C}_{w_c}(D_1)\to Z_1,\ \mathbf{C}_{w_c}(D_2) \to Z_2].\\
\end{split} \end{equation}
Thus, stages 1 and 5 now take $w_c+1$ layers.
For $V=3$, modify stages 2 and 4 in a similar way.
\end{proof}

\begin{figure}[t]
\centering
\includegraphics{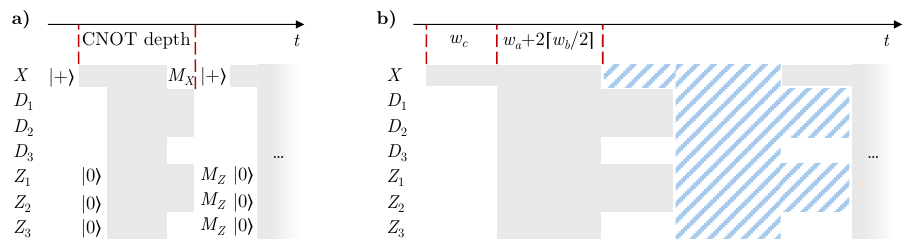}
\caption{\label{fig::tricycle_staggered_schedule}\textbf{Syndrome extraction circuit for multiple cycles:}
(a) When $w_c$ is odd, X-check measurements can overlap with the last CNOT layer, and Z-check initialization can overlap with the first CNOT layer.
(b) Staggered scheduling with separate check qubits for even and odd cycles.
Solid and hatched patterns indicate CNOTs for even and odd cycles, respectively.
Measurement and initialization for one cycle can overlap with the CNOTs of the other cycle.}
\end{figure}

Following Ref.~\cite{bravyi2024high}, counting both initialization and measurement of check qubits as unit depth, our construction gives the following results.

\begin{corollary}\label{thm::tricycle_scheduling_full_depth}
For tricycle codes with $V=0$, $1$, $2$, and $3$, there exists $M$-cycle syndrome extraction circuits with depth $(W+2)M$, $(W+2)M$, $(W+2)M+1$, and $(W+3)M+1$, respectively.
\end{corollary}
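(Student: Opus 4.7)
The plan is to combine the CNOT depths from the preceding corollary with the unit-depth convention of Ref.~\cite{bravyi2024high}, in which initialization and measurement of check qubits each occupy one layer and operations on disjoint physical qubits can be scheduled in the same layer. For each value of $V$ I will specify a schedule achieving the claimed total depth.

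For $V=0$ and $V=1$, where the CNOT depth is $W$, I would use the simplest schedule: run $M$ independent cycles back-to-back, each consisting of one layer initializing all check qubits in parallel, the $W$ CNOT layers of the preceding corollary, and one layer measuring all check qubits in parallel. No inter-cycle overlap is attempted, and the total is $M(W+2) = (W+2)M$ layers, matching the claim.

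For $V=2$ and $V=3$ the naive per-cycle count would give $M(W+3)$ and $M(W+4)$, both of which exceed the target. Here I would use the staggered schedule of Fig.~\ref{fig::tricycle_staggered_schedule}(b), allocating two disjoint sets of ancilla qubits -- one for even and one for odd cycles -- so that the measurement of cycle $k$'s ancillas and the initialization of cycle $k{+}1$'s ancillas, acting on distinct physical qubits, can share a single transition layer. Summing one initial init layer, $M$ blocks of CNOTs of depth $W{+}1$ or $W{+}2$, $M-1$ combined measure-and-init transition layers, and one final measurement layer gives
\begin{equation*}
1 + M(W{+}1) + (M{-}1) + 1 = (W{+}2)M + 1 \qquad\text{and}\qquad 1 + M(W{+}2) + (M{-}1) + 1 = (W{+}3)M + 1
\end{equation*}
for $V=2$ and $V=3$ respectively, matching the stated bounds. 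The in-cycle idle-sector optimization of Fig.~\ref{fig::tricycle_staggered_schedule}(a) -- which absorbs Z-check initialization into the single-CNOT first layer and X-check measurement into the single-CNOT last layer of the odd-weight schedule in \cref{eq::tricycle_scheduling_odd} -- is an alternative realization of the same savings without doubling the ancillas.

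The main correctness obligation is to check that the shared measure-and-init transition layer is legal under the hardware model: neither ancilla set is touched by any CNOT during that layer, so there is no scheduling conflict, and the data qubits are automatically idle in that layer since cycle $k{+}1$'s CNOTs cannot begin until cycle $k$'s CNOTs finish (they share the data register). Once this is verified, the layer-by-layer bookkeeping above establishes all four bounds; no reworking of the intra-cycle schedule from the preceding corollary is needed.
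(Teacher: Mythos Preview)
Your $V=0,1$ argument matches the paper. For $V=2,3$ you take a different route: you lead with a two-ancilla-set schedule and cite Fig.~\ref{fig::tricycle_staggered_schedule}b, whereas the paper's proof uses only Fig.~\ref{fig::tricycle_staggered_schedule}a with a single set of check qubits --- which is precisely your ``alternative realization.'' Two remarks. First, since the very next result (Theorem~\ref{thm::tricycle_scheduling_multi_cycle}) is the one that explicitly grants two sets of check qubits, this corollary is understood to hold with one set; your primary argument therefore invokes a resource the statement does not provide. Second, Fig.~\ref{fig::tricycle_staggered_schedule}b as drawn overlaps one cycle's measurement and initialization with the \emph{CNOT} layers of the other cycle (the mechanism behind Theorem~\ref{thm::tricycle_scheduling_multi_cycle}), not a standalone measure-and-init transition layer as you describe. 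The paper's actual route is just the Fig.~a absorption you sketch --- Z-check initialization merged into the X-only first CNOT layer and X-check measurement into the Z-only last CNOT layer of \cref{eq::tricycle_scheduling_odd} --- together with the obvious single-set cross-cycle overlap of cycle-$k$ Z-measurement with cycle-$(k{+}1)$ X-initialization (disjoint ancilla qubits). That gives per-cycle cost $W{+}2$ (resp.\ $W{+}3$) after the first cycle and hence $(W{+}2)M+1$ (resp.\ $(W{+}3)M+1$) for $V=2$ (resp.\ $V=3$), without doubling ancillas.
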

\begin{proof}
For $V=0$ and $V=1$, initialization precedes and measurement follows the CNOT layers each cycle, adding two units per cycle.
For $V=2$ and $V=3$, as illustrated in \cref{fig::tricycle_staggered_schedule}a, X-check measurement can be parallelized with the final CNOT layer, and Z-check initialization with the first layer in \cref{eq::tricycle_scheduling_odd}, resulting in the stated depths.
\end{proof}

With two sets of check qubits, we use one set for odd cycles and the other set for even cycles.

\begin{theorem}\label{thm::tricycle_scheduling_multi_cycle}
Given two sets of check qubits, for tricycle codes with $V=0$, $1$, $2$, and $3$, there exists $M$-cycle syndrome extraction circuits with CNOT depth $WM+w_c$, $WM+w_c$, $WM+w_c$, and $(W+1)M+w_c$, respectively.
\end{theorem}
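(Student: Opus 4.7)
The plan is to construct a pipelined, staggered syndrome extraction schedule that uses the two disjoint sets of check qubits by alternating between them across consecutive cycles: odd-indexed cycles use set $1$ and even-indexed cycles use set $2$. The per-cycle CNOT scheduling within each set reuses the base schedule from \cref{thm::tricycle_scheduling} when $V \in \{0,1\}$, and the odd-weight variant described in the proof of \cref{thm::tricycle_schedule_allcase} when $V \in \{2,3\}$. The central observation is that, because every $\mathbf{A}_i, \mathbf{B}_i, \mathbf{C}_i$ is a permutation matrix, each CNOT layer of the base schedule already occupies every data qubit, so CNOT layers of two distinct cycles cannot overlap in time; on the other hand, because the two check sets are physically disjoint, the initialization of set $2$'s checks and the measurement of set $1$'s checks act only on qubits which are idle whenever set $1$ is engaged in CNOTs, and vice versa.

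First, I would lay out the pipeline. Place cycle $m$'s $W$ CNOT layers (or $W+1$ layers for $V=3$) back-to-back, so that cycle $m+1$'s first CNOT layer is scheduled immediately after cycle $m$'s last CNOT layer. The interior of the pipeline then carries no gaps between successive cycles' CNOT phases. Next, interleave init and measurement: for $2 \le m \le M$, the initialization of cycle $m$'s check qubits is distributed over the tail of cycle $m-1$'s CNOT phase, and for $1 \le m \le M-1$, the measurement of cycle $m$'s check qubits is distributed over the head of cycle $m+1$'s CNOT phase. These placements are conflict-free because each such init or measurement touches only check qubits in the idle set, never data qubits nor check qubits of the active set.

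Next, I would analyze the two boundaries, which give rise to the additive $w_c$ overhead. At the start, cycle $1$'s check-qubit initializations cannot be hidden behind any preceding CNOT phase; at the end, cycle $M$'s measurements cannot be hidden behind any succeeding CNOT phase. Exploiting the fact that stages $1$ and $5$ of the base schedule involve only the check sectors $\{Z_1, Z_2, X\}$ (and not $Z_3$), one can stagger the boundary init operations into $w_c/2$ layers at the beginning—preparing each check qubit of set $1$ just before it is first needed—and symmetrically stagger the closing measurements into $w_c/2$ layers at the end, for a total boundary overhead of $w_c$ layers. Combining the $WM$ (or $(W+1)M$ for $V=3$) interior CNOTs with this boundary overhead gives the claimed bounds.

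Finally, I would verify correctness by invoking the stabilizer-tableau analysis from the proof of \cref{thm::tricycle_scheduling}. Because the two check sets are disjoint, no cycle's CNOTs ever touch another cycle's check qubits, so the tableau evolution decouples across cycles: each cycle's local ordering of CNOTs agrees with that of the single-cycle base schedule, and the same row-reduction argument as in \cref{eq::tricycle_scheduling_condition} establishes that each cycle measures exactly the intended stabilizers while preserving logical operators. The main obstacle is the detailed boundary argument: one must verify that the staggered startup and wind-down can be packed into exactly $w_c/2$ layers at each end without contention with stages $1$ and $5$, a verification that amounts to a layer-by-layer check of which check qubits are engaged in each CNOT layer and is analogous to the modification used in \cref{eq::tricycle_scheduling_odd}.
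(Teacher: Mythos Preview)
Your construction and the paper's are fundamentally different, and yours has a genuine gap.

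You place the CNOT layers of each cycle back-to-back with no overlap, then argue that the extra $+w_c$ comes from initialization and measurement at the boundaries. But initialization and measurement are not CNOTs, so they do not contribute to CNOT depth; your back-to-back construction for $V\in\{0,1\}$ actually has CNOT depth exactly $WM$. Even counting boundary init/measure toward total depth, those are single-layer operations (one initialization layer before the first CNOT, one measurement layer after the last), contributing $2$, not $w_c$; the claim that $w_c/2$ layers are needed at each end is unmotivated. More seriously, for $V=2$ the single-cycle schedule from \cref{thm::tricycle_schedule_allcase} has $W+1$ CNOT layers (not $W$ as you imply), and for $V=3$ it has $W+2$ (not $W+1$). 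Back-to-back then yields $(W+1)M$ and $(W+2)M$ respectively, which exceed the claimed $WM+w_c$ and $(W+1)M+w_c$ whenever $M>w_c$. So your approach does not establish the theorem for $V\in\{2,3\}$.

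The paper's mechanism is different: rather than hiding init/measurement inside CNOT phases, it overlaps the \emph{CNOTs themselves} of adjacent cycles. Stages $1$ and $5$ are each expanded from $w_c/2$ to $w_c$ layers, iterating over all of $\mathbf{C}_1,\dots,\mathbf{C}_{w_c}$ rather than half; within these layers, the $\mathbf{C}$-CNOTs to $Z_1,Z_2$ act on one cycle's check set while the $\mathbf{C}$-CNOTs from $X$ act on the other's. Stage $5$ of cycle $m$ and stage $1$ of cycle $m{+}1$ are thus the same $w_c$ physical layers. Each cycle contributes $W$ layers beyond its shared stage, plus one unshared $w_c$-layer stage at the boundary, giving $WM+w_c$. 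Because the $\mathbf{C}$-part is no longer split into halves, this works even when $w_c$ is odd, handling $V\in\{0,1,2\}$ uniformly by assigning an even-weight polynomial to $\mathbf{B}$; only for $V=3$ must stages $2$--$4$ receive the odd-weight fix of \cref{eq::tricycle_scheduling_odd}, adding one layer per cycle.
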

\begin{proof}
Adapt stages 1 and 5 in \cref{eq::tricycle_scheduling} as follows:
\begin{equation}\begin{split}
& \text{(1) for } 1\le i \le w_c \text{, apply } [\mathbf{C}_{i}(D_1)\to Z'_1,\ \mathbf{C}_{i}(D_2) \to Z'_2,\ X\to  \mathbf{C}_i^T(D_3)]; \\
& \text{(5) for } 1\le i \le w_c \text{, apply } [\mathbf{C}_{i}(D_1)\to Z_1,\ \mathbf{C}_{i}(D_2) \to Z_2,\ X' \to \mathbf{C}_{i}^T(D_3)]; \\
\end{split} \end{equation}
where the prime indicates check qubits for the other cycle.
This yields a staggered circuit (\cref{fig::tricycle_staggered_schedule}b) similar to the construction for hypergraph product codes in Ref.~\cite{xu2024constant}.
If at least one polynomial has even weight, assign it as $\mathbf{B}$;
if $V=3$, revise stages 2 and 4 similarly to \cref{eq::tricycle_scheduling_odd}.
\end{proof}

The above results naturally adapt to bicycle codes.
For example, when $w_a = w_b = 3$, the circuits for bivariate bicycle codes in Ref.~\cite{bravyi2024high} are reproduced.

\section{Neutral atom array architecture and rearrangement procedure}\label{sec::detail_atom}

To concretely specify the atom rearrangement procedure, we consider the reference architecture shown in \cref{fig::arch_detail}, primarily based on settings from recent experiments~\cite{bluvstein2024logical, bluvstein2025architectural, chiu2025continuousoperationcoherent3000qubit}.
The neutral atom array features multiple zones for different functions: an entangling zone for CNOT operations, a storage zone for densely packed atoms, a measurement zone for imaging, and a reservoir for fresh atoms.
For our purposes, we focus on the entangling and storage zones.

SLM traps are present throughout these zones, and atoms can be rearranged by AOD between SLM grids, subject to AOD order constraints. 
Parts of the storage zone adjacent to the entangling zone serve as workspaces, where sector permutations occur before each CNOT layer. 
We further divide the workspace into regions $\ts_i$, each supporting permutation of a sector, and each containing two SLM trap arrays: $\ts_{i-}$ (dots) and $\ts_{i+}$ (circles), each shaped like a sector (e.g., $n_y$ rows by $n_x n_z$ columns as in \cref{fig::implementation}b). 
These arrays are just labels; physically, all SLM traps are generated together.

For example, a $y$-cycling permutation of $1\le m < n_y$ units for a sector in $\ts_{i-}$ requires two AOD movements: 
shifting rows $1,\ldots,n_y-m$ in $\ts_{i-}$ down to rows $m+1,\ldots,n_y$ in $\ts_{i+}$;
and the wrap-around, shifting rows $n_y-m+1,\ldots,n_y$ up to rows $1,\ldots,m$ in $\ts_{i+}$. 
After cycling, the sector is entirely in $\ts_{i+}$. 
Minimizing the SLM trap spacing in $\ts_i$ enables faster permutation, which limited by the optical resolving distance, $d_\text{min}$. 
The vertical spacing of $\ts_{i+}$ is set to $2d_\text{min}$, allowing qubits to traverse the region along the dashes;
and the horizontal spacing is $3d_\text{min}$ to additionally ensure sufficient separation between $\ts_{i-}$ and $\ts_{i+}$.

\begin{figure}[t]
\centering
\includegraphics{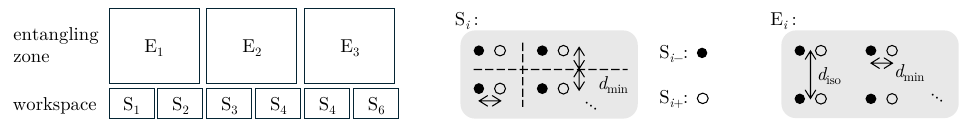}
\caption{\label{fig::arch_detail} \textbf{Reference neutral atom array architecture.}
Rydberg interactions are enabled only within the entangling zone. 
We slice the zones to regions $\te_i$ and $\ts_i$, each one can store a sector. 
Each region contains two trap arrays (dots/$-$ and circles/$+$) to facilitate sector permutation or parallel CNOTs. 
In the workspace, traps are spaced by twice the minimal distance permitted by optical resolution, $d_\text{min}$, allowing qubits to move between traps along dashed paths. 
In the entangling zone, traps are spaced so that qubit pairs involved in parallel CNOTs are separated by a distance to sufficiently isolate the Rydberg interaction, $d_\text{iso}$.}
\end{figure}

Each entangling region $\te_i$ contains two arrays, $\te_{i-}$ and $\te_{i+}$. 
To perform parallel CNOTs, one sector is positioned in $\te_{i-}$, the other in $\te_{i+}$, and a global Rydberg laser pulse is applied. 
We refer to the AOD movement from workspace to the entangling zone as fetching, and the movement back as put-back. 
To ensure that only intended pairs interact, the trap spacing in $\te_i$ is set to at least $d_\text{iso}$, a distance to sufficiently isolate Rydberg interactions.
For example, with $d_\text{min}=$\qty{2}{\micro\meter} and $d_\text{iso}=$\qty{10}{\micro\meter}~\cite{bluvstein2024logical}, each $\te_i$ is about twice as wide as each $\ts_j$.

We now explicitly describe the atom rearrangement procedure outlined in \cref{sec::implementation}, based on the reference architecture.
For initialization, place $D_1$ in $\te_1$, $D_2$ in $\te_2$, $D_3$ in $\te_3$, $Z_3$ in $\ts_2$, $Z_1$ in $\ts_3$, $Z_2$ in $\ts_4$, and $X$ in $\ts_5$.
To implement the first layer, perform:
\begin{itemize}
\item Permute $X$ (relabeling): $\bc_{1}\bi(\ts_5)$;
\item Permute $Z$ (relabeling): $\bc^T_{3} \bi [\ts_2, \ts_3, \ts_4]$;
\item Fetch $X$, $Z_1$ and $Z_2$: $[\ts_5\to\te_3, \ts_3\to\te_1, \ts_4\to\te_2]$;
\item CNOTs: $[\mathbf{C}_{3}(D_1)\to Z_1,\ \mathbf{C}_{3}(D_2) \to Z_2,\ X\to  \mathbf{C}_1^T(D_3)]$;
\item Put back $X$, $Z_1$ and $Z_2$: $[\te_3\to\ts_5, \te_1\to \ts_3, \te_2\to\ts_4]$.
\end{itemize}

Here, `permute $Z$' refers to simultaneous permutation of all three $Z$ sectors.
Permutations are denoted as $\sigma'\sigma^T(\ts_i)$, where $\sigma$ is the current permutation and $\sigma'$ the target, within region $\ts_i$. 
Specifically for this first layer, since the $X$ and $Z$ sectors are still fresh, we can relabel their qubits instead of performing physical permutations.
Fetching and put-back notation indicate initial and final regions for AOD movements, potentially involving sequential moves and grid changes. 
We do not specify which array within a region is used, as either suffices. 
CNOT notation follows previous sections, and parallelized movements are listed in square brackets. 
Notably, fetching and put-back for $X$ and $Z$ sectors can be combined for this layer.

The full implementation of the syndrome extraction circuit in \cref{eq::tricycle_scheduling} proceeds similarly. 
For convenience, the permutation notation is slightly abused: when $i=1$, in expressions like $\bc^T_{i+w_c/2} \bc_{i-1+w_c/2}[\ts_2, \ts_3, \ts_4]$, the second matrix refers to the prior permutation on those sectors (e.g., identity for stage (1)), not literally $\bc_{w_c/2}$. 
The accompanying video demonstrates an instance with $n_x=n_y=n_z=3$ and  $\ba=yz^2+x^2z+x^2y+x^2y^2z$, $\bb=yz^2+xy+x^2y+x^2y^2z$, and $\bc=1+xz^2+x^2y^2+x^2y^2z$.
\begin{enumerate}[label=(\arabic*)]
\item For $1\le i\le w_c/2$, 
permute $X$: $\bc_{i}\bc^T_{i-1}(\ts_5)$; 
permute $Z$: $\bc^T_{i+w_c/2} \bc_{i-1+w_c/2} [\ts_2, \ts_3, \ts_4]$; 
fetch $Z_1$, $Z_2$, and $X$: $[\ts_3\to\te_1, \ts_4\to\te_2, \ts_5\to\te_3]$; 
CNOTs: $[\mathbf{C}_{i+w_c/2}(D_1)\to Z_1,\ \mathbf{C}_{i+w_c/2}(D_2) \to Z_2,\ X\to  \mathbf{C}_i^T(D_3)]$; 
put back $Z_1$, $Z_2$, and $X$: $[\te_1\to \ts_3, \te_2\to\ts_4, \te_3\to\ts_5]$.
\item For $1\le i\le w_b/2$, 
permute $X$: $\bb_{i}\bb^T_{i-1}(\ts_5)$; 
permute $Z$: $\bb^T_{i+w_c/2} \bb_{i-1+w_c/2} [\ts_2, \ts_3, \ts_4]$; 
fetch $X$: $\ts_5\to\te_2$; 
fetch $Z_3$ and $Z_2$: $[\ts_2\to\te_1, \ts_4\to\te_2]$; 
CNOTs: $[\mathbf{B}_{i+w_b/2}(D_1)\to Z_3,\ X\to \mathbf{B}_i^T(D_2),\ \mathbf{B}_{i+w_b/2}(D_3)\to Z_2]$; 
put back $X$: $\te_2\to\ts_5$; 
put back $Z_1$ and $Z_2$: $[\te_1\to\ts_2, \te_2\to\ts_4]$.
\item For $ 1\le i\le w_a$, 
permute $X$: $\ba_{i}\ba^T_{i-1}(\ts_5)$; 
permute $Z$: $\ba^T_{i} \ba_{i-1} [\ts_2, \ts_3, \ts_4]$; 
fetch $X$: $\ts_5\to\te_1$; 
fetch $Z_3$ and $Z_1$: $[\ts_2\to\te_2, \ts_3\to\te_3]$; 
CNOTs: $[X \to \mathbf{A}^T_{i}(D_1),\ \mathbf{A}_i(D_2)\to Z_3,\ ,\mathbf{A}_{i}(D_3)\to Z_1]$; 
put back $X$: $\te_1\to\ts_5$; 
put back $Z_3$ and $Z_1$: $[\te_2\to\ts_2, \te_3\to\ts_3]$.
\item For $1\le i\le w_b/2$, 
permute $X$: $\bb_{i+w_b/2}\bb^T_{i-1+w_b/2}(\ts_5)$; 
permute $Z$: $\bb^T_{i} \bb_{i-1} [\ts_2, \ts_3, \ts_4]$; 
fetching and put-back are the same as (2).
\item For $1\le i\le w_c/2$, 
permute $X$: $\bc_{i+w_c/2}\bc^T_{i-1+w_c/2}(\ts_5)$; 
permute $Z$: $\bc^T_{i} \bc_{i-1} [\ts_2, \ts_3, \ts_4]$; 
fetching and put-back are the same as (1).
\end{enumerate}

For $w_a = w_b = w_c = 4$, the above procedure results in 22 sector permutations (with relabeling used in the first layer) and 40 fetch/put-back operations.
While the scheme is general and structured, inefficiencies may arise from always restoring sectors to their original locations and exclusively permuting check sectors instead of data sectors. 
For example, in (3), since $\ba_i(X) \to D_1$ is equivalent to $X \to \ba_i^T(D_1)$, permuting $D_1$ instead of $X$ may allow both $Z$ sectors and $D_1$ to be permuted in parallel by $\ba^T_i\ba_{i-1}$. 
However, it is the change in permutation, $\sigma'\sigma^T$, and not the polynomial term $\sigma$ itself, that enables such parallelization. 
If two sectors share $\sigma'$ but have different $\sigma$, their permutations still cannot be merged, so a more careful analysis is needed to assess the potential benefits.

There is also flexibility to reorder CNOT layers or combine different CNOT groups into parallel layers, which changes $\sigma'\sigma^T$ between layers. 
This sometimes results in cancellations, reducing the need for certain $x$-, $y$-, or $z$-cycling steps, and can be formulated as a variant of the traveling salesman problem and be solved optimally for small instances.

It is also valuable to investigate `torus-like layout' strategies, where sectors are interspersed not only during entangling gates but also during permutations, provided that there are polynomial terms corresponding to nearest-neighbor interactions~\cite{bravyi2024high, viszlai2024matchinggeneralizedbicyclecodesneutral}. 
Further optimization of sector layout and movement may be possible by mapping the problem to mathematical programming~\cite{tan2024dpqa, tan2022dpqa, tan_depth-optimal_2024, stade25optimal} or graph theory formulations~\cite{Tan2025, lin2024reuseawarecompilationzonedquantum, stade2025routingawareplacementzonedneutral}.

\bibliographystyle{abbrv}
\bibliography{refs}

\clearpage
\newpage

\section*{Supplemental Material}
\phantomsection
\addcontentsline{toc}{section}{Supplementary Material}
\section*{Review of homological formulation of CSS Codes}\label{app:ccs_hom}

\subsection*{Background on cohomology}

A \emph{cochain complex} \( (C^\bullet, d^\bullet) \) is a sequence of vector spaces over a field, or more generally modules over a ring, equipped with linear maps
\[
\cdots \xrightarrow{d^{i-2}} C^{i-1} \xrightarrow{d^{i-1}} C^i \xrightarrow{d^i} C^{i+1} \xrightarrow{d^{i+1}} \cdots
\]
such that
\begin{equation}
d^{i} \circ d^{i-1} = 0 \quad \text{for all } i.
\end{equation}
This condition ensures that the image of one map is always contained in the kernel of the next:
\begin{equation}
\operatorname{im}(d^{i-1}) \subseteq \ker(d^i).
\end{equation}

Elements of \( C^i \) are called \emph{\( i \)-cochains}. Elements of \( \ker(d^i) \subseteq C^i \) are called \emph{\( i \)-cocycles}, and elements of \( \operatorname{im}(d^{i-1}) \subseteq C^i \) are called \emph{\( i \)-coboundaries}. Since every coboundary is automatically a cocycle, we may define the \( i \)-th \emph{cohomology group} (or space) as the space of cocycles modulo coboundaries:
\begin{equation}
H^i(C^\bullet) := \ker(d^i) / \operatorname{im}(d^{i-1}).
\end{equation}
Cohomology detects elements in \( C^i \) that are annihilated by \( d^i \) but are not exact—i.e., not the image of any element under \( d^{i-1} \).

For context, the dual notion is that of a \emph{chain complex}, where the maps go in the opposite direction. The homology of a chain complex is defined analogously, with
\[
H_i(C_\bullet) := \ker(\partial_i) / \operatorname{im}(\partial_{i+1}),
\]

When the \( C^i \) are finite-dimensional vector spaces or free modules (modules that admit a basis), there is a natural identification \( C_i \cong (C^i)^* \), and the differentials satisfy \( d^i = (\partial_{i+1})^T \). In this case, there is a canonical isomorphism between homology and cohomology:
\begin{equation}
H^i(C^\bullet) \cong H_i(C_\bullet).
\end{equation}
This identification is particularly useful when interpreting CSS codes and their logical operators in terms of cohomology.

\subsection*{Classical codes from cochain complexes}

A classical binary linear code \( C \subset \mathbb{F}_2^n \) is a \( k \)-dimensional subspace with minimum distance \( d = \min_{c \in C} |c| \), where \( |\cdot| \) denotes the Hamming weight. The code is specified by a parity check matrix \( H \in \mathbb{F}_2^{n \times r} \) such that \( \ker(H) = C \).

This data can be represented as a 2-term cochain complex:
\begin{equation}
\begin{tikzcd}
C^\bullet: \quad \mathbb{F}_2^r \arrow[r, "H^T"] & \mathbb{F}_2^n
\end{tikzcd}
\label{eq:classical_complex_app}
\end{equation}
where \( \mathbb{F}_2^n \) (in degree 1) represents the bits and \( \mathbb{F}_2^r \) (in degree 0) the checks. The map \( H^T \) is the coboundary operator, and the codewords are given by the degree-1 cohomology: \( H^1(C^\bullet) = \ker(H) = \mathrm{coker}(H^T) \).

A choice of basis for \( \mathbb{F}_2^r \) and \( \mathbb{F}_2^n \) corresponds to choosing generators for the checks and a coordinate system for the message space. Code parameters will be denoted by \( [n, k, d] \).

\subsection*{Quantum CSS codes as 3-term cochain complexes}

A quantum CSS code is defined by a pair of classical binary codes \( C_X, C_Z \subseteq \mathbb{F}_2^n \) satisfying the inclusion \( C_Z^\perp \subseteq C_X \), which is equivalent to \( H_X H_Z^T = 0 \). This ensures commutativity of the corresponding stabilizer operators.

This data defines a 3-term cochain complex:
\begin{equation}
\begin{tikzcd}
C^\bullet: \quad \mathbb{F}_2^{r_Z} \arrow[r, "H_Z^T"] & \mathbb{F}_2^n \arrow[r, "H_X"] & \mathbb{F}_2^{r_X}
\end{tikzcd}
\label{eq:css_complex}
\end{equation}
where \( \mathbb{F}_2^n \) (degree 1) represents the qubits, \( H_Z^T \) defines the \( Z \)-type stabilizers, and \( H_X \) defines the \( X \)-type stabilizers. The degree-1 cohomology \( H^1(C^\bullet) = \ker(H_X)/\operatorname{im}(H_Z^T) \) defines the \( Z \)-logical operators, while the degree-1 homology \( H_1(C_\bullet) = \ker(H_Z)/\operatorname{im}(H_X^T) \) defines the \( X \)-logical operators. The number of encoded qubits is
\[
k = \dim H^1(C^\bullet) = \dim H_1(C_\bullet),
\]
and the distance is \( d = \min(d_X, d_Z) \), where \( d_X \) and \( d_Z \) are the minimum weights of nontrivial logical \( X \) and \( Z \) operators, respectively.

More generally, any \( n \)-term (co)chain complex with \( n \geq 3 \) can define a CSS code by selecting three consecutive terms \( C^{i-1} \rightarrow C^i \rightarrow C^{i+1} \).

\section*{Homological products}\label{app:products}

We describe two constructions for combining classical codes to produce quantum codes: the tensor product and the balanced product. The tensor product generalizes the hypergraph product. The balanced product incorporates group symmetries to reduce size while preserving code properties.

\subsection*{Tensor products of cochain complexes}

Let \( (C^\bullet, \delta^C) \) and \( (D^\bullet, \delta^D) \) be cochain complexes of \( \mathbb{F}_2 \)-vector spaces. Their tensor product is a new cochain complex defined by
\begin{equation}
(C \otimes D)^n = \bigoplus_{i + j = n} C^i \otimes D^j,
\label{eq:tensor_chains}
\end{equation}
with coboundary operator
\begin{equation}
\delta(c \otimes d) = \delta^C(c) \otimes d + (-1)^i c \otimes \delta^D(d), \quad \text{for } c \in C^i, \ d \in D^j.
\label{eq:tensor_boundary}
\end{equation}
When working over \( \mathbb{F}_2 \), the sign \( (-1)^i \) may be omitted.

Tensoring an \( m \)-term and an \( n \)-term cochain complex yields an \( (m+n-1) \)-term complex. This construction applies directly to the cochain complexes associated with classical codes. The hypergraph product of two classical codes is the CSS code associated with the 3-term tensor product of their respective 2-term cochain complexes. In Section~4.3, we show how the Künneth formula recovers the standard dimension and distance bounds for hypergraph product codes.

\subsection*{Balanced products} \label{sec::balanced_app}

The balanced product modifies the tensor product by incorporating a group symmetry. This typically reduces the code size while preserving—or sometimes improving—the relative distance and rate.

\subsubsection*{Balanced Product of Vector Spaces}

Let \( V \) and \( W \) be \( \mathbb{F}_2 \)-vector spaces equipped with compatible linear \( G \)-actions. Then \( V \) and \( W \) are modules over the group algebra \( \mathbb{F}_2[G] \). The \emph{balanced product} is defined as
\begin{equation}
V \otimes_G W := V \otimes_{\mathbb{F}_2[G]} W.
\label{eq:balanced_tensor_def}
\end{equation}

This is the quotient of the standard tensor product by the subspace generated by elements of the form
\begin{equation}
g \cdot v \otimes w - v \otimes g \cdot w, \quad \text{for all } g \in G, \ v \in V, \ w \in W.
\label{eq:balanced_tensor_quotient}
\end{equation}

In terms of bases, let \( X \) and \( Y \) be \( G \)-invariant bases for \( V \) and \( W \). Then a basis for \( V \otimes_G W \) is given by the orbits of the anti-diagonal group action:
\begin{equation}
X \times_G Y := X \times Y / \{ (g^{-1}x, gy) : g \in G \}.
\label{eq:balanced_tensor_basis}
\end{equation}

\subsubsection*{Balanced Product of Cochain Complexes}

Let \( G \) be a finite (Abelian) group, and let \( C^\bullet \), \( D^\bullet \) be cochain complexes of \( \mathbb{F}_2 \)-vector spaces equipped with a linear \( G \)-action on each term. Assume the coboundary maps \( \delta^C \), \( \delta^D \) commute with the group action:
\begin{equation}
g \cdot \delta^C(x) = \delta^C(g \cdot x), \quad g \cdot \delta^D(y) = \delta^D(g \cdot y).
\end{equation}

The balanced product complex is defined by
\begin{equation}
(C \otimes_G D)^n = \bigoplus_{i + j = n} C^i \otimes_G D^j,
\label{eq:balanced_chain}
\end{equation}
with coboundary operator
\begin{equation}
\delta = \delta^C \otimes \mathbb{I} + \mathbb{I} \otimes \delta^D.
\label{eq:balanced_boundary}
\end{equation}

This defines a valid cochain complex. As with the tensor product, quantum CSS codes can be constructed by selecting three consecutive terms of the balanced product complex.

We will particularly be interested in the case where $C$ and $D$ are complexes where the cochain spaces are direct sums of $\FG$ itself, as this is the scenario defining group-algebra quantum codes. In this case, the bases for the cochain spaces are given by copies of $G$. For example, we may consider then consider the balanced product of two $2-$term cochain complexes

\[
\begin{tikzcd}
    \FG \arrow[r,"a"] & \FG \quad \quad \quad \FG \arrow[r,"b"] & \FG
\end{tikzcd}
\]
where the coboundary maps are respectively defined by algebra multiplication by $a\in \FG$ and $b \in \FG$. Using the notation $R\equiv \FG$, the balanced product complex is given by 

\begin{equation}
\begin{tikzcd}[column sep = large]
    R\otimes_R R \arrow[r,"\spmat{a \otimes \mathrm{id}  \\ \mathrm{id}\otimes b}"] & R\otimes_R R \oplus R\otimes_R R \arrow[r,"\spmat{\mathrm{id}\otimes b & a \otimes \mathrm{id}}"] & R\otimes_R R
\end{tikzcd}    
\end{equation}

For module tensor products of $R$ with itself, we have a useful isomorphism $R\otimes_R R \cong R$ given by $r\otimes_Rs \cong rs$ (see Appendix A of \cite{eberhardt2024logical} for example). This yields the bicycle code chain complex in \cref{eq:bicycle_complex}. Iterating the balanced product with a third classical complex defined by $c\in \FG$ yields the complex in \cref{eq:tricycle_complex}.

\subsection*{K\"unneth Theorem}

The K\"unneth theorem provides a formula for computing the cohomology of a tensor product of cochain complexes from the cohomologies of the constituent complexes.

Let \( C^\bullet \) and \( D^\bullet \) be cochain complexes of vector spaces over a field, such as \( \mathbb{F}_2 \). Then the K\"unneth theorem states:
\begin{equation}
H^n(C^\bullet \otimes D^\bullet) \cong \bigoplus_{i + j = n} H^i(C^\bullet) \otimes H^j(D^\bullet).
\label{eq:kunneth_vector_space}
\end{equation}
This expression holds exactly when working over a field and allows one to compute the cohomology of the product complex from the cohomologies of the factors.

More generally, suppose \( C^\bullet \) and \( D^\bullet \) are cochain complexes of modules over a finite-dimensional algebra \( A \) over a field, such as \( \mathbb{F}_2[G] \) for a finite group \( G \). Then the K\"unneth formula similarly holds \cite{loormobius}:
\begin{equation}
H^n(C^\bullet \otimes_A D^\bullet) \cong \bigoplus_{i + j = n} H^i(C^\bullet) \otimes_A H^j(D^\bullet),
\label{eq:kunneth_module_case}
\end{equation}
where \( \otimes_A \) denotes the tensor product over the algebra \( A \) -- i.e the balanced product.

\end{document}